\def\fnum@figure{\textcolor{subsectioncolor}{\sf Fig.~\thefigure}}
\def\fnum@table{\textcolor{subsectioncolor}{\sf TABLE~\thetable}}
\def\BibTeX{{\rm B\kern-.05em{\sc i\kern-.025em b}\kern-.08em
    T\kern-.1667em\lower.7ex\hbox{E}\kern-.125emX}}
\newcommand {\cA}{{\mathcal{A}}}
\newcommand {\cB}{{\mathcal{B}}}
\newcommand {\bX} {\mathbf{X}}
\newcommand {\bY} {\mathbf{Y}}
\newcommand {\bW} {\mathbb{W}}
\newcommand {\bx} {{\bf x}}
\newcommand {\by} {{\bf y}}
\newcommand {\onu}{\overline{\nu}}
\newcommand {\bpi} {\boldsymbol{\pi}}
\newcommand{\Eqdef}{:=}
\newcommand {\N} {\mathbb{N}}
\newcommand {\R} {{\rm I\kern-2.5pt R}}
\newcommand {\C} {{\rm I\kern-5pt C}}
\newtheorem{lemma}{Lemma}
\newtheorem{coro}{Corollary}
\newtheorem{remark}{Remark}
\newtheorem{defn}{Definition}
\newcommand{\beqa}{\begin{eqnarray}}
\newcommand{\eeqa}{\end{eqnarray}}
\newcommand{\beqan}{\begin{eqnarray*}}
\newcommand{\eeqan}{\end{eqnarray*}}
\newcommand{\beq}{\begin{equation}}
\newcommand{\eeq}{\end{equation}}
\newtheorem{thm}{Theorem}[section]
\newcommand{\bfl}{\begin{flushleft}}
\newcommand{\efl}{\end{flushleft}}
\newcommand{\myb}{\hspace{-0.1in}}
\newcommand{\myeq}{& \hspace{-0.1in} = & \hspace{-0.1in}}
\newcommand{\lb}{\nonumber \\}
\newcommand{\myarr}{\begin{array}{lll}}
\newcommand{\myleq}{& \myb \leq & \myb}
\newcommand{\bitem}{\begin{itemize}}
\newcommand{\eitem}{\end{itemize}}
\newcommand{\benum}{\begin{enumerate}}
\newcommand{\eenum}{\end{enumerate}}
\newcommand{\norm}[1]{\left| \left| #1 \right| \right|}
\newcommand{\bPP}[2]{{\mathbbm P}_{#1}\left[ #2 \right]}
\newcommand{\myhb}{\hspace{-0.3in}}
\newcommand{\myskip}{\\ \vspace{-0.1in}}
\newcommand{\mydef}{\overset{def}{=}}
\newcommand{\oby}{\overline{{\bf y}}}
\newcommand{\os}{\overline{s}}
\newcommand{\ow}{\overline{w}}
\newcommand{\opi}{\overline{\pi}}
\newcommand{\obP}{\overline{\bf P}}
\newcommand{\obY}{\overline{\bf Y}}
\DeclareMathOperator{\EX}{\mathbb{E}}
\title{\LARGE \bf
Stabilizing a Queue Subject to Activity-Dependent Server Performance
}
\author{Michael Lin \and Richard J. La \and Nuno C. Martins \thanks{The authors are with the 
Department of Electrical \& Computer 
Engineering and 
the Institute for Systems Research, the
University of Maryland, College Park, MD 20742. 
Email: \{mlin1025, hyongla, nmartins\}@umd.edu.}
\thanks{This work is supported in part by 
AFOSR Grant FA95501510367 and NSF Grant ECCS 1446785.}
}
\begin{document}

\maketitle
\thispagestyle{empty}
\pagestyle{plain}

\begin{abstract}
We consider a discrete-time system comprising a first-come-first-served queue, a non-preemptive server, 
and a scheduler that governs the assignment of tasks in the queue to the server. 
The server has an {\em availability} state that indicates, at each instant, whether the server is busy working 
on a task or is available. In the latter case, if the queue is nonempty, then a task-assignment control policy implemented by the scheduler
either assigns a new task to the server or allows it to rest. The server also has an integer-valued {\em 
activity state} that is non-increasing 
during rest periods, and is non-decreasing otherwise. 
An instantaneous service rate function ascribes to each 
possible value of the activity state a probability
that the server can complete a task in one time step. 
For a typical instantaneous service rate function, the completion probability decreases (server performance worsens) as the activity state increases. 
The scheduler policy 
has access to the queue size and the entire state of 
the server. 

In this article, we study the problem of designing scheduler policies that stabilize the queue. We show that 
stability, whenever viable, can be achieved by a 
simple policy that bases its decisions on 
the availability state, a threshold applied to the activity state, and a flag that indicates when the 
queue is empty. The supremum of the service 
rates achievable by stabilizing policies can be determined by a finite 
search. Our results remain valid even when the instantaneous service rate function is not monotonic. 
\end{abstract}

%

\section{Introduction}

Recent advances in information and communication 
networks, and sensor technologies, fostered a new wide
range of applications. These 
include networked systems in which 
sensors are equipped with communication modules
 powered by renewable energy, such as 
solar and geothermal energy. Invariably, operation of these
systems requires the management of queued tasks. Consequently, a 
queueing paradigm is needed that goes beyond
the classical case in which the 
performance of the server is time-invariant. 
Notably, in these new applications,
server performance often depends on its activity, understood as a proxy for the cumulative effect on resources of the history of task assignments. Typically, the performance of the server degrades when resources get depleted as a result of higher activity, and may recover when its activity is lowered to allow for the renewal of resources.

This article starts by proposing a queuing system in which 
the performance of the server depends on its activity. 
Subsequently, we study
two (soon to be stated) problems associated with the design of task-assignment
control policies for the scheduler. These policies will govern the assignment of tasks from the queue to the server with the goal of keeping the queue stable. A major challenge in the design of these policies is that they will have to manage the trade-off between assigning tasks immediately, which may decrease server performance due to the decrease of resource levels associated with increased activity, or wait until resource levels are renewed at the expense of allowing the queue length to increase as new tasks arrive at a fixed rate. 

\noindent {\bf Convention:} Throughout this article, we will refer to
the task-assignment control policy at the 
scheduler as \underline{scheduler policy} 
or, simply, as \underline{policy} for short when appropriate. 

To be more specific, we consider a queueing system 
comprising the following three components:
\\ \vspace{-0.12in}

\noindent $\bullet$ 
A first-in first-out unbounded {\bf queue} registers a new 
task when it arrives and removes it as soon as work on it is 
completed. It has an internal state, its queue 
size, which indicates the number of uncompleted tasks in 
the queue.

\noindent $\bullet$ 
The {\bf server} performs the work required by each 
task assigned to it. It has an internal state with two 
components. The first is the {\em availability} 
state, which indicates whether the server is currently working on a task or is available to 
start a new one. We assume that the server is non-preemptive, 
which in our context means 
that the server gets busy when it starts work on a new task, 
and it becomes available again only after the task is 
completed. The second component of the state is finite-valued and is termed 
{\em activity}. It accounts for the cumulative effect of the history of task assignments on resources that influence the performance of the server. Such an activity 
state could, for instance, account for the battery charge 
level of an energy harvesting module that powers the server~(see~Section~\ref{subsec:Example}) 
or the status of arousal or fatigue of a human operator that assists 
the server or supervises the work. In our framework, the activity state plays a central role in that it determines the server performance understood as the probability 
that, within a 
given time-period, the server can complete a task. Namely, a 
decrease in performance causes an increase in the expected 
time needed to service a task. 

\noindent $\bullet$
The {\bf scheduler} has access to the queue size
and the entire state of the server. When the server is 
available and the queue is not empty, the scheduler policy 
decides whether to assign a new task or to 
allow for a rest period. Our formulation admits 
non-work-conserving policies that may choose to assign rest periods even 
when the queue is not empty. This would allow the server to 
rest as a way to steer the activity 
state towards a range that can deliver better 
long-term performance.
\\ \vspace{-0.1in}

We adopt a stochastic discrete-time framework in which time 
is uniformly partitioned into epochs, within which new tasks 
arrive according to a Bernoulli process. The probability of 
arrival per epoch is termed arrival rate\footnote{Notice 
that, unlike the nomenclature we adopt here, {\em arrival 
rate} is commonly used in the context of Poisson arrival 
processes. This distinction is explained in detail in 
Section~\ref{sec:ArrivalProcess}.}. We constrain our 
analysis to {\em stationary scheduler policies} 
that are invariant under epoch shifts. We discuss 
our assumptions and provide a detailed description of our 
framework in Section~\ref{sec:ProbFormModel}. 

\subsection{Main Problems}
The following are the main challenges studied in this 
article:

\noindent \rule{\columnwidth}{1pt}
\noindent {\bf Main Problems:}
\begin{itemize} 
\item[P-i)] An arrival rate is qualified as stabilizable 
when there is a stationary scheduler policy that stabilizes the queue. Given 
a server, we seek to compute the supremum of the set of 
stabilizable arrival rates.

\item[P-ii)] We seek to propose scheduler policies that 
have a simple structure and are guaranteed to stabilize 
the queue for any stabilizable arrival rate.
\\ \vspace{-0.25in}
\end{itemize} 
\rule{\columnwidth}{1pt}
\\ \vspace{-0.1in}

Notice that, as alluded to above in the scheduler 
description, we allow non-work-conserving scheduler 
policies. This means that, in addressing Problem P-i), 
we must allow policies that are a function of 
not only the queue size, but also the activity 
and availability states of the server. The fact that 
this functional dependence complicates the design of the policies
justifies the importance of addressing Problem P-ii). 

\vspace{-.1 in}
\subsection{Motivating Example}
\label{subsec:Example}
The need to analyze a class of remote wireless monitoring systems~\textendash~each comprising a sensor that measures and broadcasts to a base station variations on a bridge joint's strain or temperature~\textendash~was a motivation for the work reported in this article. We proceed to describe in detail the components of one such system in the context of our framework: \vspace{.03 in} \\
\noindent {\bf Arrival process:} A new packet recording a strain or temperature measurement is created and logged into a queue when the absolute value of the difference between the current measurement and the last recorded measurement exceeds a predefined threshold. Hence, decreasing (increasing) the threshold increases (decreases) the rate at which new packets are created and entered into the queue. \vspace{.03 in} \\ \noindent {\bf Tasks:} The transmission from the sensor to the base station (or gateway) of the packet at the head of the queue is a task. \vspace{.03 in} \\ \noindent {\bf Server:} According to our framework, the server for this example is the communication link consisting of a wireless transmitter at the sensor and a receiver at the base station (or gateway). When requested by the scheduler, the server will repeatedly attempt to complete the task (transmit the measurement or packet at the head of the queue) until it succeeds\footnote{We assume that, every time it successfully receives a packet, the base station can use enough power to guarantee that an acknowledgement is provided to the sensor.}. The transmitter at a sensor operates on batteries that are recharged by a solar panel. Discharge curve characteristics~\cite{Shepherd1965Design-of-prima,Chen2006Accurate-electr}, as well as power management firmware on the sensor, reduces (resp. increases) the transmit power when the state of charge of the battery decreases (resp. increases). This is consequential because the probability of a fading event causing a failed transmission diminishes (resp. grows) when the transmit power increases (resp. decreases). The difference between the capacity of the battery and its current state of charge is the activity state.  Hence, the probability that a transmission attempt will succeed (complete the task) is a function of the activity state, as envisaged in our framework. \vspace{.03 in} \\ \noindent {\bf Scheduler:} A microcontroller at the sensor implements a policy that decides based on the activity state, the queue size and availability of the server whether to initiate the transmission of the packet at the head of the queue (new task).

\subsection{Preview of Results}

The following are our main results (presented in Section~\ref{sec:Mainresults}) and their correspondence 
with Problems~P-i) and~P-ii). 

\ \noindent R-i) In Theorem~\ref{thm:necessity}, we show that the 
supremum mentioned in P-i) can be computed by maximizing a function 
whose domain is the {\em finite} set of activity 
states. The fact that such a quantity can be determined by a 
finite search is not evident because the queue size 
is unbounded. 

\ \noindent R-ii) As we state in Theorem~\ref{thm:sufficiency}, 
given a server, there is a threshold policy that stabilizes 
the queue for any stabilizable arrival rate. The threshold 
policy assigns a new task only when the server is available, 
the queue is not empty, and the activity state is 
less than a threshold chosen to be the value (found by a
{\em finite} search) at which the maximum referred to in 
R-i) is attained. This is our answer to Problem P-ii).

From this discussion we conclude that, to the extent that 
the stability of the system is concerned, 
we can focus solely on the threshold policies outlined 
in R-ii). It is also important to observe that, as we discuss 
in Remark~\ref{rem:results} of Section
\ref{sec:Mainresults}, Theorem~\ref{thm:sufficiency} 
is valid even when the performance of the server is not 
monotonic with respect to the activity state. 

\vspace{-.1 in}
\subsection{Related Literature}
	\label{subsec:Related}


The stability of queueing systems has been 
studied extensively 
\cite{BaccelliBremaud, El-Taha99, MeynTweedie, Dai20}, 
and there exists a large volume of literature.
Earlier studies of queueing 
systems with time-varying 
parameters date back to the studies by 
Conway and Maxwell \cite{ConwayMaxwell62}, 
Jackson~\cite{Jackson63}, Yadin an Naor
\cite{Yadin63}, Gupta~\cite{Gupta67} 
and Harris~\cite{Harris67}.
A summary of earlier studies on queues with 
state-dependent parameters can be found in  
\cite{Dshalalow} and references therein. 
The manuscript \cite{El-Taha99} discusses
{\em rate stability}, which is 
different from the usual definition of
stability in stochastic literature, and 
proves that if the conditional output rate of 
the server when the queue size is sufficiently
large exceeds the input rate, the
system is rate stable.

In addition, with the rapid growth of
wireless networks, there has been 
much interest in understanding and
designing efficient scheduler 
policies with time-varying  
channel conditions that 
affect the probability of 
successful transmissions~\cite{Agrawal02, 
Andrews04, Borst05}. 
Many studies focus on designing throughput 
optimal scheduler policies that can stabilize
the system for any arrival rate
(vector) that lies in the stability 
region (e.g., \cite{Andrews04, Ren2004Optimal-transmi, 
Tassiulas1992Stability-prope}). 
However, there is a major difference 
between these studies and our study. 
In wireless networks, channel conditions 
and probability 
of successful transmission/decoding vary
independently of 
the scheduling decisions chosen by 
the scheduler. 
In our study, on the other hand, the 
probability of successfully completing
a task within an epoch depends on the 
past history of scheduling 
decisions. Consequently, the current 
scheduling decision 
affects the future efficiency of the
server.  

In \cite{Ramjee96, 
Fang02}, the authors study the problem of
designing admission control policies to 
ensure quality-of-service in the form of
call block probabilities. In particular, 
\cite{Bekker2006Optimal-Admissi} considers
queueing systems with workload-dependent
service rates and studies the problem of 
devising an admission control policy that 
maximizes the long-run throughput. It 
shows that, under the assumption
that the expected return time to given 
workload is unimodal, there exists an
optimal admission control 
policy with a threshold on workload. 
Despite some similarity between this study 
and ours, there are major
differences: first, the studied problems are
different; unlike in 
\cite{Bekker2006Optimal-Admissi}, in our
problem we do not control the workload 
arriving at the queue, and the service
rate depends on the activity state of the
server that summarizes its recent utilization level.
We are interested in designing a non-work-conserving 
scheduling policy that permits a non-preemptive 
server to rest after completing a task, 
instead of forcing it to operate 
inefficiently. Second, perhaps
more noteworthy, we demonstrate that
there is an optimal threshold policy
(on the activity state) without assuming
that the service rate is unimodal in 
the activity state.

Another area that is closely related to
our study is task scheduling for human
operators/servers. 
The performance and management of human
operators and servers (e.g., bank tellers, 
toll collectors, doctors, nurses, emergency
dispatchers) has been the subject
of many studies in the past, e.g., 
\cite{Borghini14, Edie54, Shunko17, 
YerkesDodson}. Recently, with 
rapid advances in information and sensor
technologies, human supervisory control, 
which requires processing a large amount of
information in a short period, potentially
causing information overload, became an 
active research area~\cite{Supervisory, 
Sheridan1997Handbook-of-hum}.

As human supervisors play a critical role in 
the systems (e.g., supervisory control
and data acquisition), 
there is a resurging interest in understanding
and modeling the performance of humans
under varying settings.
Although this is still an active research area, 
it is well documented that the performance
of humans depends on many factors, including
arousal and perceived workload
\cite{Edie54, Asaro07, KcTerwiesch09, Shunko17}. 
For example, the well-known Yerkes-Dodson law
suggests that moderate levels of arousal are
beneficial, leading to the inverted-U model
\cite{YerkesDodson}.

In closely related
studies, Savla and Frazzoli
\cite{Savla2010Maximally-stabi, Savla2012A-Dynamical-Que}
investigated the problem of designing
a task release control policy. They 
assumed periodic task arrivals and modeled
the dynamics of server utilization, which 
determines the service time
of the server, using a 
differential equation; the server
utilization increases when the server
is busy and decreases when it is idle. 
They showed that, when all tasks bring 
identical workload, a policy that allows
a new task to be released to the 
server only when 
its utilization is below a suitably
chosen threshold, is maximally stabilizing
\cite[Theorems III.1 and 
III.2]{Savla2012A-Dynamical-Que}. 
Moreover, they proved that when task workloads
are modeled using independent and 
identically distributed 
random variables, the maximum achievable
throughput increases
compared to the homogeneous workload
cases. 
%

We note that existing results in the 
literature (e.g., \cite{Baccelli94}) may 
be used to prove Theorem~\ref{thm:sufficiency}. 
However, we elect to provide a more direct 
proof in the paper for two reasons: first, 
a self-contained proof eliminates the need 
to introduce additional setup and notation to 
make use of existing results
and, in our opinion, improves readability. Second, 
we believe that our approach and reported 
auxiliary results in the proof can be used more 
readily in extending
the results of this paper. Finally, 
to the best of our knowledge, the more notable
result in Theorem~\ref{thm:necessity}
does not follow 
from any existing results in the literature.

\noindent {\bf Paper Structure:} A stochastic discrete-time model is described in Section~\ref{sec:ProbFormModel}. In it we also introduce notation, key concepts and propose a Markov Decision Process (MDP) framework that is amenable to performance analysis and optimization. Our main results are discussed in Section~\ref{sec:Mainresults}, and Section~\ref{Sec:auxiliaryMDP} describes an auxiliary MDP that is central to our analysis. The proofs of our results are presented in Section~\ref{SectionMainProofs}, and Section~\ref{sec:Conclusion} provides concluding remarks.

\section{Stochastic Discrete-Time Framework}
	\label{sec:ProbFormModel}
	
In the following subsection, we describe a discrete-time 
framework that follows from assumptions on when the states 
of the queue and the server are updated and how actions are 
decided. In doing so, we will also introduce the notation 
used to represent these discrete-time processes. A 
probabilistic description that leads to a tractable MDP 
formulation is deferred to Section~\ref{sec:ProbModel}.

\vspace{-.1 in}
\subsection{State Updates and Scheduler Operation: Timing 
	and Notation}
\label{sec:timingnotation}

We consider an infinite horizon problem in which the {\em 
physical} (continuous) time set is $\mathbb{R}_+ 
\Eqdef [0, \infty)$, which we 
partition uniformly 
into half-open intervals of positive duration $\Delta$ as 
$ \mathbb{R}_+ = \cup_{k=0}^{\infty} \  [ \ k\Delta, \ (k+1) \Delta \ ) $. Each interval is called an {\em epoch}, and epoch $k$ refers 
to $[k\Delta,(k+1)\Delta )$. Our formulation and results are 
valid regardless of the epoch duration $\Delta$. 
We reserve $t$ to 
denote continuous time, and $k$ is the discrete-time
index we use to represent epochs.

\begin{figure}
\centerline{
	\includegraphics[width=2.8in]{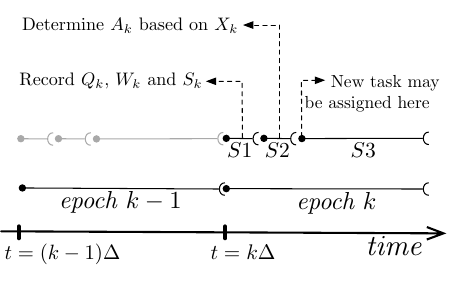}
}
	\caption{Illustration of time uniformly divided into 
	epochs and when updates and actions are taken. 
	(Assuming $k \geq 1$)}
	\label{fig:Figtimings}
	\vspace{-.2 in}
\end{figure}

Each epoch is subdivided into three half-open subintervals 
denoted by stages $S1$, $S2$ and $S3$ (see Fig.
\ref{fig:Figtimings}). As we explain below, stages $S1$ 
and $S2$ 
are allocated for basic operations of record keeping, 
updates and scheduling decisions. Although, in practice, the 
duration of these stages is a negligible fraction of $\Delta
$, we discuss them here in detail to clarify the causality 
relationships among states and actions. We also introduce 
notation used to describe certain key discrete-time 
processes that are indexed with respect to epoch number.

In addition, 
even though more general arrival processes
can be handled, for simplicity of exposition, 
we assume
that new task arrivals occur in accordance 
with a Bernoulli process, and the Bernoulli 
random variables are assumed independent. Thus, 
at most one new task arrives at queue during
each epoch. We denote the number of
tasks that arrive during epoch $k$ by $B_k$, which
takes values in $\{0, 1\}$. 

Furthermore, during each epoch, the scheduler
policy assigns at most one task to the server. 
Hence, the server either works on a single task 
or remains idle at any given time. We denote the
number of tasks that the server completes
during epoch $k$ by $D_k$, 
which takes values in $\{0, 1\}$.
\subsubsection{\bf \underline{Stage $S1$}} 

The following updates take place during stage $S1$ of 
epoch $k+1$:


The {\em queue size} at time $t=k \Delta$ is denoted by 
$Q_k$, and it is updated
according to a Lindley's equation~\cite{lindley1952theory}:
\beqa
Q_{k+1}
= \max \big\{ 0, Q_k + B_k - D_k \big\}, \ 
	k \in \N \Eqdef \{0, 1, \ldots\}
	\label{eq:Lindley}
\eeqa

The {\em availability state of the server} at time 
$t=k\Delta$ is 
denoted by $W_{k}$ and takes values in 
$\mathbb{W} \Eqdef \{ \mathcal{A}, \mathcal{B} \}$.
We use $W_{k}=\mathcal{B}$ to indicate that the server is 
busy working on a task at time $t=k \Delta$. If it is 
available at time $t=k \Delta$, then $W_k=\mathcal{A}$. 
The update mechanism for $W_k$ is as follows:

\begin{itemize}

\item If $W_{k} = \mathcal{A}$, then $W_{k+1} = \mathcal{A}$ 
when either no new task was assigned during epoch $k$, or a 
new task was assigned and completed during epoch $k$. If 
$W_{k} = \mathcal{A}$ and a new task is assigned during 
epoch $k$ which is not completed until $t=(k+1)\Delta$, then 
$W_{k+1} = \mathcal{B}$.

\item If $W_{k} = \mathcal{B}$ and the server completes the 
task by time $t=(k+1)\Delta$, then $W_{k+1} = \mathcal{A}
$. Otherwise, $W_{k+1} = \mathcal{B}$.

\end{itemize}

We use $S_k$ to denote the {\em activity 
state} at time $t=k \Delta$, and assume that it takes 
values in
$\mathbb{S} \Eqdef \{1,\ldots,n_s\}$.
The activity state is non-decreasing while the 
server is working and is non-increasing when it is 
idle. In Section~\ref{sec:ProbModel}, we describe an 
MDP that specifies probabilistically how $S_k$ transitions 
to $S_{k+1}$, conditioned on whether the server worked 
or rested during epoch $k$.

Without loss of generality, we assume that $Q_k$, $W_k$ and 
$S_k$ are initialized as follows:
$ Q_0=0$, $W_0 = \mathcal{A}$, and $S_0=1 $.
The overall state of the server is represented compactly by 
$\bY_k$, which takes 
values in $\mathbb{Y}$, defined as follows:
\begin{equation*} 
\bY_k \Eqdef (S_k,W_k), \qquad
\mathbb{Y} \Eqdef \mathbb{S} \times \mathbb{W}
\end{equation*}
In a like manner, we define the overall state for the MDP taking values
in $\mathbb{X}$ as follows:
\begin{equation*} 
\bX_k \Eqdef (\bY_k,Q_k), \qquad \mathbb{X} \Eqdef \mathbb{S}\times \Big( (\mathbb{W}\times\mathbb{N})  \diagdown  
(\mathcal{B},0) \Big)
\end{equation*}
From the definition of $\mathbb{X}$, it follows that 
when the queue is empty, there is no task for the server 
to work on and, hence, it cannot be busy.

\subsubsection{\underline{{\bf Stage} $S2$}} 

It is during stage $S2$ of epoch $k$ that the scheduler 
issues a decision based on $\bX_{k}$, in accordance
with the employed scheduler policy: 
let $\mathbb{A} \Eqdef \{ \mathcal{R}, 
\mathcal{W} \}$ represent the set of possible actions 
that the scheduler can request from the server, where $
\mathcal{R}$ and 
$\mathcal{W}$ represent `rest' and `work', respectively. 
The assumption that the server is non-preemptive and the fact that 
no new tasks can be assigned when the queue is empty, lead to the 
following set of admissible actions for each possible state 
${\bx=(s,w,q)}$ in $\mathbb{X}$: 
\begin{equation}
\label{ActionConstraints}
\mathbb{A}_\bx
= \begin{cases}
	\{\mathcal{R}\} & \text{if  $q = 0, \  $ {\small (impose 
		`rest' when queue is empty)}} \\
    \{\mathcal{W}\} & \text{if  $q > 0$ and  $w = \cB$, \ {\small 
    	(non-preemptive server)}} \\
    \mathbb{A} & \text{otherwise}. \\
	\end{cases}
\end{equation} 

We denote the action chosen by the adopted 
scheduler policy at epoch $k$ by 
$A_k$, which takes values in $\mathbb{A}_{\mathbf{X}_k}$.

Notice that, in light of the restrictions in~(\ref{ActionConstraints}), only when the queue is nonempty and the server is available at time $k$ will the scheduler policy have the authority to decide whether $A_k = \mathcal{W}$ (assign a new task for the server to work on) or $A_k = \mathcal{R}$ (let the server rest).

As we discuss in Section~\ref{subsec:StateEvolution}, we 
focus on the design of stationary policies that determine 
$A_k$ as a function of $\bX_k$. 

\subsubsection{\underline{{\bf Stage} $S3$}} A task can arrive at any time during each epoch, but we assume that work on a new task can be assigned to the server only at the beginning of stage $S3$. More specifically, the scheduler acts as follows:

\begin{itemize}
\item If $W_k = \mathcal{A}$ and $A_k = \mathcal{W}$, then 
the server starts working on a new task at the head of the
queue when stage $S3$ of epoch $k$ begins.

\item When $W_k = \mathcal{A}$, the scheduler can also 
select $A_k = \mathcal{R}$ to signal that no work will be 
performed by the server during the remainder of epoch. Once 
this `rest' decision is made, a new task can be assigned no 
earlier than the beginning of stage $S3$ of epoch $k+1$. 
Since the scheduler is non-work-conserving, it may decide to 
assign such `rest' periods as a way to possibly reduce $S_{k
+1}$ and to improve future performance. 

\item If $W_k = \mathcal{B}$, the server was still 
working on a task at time $t=k\Delta$. In this case, because 
the server is non-preemptive, the scheduler picks $A_k = 
\mathcal{W}$ to indicate that work on the current task is 
ongoing and must continue until it is completed and no new 
task can be assigned during epoch $k$.

\end{itemize}

\vspace{-.1 in}
\subsection{State Updates and Scheduler Operation: Probabilistic Model}
\label{sec:ProbModel}

Based on the formulation outlined in Section~
\ref{sec:timingnotation}, we proceed to
describe a discrete-time MDP that models
how the states of the server and queue evolve over time for 
any given scheduler policy.

\subsubsection{\underline{Arrival Process}}
\label{sec:ArrivalProcess} 

We assume that tasks arrive during each epoch according 
to a Bernoulli process $\{B_k : k \in \N\}$. 
The probability 
of arrival for each epoch (~${P(B_k=1)}$~) is called the 
{\em arrival rate} 
and is denoted by $\lambda$, which is 
assumed to belong to~$(0, 1)$. 
Although we assume Bernoulli arrivals 
to simplify our analysis and discussion, 
more general arrival distributions 
(e.g., Poisson distributions) can be
handled only with minor changes as it
will be clear. 

Notice that, as we discuss in Remark 1 below, 
our nomenclature for $\lambda$ should not be confused 
with the standard definition of arrival rate for Poisson 
arrivals. Since our results are valid irrespective of 
$\Delta$, including when it is arbitrarily small, 
the remark also gives a sound justification
for our adoption of the Bernoulli arrival model by 
viewing it as a {\em discrete-time approximation}
of the widely used Poisson arrival model.

\begin{remark} It is a well-known fact that, as $\Delta$ tends to zero, a Poisson process in continuous time $t$, with arrival rate $\tilde{\lambda}$, is arbitrarily well approximated by $B_{\lfloor t/\Delta \rfloor}$ with $\lambda = \Delta \tilde{\lambda}$.
\end{remark}

\subsubsection{\underline{Activity-Dependent Server Performance}}
  
In our formulation, the efficiency or performance of the 
server during an epoch is modeled with the help of 
an {\em instantaneous service rate function} $\mu: \mathbb{S} \to 
(0, 1)$. More specifically, if the server works on a task
during epoch $k$, \underline{the probability} that it 
completes the task by the end of the epoch is $\mu(S_k)$. 
This holds irrespective of whether the task is newly 
assigned or inherited as ongoing work from a previous epoch.\footnote{This assumption is
introduced to simplify the exposition. However, 
more general scenarios
in which the probability of task completion
within an epoch depends on the total 
service received by the task prior to 
epoch $k$ can be handled by extending the
state space and explicitly
modeling the total service received by 
the task in service.} 
Thus, $\mu$ quantifies the effect 
of the activity state on the performance of the 
server. {\bf The results presented throughout this 
article are valid for {\em any} choice of $\mu$ with 
codomain $(0,1)$}.

\subsubsection{\underline{Dynamics of the Activity State}}

We assume that (i)~$S_{k+1}$ is equal to either $S_k$ 
or $S_k+1$ when $A_k$ is 
$\mathcal{W}$ and (ii)~$S_{k+1}$ is either 
$S_k$ or $S_k - 1$ if $A_k$ 
is $\mathcal{R}$.  The state-transition probabilities for $S_k$ are specified below for every $s$ and $s'$ in~$\mathbb{S}$:
\begin{subequations}
\label{Def-SDynamics}
\begin{align}
 P_{S_{k+1} | S_k, A_k}(s' \ | \ s, \mathcal{W}) & =
\begin{cases}
	\rho_{s,s+1} & \mbox{if }  
    	s' = s + 1 \\
    1 - \rho_{s, s+1} & \mbox{if } 
    	s' = s \\
    0 & \mbox{otherwise}
	\end{cases} \\
 P_{S_{k+1} | S_k, A_k}(s' \ | \ s, \mathcal{R}) &= 
\begin{cases}
	\rho_{s,s-1} & \mbox{if }
    	s' = s-1 \\
    1 - \rho_{s, s-1} & \mbox{if } 
    	s' = s \\
    0 & \mbox{otherwise}
	\end{cases}
\end{align}
\end{subequations} where the parameters $\rho_{s,s'}$ quantify the likelihood that the activity 
state will transition to a greater or lesser value, depending on 
whether the action is $\mathcal{W}$ or $\mathcal{R}$, respectively.
Here, we assume that $\{\rho_{s,s+1} : 1\leq s < n_s \}$ and $\{\rho_{s,s-1}: 1< s \leq n_s \}$ take values in $(0,1)$. We also adopt the convention that $\rho_{1,0}=\rho_{n_s,n_s+1}=0$.

\begin{remark}\label{def:justificationMarkovActionDependent} According to~(\ref{Def-SDynamics}), the state of charge (activity state) of the battery for the example in Section~\ref{subsec:Example} would be modeled as a controlled Markov chain, which can be viewed as an approximation of the models in~\cite{Sudevalayam2011Energy-Harvesti,Kansal2007Power-Managemen} and~\cite[Part~IV]{Priya2009Energy-Harvesti}. This follows the approach of the work described in the surveys~\cite{Ulukus2015Energy-harvesti,Leong2018Optimal-control,Jog2019Channels-learni}, in which controlled Markovian models are widely used to characterize the time-evolution of the state of charge of the batteries of energy harvesting devices. As is discussed in~\cite{Jog2019Channels-learni}, there are other processes that could be viewed as the activity state of a system and whose dynamics could be approximated by a controlled Markovian model after appropriate discretization of the state-space. Examples of such processes include the temperature of the wireless transmission module in communication~\cite{Koch2009Channels-that-h,Baknina2018aEnergy-harvesti} and distributed estimation systems~\cite{Forte2013Thermal-aware-s}, or, as is described in Section~\ref{subsec:Related}, the state quantifying human operator fatigue in~\cite{Savla2012A-Dynamical-Que}.
\end{remark}

\begin{figure}
    \centering
    \begin{tikzpicture}[scale=0.8, every node/.style={transform shape}]
		\draw [fill = black!5](0,0) node (v1) {\large $1$} circle (.5);
		\draw [fill = black!5](2,0) node (v2) {\large $2$} circle (.5);
		\draw [fill = black!5](4,0) node (v3) {\large $3$} circle (.5);
		\draw [fill = black!5](7,0) node (v3) {\large $n_s$} circle (.5);
		\node at (5.5,0) {$\hdots$};
		\draw [-latex] plot[smooth, tension=.7] coordinates {(0.5,0.25) (1,0.5) (1.5,0.25)};
		\draw [-latex] plot[smooth, tension=.7] coordinates {(2.5,0.25) (3,0.5) (3.5,0.25)};
		\draw [-latex] plot[smooth, tension=.7] coordinates {(4.5,0.25) (5,0.5)};
		\draw [-latex] plot[smooth, tension=.7] coordinates {(4.25,0.5) (4.25,1) (3.75,1) (3.75,0.5)};
		\draw [-latex] plot[smooth, tension=.7] coordinates {(2.25,0.5) (2.25,1) (1.75,1) (1.75,0.5)};
		\draw [-latex] plot[smooth, tension=.7] coordinates {(0.25,0.5) (0.25,1) (-0.25,1) (-0.25,0.5)};
		\draw [-latex] plot[smooth, tension=.7] coordinates {(7.25,0.5) (7.25,1) (6.75,1) (6.75,0.5)};
		\draw [dashed, -latex] plot[smooth, tension=.7] coordinates {(-0.25,-0.5) (-0.25,-1) (0.25,-1) (0.25,-0.5)};
		\draw [dashed, -latex] plot[smooth, tension=.7] coordinates {(1.5,-0.25) (1,-0.5) (0.5,-0.25)};
		\draw [dashed, -latex] plot[smooth, tension=.7] coordinates {(1.75,-0.5) (1.75,-1) (2.25,-1) (2.25,-0.5)};
		\draw [dashed, -latex] plot[smooth, tension=.7] coordinates {(3.5,-0.25) (3,-0.5) (2.5,-0.25)};
		\draw [dashed, -latex] plot[smooth, tension=.7] coordinates {(3.75,-0.5) (3.75,-1) (4.25,-1) (4.25,-0.5)};
		\draw [dashed, -latex] plot[smooth, tension=.7] coordinates {(5,-0.5) (4.5,-0.25)};
		\draw [dashed, -latex] plot[smooth, tension=.7] coordinates {(6.5,-0.25) (6,-0.5)};
		\draw [dashed, -latex] plot[smooth, tension=.7] coordinates {(6.75,-0.5) (6.75,-1) (7.25,-1) (7.25,-0.5)};
		\draw [-latex] plot[smooth, tension=.7] coordinates {(6,0.5) (6.5,0.25)};
		\draw [-latex](4,-1.5) -- (5,-1.5);
		\draw [-latex, dashed](0.25,-1.5) -- (1.25,-1.5);
		\node[right] at (1.5,-1.5) {$A_{k} = \mathcal{R}$};
		\node[right] at (5.25,-1.5) {$A_{k} = \mathcal{W}$};
	    \node at (3,0.75) {$\rho_{2,3}$};
	    \node at (3,-0.75) {$\rho_{3,2}$};
        \node at (1,0.75) {$\rho_{1,2}$};
        \node at (1,-0.75) {$\rho_{2,1}$};
    \end{tikzpicture}
    \caption{Dynamics of the Activity State $S_k \in \mathbb{S}$.}
    \label{fig:activityState}
    \vspace{-.1 in}
\end{figure}
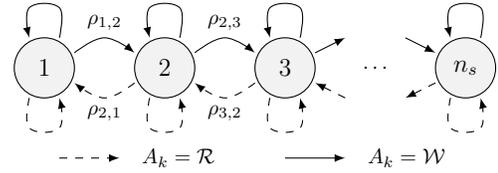

\subsubsection{\underline{Transition probabilities for $\bX_k$}}

We consider that $S_{k+1}$ is independent of {${(W_{k+1}, Q_{k+1})}$} 
when conditioned on {{$(\bX_k, A_k)$}}. Under this assumption, the 
transition probabilities for $\bX_k$ can be written as follows: 
for every $\bx$, $\bx'$ in $\mathbb{X}$ and  $a$ in $\mathbb{A}_{\bx}$,
\beqa
\label{Eqdef:MDP}
&& \myhb P_{\bX_{k+1}|\bX_k,A_k}( \bx'  \ | \ \bx, a) \lb
\myeq  P_{S_{k+1} | \bX_k, A_k}(s' \ | \ \bx, a) \lb 
&& \times P_{W_{k+1},Q_{k+1} | \bX_k, A_k}(w', q' \ | \ 
	\bx, a) \lb 
\myeq P_{S_{k+1} | S_k, A_k}(s' \ | \ s, a) 
	\label{eq:Xt+1} \\
&& \times P_{W_{k+1},Q_{k+1} | \bX_k, A_k}(w', q' \ | \ 
	\bx, a).
	\nonumber
\eeqa

We assume that, within each epoch $k$, the events that
(a) there is a new task arrival during the epoch and (b) a 
task being serviced during the epoch is completed by the end 
of the epoch are independent when conditioned on $\bX_k$ and 
$\{A_k=\mathcal{W}\}$. Hence, the transition probability 
$P_{W_{k+1},Q_{k+1} | \bX_k, A_k}$ in (\ref{eq:Xt+1}) is 
given by the following:
\begin{subequations}
\label{WTransitionDef}
\beqa
&& \myhb P_{W_{k+1},Q_{k+1} | \bX_k, A_k}(w', q' 
	\ | \ \bx, \mathcal{W}) \label{eq:WQW} \\ \nonumber
\myeq \left\{ \begin{array}{ll}
	\mu(s_k) \ \lambda & \mbox{if } w' = \cA \mbox{ and }
    	q' = q, \\
	\mu(s_k) \ (1-\lambda) & \mbox{if } w' = \cA \mbox{ and } 
    	q' = q - 1, \\
	(1-\mu(s_k)) \ \lambda & \mbox{if } w' = \cB \mbox{ and }
    	q' = q + 1, \\
	(1-\mu(s_k)) (1-\lambda) & \mbox{if } w' = \cB 
    	\mbox{ and } q' = q, \\
	0 & \mbox{otherwise, } \\
	\end{array} \right.  \\
&& \myhb P_{W_{k+1},Q_{k+1} | \bX_k, A_k}(w', q' \ | \ 
	\bx, \mathcal{R}) \label{eq:WQR} \\
\myeq \left\{ \begin{array}{ll}
	\lambda & \mbox{if } w' = \cA \mbox{ and } q' = q+1, \\
	1-\lambda & \mbox{if } w' = \cA \mbox{ and } q' = q, \\
	0 & \mbox{otherwise.} \\
	\end{array} \right.
    \nonumber
\eeqa
\end{subequations}

\vspace{.1 in}
\begin{defn}({\bf MDP $\bX$}) The MDP with input $A_k$ and state $\bX_k$, which at this point is completely defined, is denoted by~$\bX$. 
\end{defn}

Table~\ref{table:Notation} summarizes the notation for MDP~$\bX$.
\begin{table}[h]
\begin{center}
\begin{tabular}{c|l} \hline
$\mathbb{S}$ & set of activity states $\{1, 	
	\ldots, n_s\}$ \\
$\bW \Eqdef \{\cA, \cB\}$ & server availability ($\cA =$ 
	available, $\cB =$ busy) \\
$W_k$ & server availability at epoch $k$ (takes values in 
	$\mathbb{W}$) \\
$\mathbb{Y}$ & server state components $\mathbb{S}\times
	\mathbb{W}$\\
$\bY_k \Eqdef (S_k, W_k)$ & server state at epoch $k$ 
	(takes values in $\mathbb{Y}$) \\
$\mathbb{N}$ & natural number system $\{0,1,2,\ldots \}$. \\
$Q_k$ & queue size at epoch $k$ (takes values in 
	$\mathbb{N}$) \\
$\mathbb{X}$ & state space formed by $\mathbb{S}\times 
	\Big( (\mathbb{W}\times\mathbb{N})\diagdown 
	(\mathcal{B},0) \Big)$ \\
$\mathbf{X}_k \Eqdef (\bY_k, Q_k)$ & system state at 
	epoch $k$ (takes values in $\mathbb{X}$)\\
$\mathbf{X}$ & MDP whose state is $\mathbf{X}_k$ at epoch 
	$k \in \N$ \\
$\mathbb{A} \Eqdef \{\mathcal{R},\mathcal{W}\}$ & possible 
	actions ($\mathcal{R}$ = rest, $\mathcal{W}$ = work) \\
$\mathbb{A_\mathbf{x}}$ & set of actions admissible at a 
	given state $\mathbf{x}$ in $\mathbb{X}$\\
$A_k$ & action chosen at epoch $k$. \\
PMF & probability mass function \\
\hline
\end{tabular}
\vspace{.1in}
\caption{A summary of notation describing MDP $\bX$.}
\label{table:Notation}
\end{center}
\vspace{-0.2in}
\end{table}

\subsection{Evolution of the System State Under a Stationary Policy}
	\label{subsec:StateEvolution}

We start by defining the class of policies that we consider 
throughout the paper. 

\begin{defn} A stationary randomized policy is specified by
a mapping $\theta: \mathbb{X} \to [0, 1]$ that determines the 
probability that the server is assigned to work on a task or rest, 
as a function of the system state, according to
\begin{align*}
P_{A_k|\mathbf{X}_k,\ldots,\mathbf{X}_0}(\mathcal{W}|x_k,\ldots,x_0) 
	&= \theta(x_k) \ \mbox{ and } \\
P_{A_k|\mathbf{X}_k,\ldots,\mathbf{X}_0}(\mathcal{R}|x_k,\ldots,x_0) 
	&= 1-\theta(x_k).
\\ \vspace{-0.1in}
\end{align*}
\end{defn}

Hence, for given a realization $x_k$ of the state at epoch $k$, the stationary randomized policy specified by a map $\theta: \mathbb{X} \to [0, 1]$ assigns the action $A_k=\mathcal{W}$ with probability $\theta(x_k)$, and $A_k=\mathcal{R}$ with probability $1-\theta(x_k)$. According to~\cite{Puterman2005Markov-decision}, these policies would be further qualified as memoryless because, given $x_k$, they do not rely on previous realizations of the state.
 
\begin{defn}	\label{def:PhiR}
The set of admissible stationary randomized policies satisfying 
(\ref{ActionConstraints}) is denoted by $\Theta_R$.  
\end{defn}

\noindent {\bf Convention:} We adopt the convention that, unless stated 
otherwise, a positive arrival rate $\lambda$ is pre-selected and 
fixed throughout the paper. Although the statistical properties of 
$\bX$ and associated quantities subject to a 
given policy depend on $\lambda$, we simplify our notation by not 
labeling them with $\lambda$.

From (\ref{eq:Xt+1}) - (\ref{eq:WQR}), we conclude that $\mathbf{X}$ 
subject to a policy $\theta$ in  $\Theta_R$ evolves according to a
time-homogeneous Markov chain (MC), which we denote by
$\bX^\theta = \{\bX^\theta_k : k \in \N \}$.
Also, when it is clear from the context, 
we refer to $\bX^\theta$ as {\em the system}. 


There are several different but
related definitions of stability for queueing
systems (e.g., rate stability and asymptotic 
average stability), and we refer a reader 
to manuscripts 
\cite{BaccelliBremaud, El-Taha99, MeynTweedie, Dai20}
for more details. 
For our study, we adopt the definition of system 
stability stated below. As it will be clear, for 
our system, this 
definition is equivalent to a definition 
of stability that is common in the literature on stochastic systems, which 
requires the existence of a limiting probability 
distribution (Lemma~\ref{lemma:UniquePMF}).

\begin{defn}[System stability, stabilizability and 
$\Theta_S(\lambda)$] \label{def:Stability} 
For a given policy $\theta$ in $\Theta_R$, 
the system $\bX^\theta$ is stable if it satisfies the following 
properties: 

\noindent (i)~The number of transient states is finite and, hence, 
there is at least one recurrent communicating class (RCC). \\
\noindent (ii)~All RCCs are positive recurrent. 

An arrival rate $\lambda$ is said to be stabilizable when there is a policy $\theta$ in $\Theta_R$ for which $\bX^\theta$ is stable. We also define $\Theta_S(\lambda)$ to 
be the set of randomized policies in $\Theta_R$ that
stabilize the system for a stabilizable arrival rate $\lambda$.
\end{defn}

Before we proceed, let us point out a useful fact
under any stabilizing policy $\theta$ in 
$\Theta_S(\lambda)$. 
 
\begin{lemma}	\label{lemma:UniquePMF}
A stable system $\bX^\theta$ has a unique 
positive recurrent communicating class (PRCC), 
which is aperiodic. Hence, 
there is a unique 
stationary probability mass function (PMF) 
for $\bX^\theta$. 
\end{lemma}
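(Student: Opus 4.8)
The plan is to establish three things about a stable system $\bX^\theta$: that it has exactly one positive recurrent communicating class, that this class is aperiodic, and that consequently a unique stationary PMF exists. The core obstacle is uniqueness of the recurrent class; aperiodicity and the stationary-PMF conclusion then follow by standard Markov chain theory (e.g., the convergence theorem for irreducible aperiodic positive recurrent chains, applied on the single recurrent class, together with finiteness of the transient set from Definition~\ref{def:Stability}(iii) to rule out mass escaping to infinity).

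For uniqueness of the recurrent class, the strategy is to exhibit a single state that is reachable from every state in $\mathbb{X}$ under any admissible policy, which forces all recurrent states to lie in one communicating class. The natural candidate is the ``empty, available, lowest-performance'' state $\bx^* = (1, \cA, 0)$, matching the initialization $Q_0 = 0$, $W_0 = \cA$, $S_0 = 1$. First I would argue reachability of the set $\{(s, \cA, 0) : s \in \mathbb{S}\}$ from an arbitrary $\bx = (s, w, q)$: from any state, there is positive probability in each epoch that no new task arrives (probability $1-\lambda > 0$), and meanwhile the server either completes its current task (if $w = \cB$, probability $\mu(s) > 0$) or, once available, is idled — note that when $q$ becomes $0$ the policy is forced to pick $\mathscr{R}$ by \eqref{ActionConstraints}, and whenever $w = \cA$ with $q > 0$ the event $A_k = \mathscr{W}$ followed by immediate completion (probability $\theta(\bx)\mu(s)$, or handled via the $\mathscr{R}$ branch if $\theta(\bx) = 0$ fails — here one must be slightly careful, since a policy could set $\theta = 0$ and refuse to work, but then tasks are never completed; this is where I would use that $\theta \in \Theta_R(\lambda)$ is \emph{stabilizing}, so it cannot permanently refuse service while the queue is nonempty without violating positive recurrence). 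A cleaner route: since the chain is stable, it has at least one recurrent state $\bar\bx$; from $\bar\bx$ the queue must empty with positive probability (otherwise $Q_k \to \infty$ and $\bar\bx$ is not recurrent), so $(s, \cA, 0)$ is reachable from $\bar\bx$ for some $s$, and then by the rest dynamics \eqref{Def-SDynamics} $S$ decreases to $1$ with positive probability (each step $\rho_{s,s-1} > 0$) while $q$ stays $0$ and no arrivals occur, reaching $\bx^*$. Hence $\bx^*$ is recurrent, and by the same argument $\bx^*$ is reachable from \emph{every} state (run the above reachability argument from an arbitrary $\bx$, using only positive-probability events: no arrivals, forced or chosen rests, task completions, and performance decreasing to $1$). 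Therefore $\bx^*$ communicates with every recurrent state, so there is a single recurrent communicating class; by Definition~\ref{def:Stability}(ii) it is positive recurrent.

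For aperiodicity, it suffices to find one state in the recurrent class with a self-loop of positive probability. At $\bx^* = (1, \cA, 0)$, the forced action is $\mathscr{R}$; by \eqref{eq:WQR} with $q = 0$ and by \eqref{Def-SDynamics} at $s = 1$ (where $s' = 1$ with probability $1$), the transition to $(1, \cA, 0)$ occurs whenever no task arrives, i.e., with probability $1 - \lambda > 0$. So $\bx^*$ has period $1$, hence the whole recurrent class is aperiodic.

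Finally, for the stationary PMF: an irreducible, aperiodic, positive recurrent Markov chain on the recurrent class admits a unique stationary distribution $\pi$ supported there. Extending $\pi$ by zero on the (finitely many) transient states gives a stationary PMF for $\bX^\theta$ on all of $\mathbb{X}$. Uniqueness over all of $\mathbb{X}$ follows because any stationary PMF must assign zero mass to transient states and must restrict to a stationary distribution on the unique recurrent class, which is unique. I expect the main obstacle to be the reachability argument for $\bx^*$ from an arbitrary state while being careful that the events invoked genuinely have positive probability under \emph{any} admissible stabilizing policy — in particular handling states where $\theta$ might choose not to work — which is why I would lean on the stability hypothesis (queue must drain from recurrent states) rather than trying to force work transitions directly.
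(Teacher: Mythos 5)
The linchpin of your uniqueness argument---that $\bx^{*}=(1,\cA,0)$ is reachable from every (recurrent) state under an arbitrary stabilizing policy $\theta\in\Theta_R(\lambda)$---is false, and the dichotomy you invoke to patch it (``otherwise $Q_k\to\infty$ and $\bar\bx$ is not recurrent'') is also false. Concretely, take $\mu(s)=0.9$ for all $s\in\mathbb{S}$, $\lambda=0.1$, and the admissible policy $\theta(s,\cA,q)=0$ for $q\le 5$, $\theta(s,\cA,q)=1$ for $q\ge 6$ (with $\theta(s,\cB,q)=1$ as required by (\ref{ActionConstraints})). The queue can only decrease through a task completion, completions occur only while the server works, and under this policy new work is started only from states with $q\ge 6$; a completion at $q=6$ lands in $(s',\cA,5)$, where the policy rests until an arrival pushes the queue back to $6$. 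So from the region $\{q\ge 5\}$ the chain can never reach $q\le 4$, let alone $(1,\cA,0)$. Yet the system is stable in the sense of Definition~\ref{def:Stability}: the drift is negative for $q\ge 6$, giving a single positive recurrent class contained in $\{q\ge 5\}$, and only finitely many states (those with $q\le 4$, plus busy states at $q=5$) are transient. Thus ``the queue never empties from $\bar\bx$'' does not imply $Q_k\to\infty$; the queue can simply hover above a positive level. Your aperiodicity step also presumes $(1,\cA,0)$ lies in the recurrent class, which fails in this example, so both the uniqueness and the aperiodicity arguments have a genuine gap.

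The paper's proof goes in the opposite direction, which is the direction that survives policy vetoes: downward moves (completions) can be blocked by $\theta$, but upward moves cannot. It argues by contradiction that (i) every recurrent class must contain a busy state $(s_m,\cB,q_m)$ with $q_m>0$ (a closed class consisting only of available states with $\theta=0$ would be swept upward by arrivals and hence be transient), and likewise contains a state $(\tilde s_m,\cA,\tilde q_m)$ with $\theta(\tilde s_m,\cA,\tilde q_m)>0$, which has a positive-probability self-loop (work, complete, one arrival), yielding aperiodicity; and (ii) if $(s,\cB,q)$ belongs to a class, then so does every $(s',w,q')$ with $s'\ge s$, $w\in\mathbb{W}$, $q'\ge q$, because arrivals and upward $S$-transitions have positive probability no matter what the policy does (the cases $\theta(s',\cA,q)=0$ and $\theta(s',\cA,q)>0$ are handled separately). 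Two distinct recurrent classes would then both contain all states $(n_s,w,q)$ with $q\ge\max\{q_1,q_2\}$, contradicting disjointness. Your final step---deducing a unique stationary PMF from a unique aperiodic positive recurrent class together with finitely many transient states---is fine; it is the uniqueness-of-class argument that needs to be redone along the lines above rather than by funneling the chain down to the empty-queue state.
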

\begin{proof}
Please see Appendix~\ref{appen:UniquePMF} for a proof. 
\end{proof}

\begin{defn} \label{def:pi}
Given a fixed arrival 
rate $\lambda > 0$ and a stabilizing policy 
$\theta$ in $\Theta_S(\lambda)$, we denote the unique 
stationary PMF and PRCC
of $\bX^\theta$ by $\bpi^{\theta} = 
(\pi^{\theta}(\bx) : \bx \in \mathbb{X})$ 
and $\mathbb{C}_\theta$, respectively.
\end{defn}

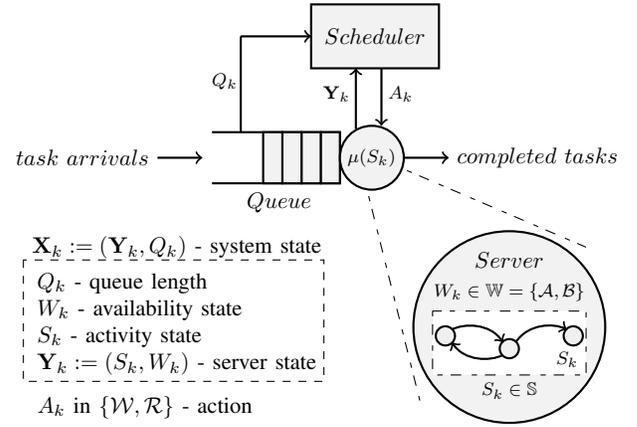
\begin{figure}
    \centering
    \begin{tikzpicture}[scale=0.85, every node/.style={transform shape}]
        \draw[thick, fill = black!5]  (0.5,1.8) rectangle (2.5,0.8);
        \node[align=center] at (1.5,1.3) {$Scheduler$};
        
        \draw[black, thick] (-1.05,-0.2) -- ++(2,0) -- ++(0,-0.8) -- ++(-2,0);
        \draw[thick, fill = black!5]  (0.65,-0.2) rectangle +(0.3,-0.8);
        \draw[thick, fill = black!5]  (0.35,-0.2) rectangle +(0.3,-0.8);
        \draw[thick, fill = black!5]  (0.05,-0.2) rectangle +(0.3,-0.8);
        \draw[thick, fill = black!5]  (-0.25,-0.2) rectangle +(0.3,-0.8);
        \draw[<-,thick] (-1.2,-0.6) -- +(-20pt,0) node[left] {$task\ arrivals$};
        \node[align=center] at (0,-1.3) {$Queue$};
        
        \draw[black, thick, fill = black!5] (1.45,-0.6) circle [radius=0.5] node {\footnotesize $\mu(S_k)$};
        \draw[->,thick] (1.95,-0.6) -- +(20pt,0) node[right] {$completed\ tasks$};
        \draw[black, thick, fill = black!5] (3.6,-3.25) circle [radius=1.5];
        \draw [dash pattern=on 2pt off 3pt on 4pt off 4pt,fill=white] (2.4,-3) rectangle (4.8,-4);
        \node[anchor=north] at (3.6,-4) {\footnotesize $S_k \in \mathbb{S}$};
        \node[anchor=south east] at (4.8,-4) {\footnotesize $S_k$};
        \node[align=center] at (3.6,-2.2) {$Server$};
        \draw[black, thick, fill=black!6] (2.6,-3.3884) circle (0.15);
        \draw[->,thick] (2.7,-3.2884) .. controls (2.9,-3.1884) and (3.3,-3.1884) .. (3.5,-3.4884);
        \draw[->,thick] (3.5,-3.6884) .. controls (3.3,-3.7884) and (2.9,-3.7884) .. (2.7,-3.4884);
        \draw[black, thick, fill=black!6] (3.6,-3.5884) circle (0.15);
        \draw[black, thick, fill=black!6] (4.6,-3.3884) circle (0.15);
        \draw[->,thick] (3.7,-3.4884) .. controls (3.9,-3.1884) and (4.3,-3.1884) .. (4.5,-3.2884);

        \draw[->,thick] (-0.6,1.3) -- (0.5,1.3);
        \draw [black, thick](-0.6,-0.2) -- (-0.6,1.3);
        \node at (-0.85,0.6) {\footnotesize$Q_k$};
        
        \draw[->,thick] (1.2,-0.15) -- (1.2,0.8);
        \draw[->,thick] (1.6,0.8) -- (1.6,-0.1);
        
        \node[align=left] at (-1.6,-3.2) {
        $Q_k$ - queue length\\
        $W_k$ - availability state\\
        $S_k$ - activity state \\
        $\bY_k:=(S_k,W_k)$ - server state};
        \node at (0.9,0.4) {\footnotesize$\bY_k$};
        \node at (1.9,0.4) {\footnotesize$A_k$};
        
        \node [align=left] at (-2.1,-4.5) {
        $A_k$ in $\{\mathcal{W}, \mathcal{R}\}$ - action};
        \draw [dashed] (-4,-2.2) rectangle (0.7,-4.1);
        \node [align=left] at (-1.6,-2) { $\bX_k:=(\bY_k,Q_k)$ - system state};
        
        \draw [dash pattern=on 2pt off 3pt on 4pt off 4pt](1.4,-1.2) -- (2.2,-4.4);
        \draw [dash pattern=on 2pt off 3pt on 4pt off 4pt](2,-0.8) -- (4.8,-2);
        \node at (3.6,-2.7) {\footnotesize $W_k \in \mathbb{W}=\{\mathcal{A},\mathcal{B}\}$};
    \end{tikzpicture}
    \caption{System Diagram.}
    \label{fig:System}
    \vspace{-0.1in}
\end{figure}



\section{Main Results}
\label{sec:Mainresults}

Our main results are Theorems~\ref{thm:necessity}~and~
\ref{thm:sufficiency}, where we state that a soon-to-be 
defined quantity $\onu^*$, which can be computed 
efficiently, is the least upper bound of all arrival 
rates for which there exists a stabilizing policy in 
$\Theta_R$ (see Definition~\ref{def:Stability}). The 
theorems also assert that, for any arrival rate $\lambda$ 
less than $\onu^*$, there is a stabilizing deterministic
{\em threshold} policy in $\Theta_R$ with the following structure:

\begin{equation}
\label{eq:ThetaDef}
\theta_{\tau}(s,w,q)
\Eqdef \begin{cases}
	\phi_{\tau}(s,w) & \text{if  $q > 0$}, \\
	0 & \text{otherwise,}
    \end{cases}
\end{equation}
where $\tau$ lies in $\mathbb{S} \cup \{n_s+1\}$, and $\phi_{\tau}$ is a 
threshold policy that acts as follows:
\beqa
\phi_{\tau}(s, w)
\Eqdef \begin{cases}
	0 & \text{if  $s \geq \tau$ and  $w = \cA$}, \\
    1 & \mbox{otherwise.}
	\end{cases}
	\label{eq:ThresholdPolicy}
\eeqa

Notice that, when the server is available and the queue is not empty, 
$\theta_{\tau}$ assigns a new task only if $s$ is less than the threshold 
$\tau$ and lets the server rest otherwise. 

In Section~\ref{Sec:auxiliaryMDP}, we introduce an 
auxiliary MDP with {\em finite} 
state space $\mathbb{Y}$, which can be viewed
as the server state in $\bX$ when the 
queue size $Q_k$ is always positive. 
First, using the fact that 
$\mathbb{Y}$ is finite, we demonstrate 
that, for every $\tau$ in {${\mathbb{S} 
\cup \{n_s+1\}}$}, the auxiliary MDP subject 
to $\phi_{\tau}$ has a unique stationary PMF, 
which we denote by $\overline{\bpi}^{\tau}$. 
Then, we show that, for 
any stable system $\bX^\theta$ 
under some policy $\theta$
in $\Theta_R$, we can find a threshold 
policy $\phi_\tau$
for the auxiliary MDP, which achieves the 
same long-term departure rate of completed
tasks as in $\bX^\theta$. As a result, 
the maximum long-term departure rate of 
completed tasks in the auxiliary MDP among
all threshold policies $\phi_\tau$ with 
$\tau$ in {${\mathbb{S} 
\cup \{n_s+1\}}$} serves as an 
upper bound on the arrival rate $\lambda$
for which we can hope to find a stabilizing
policy $\theta$ in $\Theta_R$.

Making use of this observation, 
we define the following important quantity:
\beqa
\onu^* 
\Eqdef \max_{ \tau \in \mathbb{S} \cup \{n_s+1\}} 
	\left( \sum_{(\overline{s}, \overline{w}) \in \mathbb{Y}} 
    	\overline{\pi}^{\tau}(\overline{s}, \overline{w}) 
    		\ \phi_{\tau}(\overline{s}, \overline{w}) 
        	\ \mu(\overline{s}) \right)
	\label{eq:lambda*}
\eeqa
From the definition of the stationary PMF
$\bpi^\tau$, $\onu^*$ can be interpreted as the
maximum long-term departure rate of completed
tasks under any threshold policy of the form 
in (\ref{eq:ThetaDef}), {\em assuming
that the queue is always non-empty}.

\vspace{.1 in}
\underline{The following are the main results of this paper.}
\vspace{.1in}
\begin{thm}\label{thm:necessity} 
(Necessity) If, for a given arrival rate $\lambda$, there 
exists a stabilizing policy in $\Theta_R$, then $\lambda \leq \onu^*$.
\end{thm}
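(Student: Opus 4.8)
The plan is to argue by a flow-conservation (rate-balance) argument: under a stabilizing policy $\theta \in \Theta_R(\lambda)$, the long-term arrival rate of tasks into the queue must equal the long-term departure rate of completed tasks; since the former is $\lambda$ and the latter is bounded above by $\lambda^*$, the inequality follows. First I would invoke Lemma~\ref{lemma:UniquePMF} to obtain the unique stationary PMF $\bpi^\theta$ on the positive recurrent class $\mathbb{C}_\theta$, and assume (by Definition~\ref{def:Stability}) that the chain $\bX^\theta$ is initialized in stationarity, so that all one-step averages below are exactly the stationary expectations. The departure rate in steady state is the stationary probability that a task completes in a given epoch, namely
\beqa
d_\theta \Eqdef \sum_{\bx = (s,w,q) \in \mathbb{X}} \pi^\theta(\bx) \, \theta(\bx) \, \mu(s),
\eeqa
since a completion occurs in epoch $k$ precisely when the server is set to work ($A_k = \mathscr{W}$, which under $\theta$ and the action constraints \eqref{ActionConstraints} happens w.p.\ $\theta(\bx)$ — noting $\theta$ forces $\mathscr{W}$ whenever $w = \cB$, $q>0$) and the task completes, w.p.\ $\mu(s)$, these being conditionally independent by the model in \eqref{WTransitionDef}.

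Next I would establish $d_\theta = \lambda$. Consider the stationary process $Q_k$. From the queue update rules, $Q_{k+1} - Q_k = B_k \indicate{\text{arrival}} - \indicate{\text{completion in epoch }k}$, where the arrival indicator has mean $\lambda$ (independent of $\bX_k, A_k$) and the completion indicator has stationary mean $d_\theta$. Since $\bX^\theta$ is positive recurrent with a stationary distribution, $\E{Q_k}$ is finite and constant in $k$ (here the key point is that positive recurrence of an irreducible chain on a countable state space gives a genuine stationary distribution under which $Q$ has finite mean — this needs a short justification, e.g.\ via the Foster-Lyapunov converse or directly from $\sum_\bx \pi^\theta(\bx) q < \infty$, which holds because the tail of $\pi^\theta$ is summable against $q$ for a positive recurrent chain whose drift is eventually negative). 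Taking expectations in the one-step increment and using stationarity, $0 = \E{Q_{k+1}} - \E{Q_k} = \lambda - d_\theta$, hence $d_\theta = \lambda$.

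Finally I would bound $d_\theta$ by $\lambda^*$. The idea is to project $\bpi^\theta$ onto the server-state coordinate $\mathbb{Y} = \mathbb{S} \times \mathbb{W}$: define $\overline{\pi}(\os,\ow) \Eqdef \sum_{q} \pi^\theta(\os,\ow,q)$, a PMF on $\mathbb{Y}$. Then $d_\theta = \sum_{(\os,\ow) \in \mathbb{Y}} \overline{\pi}(\os,\ow) \, \mu(\os) \, \big( \sum_q \pi^\theta(\os,\ow,q)\theta(\os,\ow,q)/\overline{\pi}(\os,\ow) \big)$, i.e.\ a convex combination over $\mathbb{Y}$ of terms $\mu(\os) \gamma(\os,\ow)$ where $\gamma(\os,\ow) \in [0,1]$ is the $\bpi^\theta$-conditional probability of action $\mathscr{W}$ given server state $(\os,\ow)$. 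One then shows this is dominated by the maximum, over $\tau \in \mathbb{S} \cup \{n_s+1\}$, of the analogous quantity $\sum_{\mathbb{Y}} \overline{\pi}^\tau(\os,\ow)\,\phi_\tau(\os,\ow)\,\mu(\os)$ computed for the auxiliary MDP — and this requires exactly the structural fact promised in Section~\ref{Sec:auxiliaryMDP}, namely that the departure rate of $\bX^\theta$ is matched by some threshold policy on the auxiliary MDP (equivalently, that restricting to threshold policies $\phi_\tau$ on $\mathbb{Y}$ loses nothing in achievable departure rate). \textbf{This matching step is the main obstacle}: $\gamma(\cdot,\cdot)$ need not be of threshold form, so one must argue that among all $\mathbb{Y}$-indexed randomized work-probabilities achievable as stationary projections, threshold ones are extremal for the departure-rate objective — presumably by a sample-path or coupling comparison on the auxiliary MDP showing that shifting probability mass toward the ``work when $s$ is small'' region only increases the long-run completion rate. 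Once this is granted, $d_\theta \le \lambda^*$, and combined with $d_\theta = \lambda$ we conclude $\lambda \le \lambda^*$. \QED
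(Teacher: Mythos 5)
Your overall architecture (project the stationary law of $\bX^\theta$ onto $\mathbb{Y}$, equate departure rate with $\lambda$, then dominate by a threshold quantity) mirrors the paper's, but the step you flag as ``the main obstacle'' is precisely the content of the theorem, and the route you sketch for it would not work. You propose to show that threshold policies are extremal by a coupling argument that ``shifts probability mass toward the work-when-$s$-is-small region,'' but the paper makes no monotonicity assumption on $\mu$ (it may be multi-modal; see Remark~\ref{rem:results}), so there is no sense in which working at small $s$ is better, and such a comparison argument fails. The paper's actual resolution has two ingredients you are missing. First, Lemma~\ref{lemma:PMFequivalence}: the projected pair $(\overline{\pi},\gamma)$ you construct is not just an algebraic rewriting of $d_\theta$ --- one must prove that $\overline{\pi}$ is the stationary PMF of the auxiliary MDP $\overline{\bY}$ under the projected policy $\phi^\theta=\mathcal{Q}(\theta)$ (together with Lemma~\ref{lemma:PhiThetaPos} guaranteeing $\phi^\theta(1,\cA)>0$, hence uniqueness of that PMF). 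This yields $\lambda=\overline{\nu}^{\mathcal{Q}(\theta)}\le\lambda^{**}:=\max_{\phi\in\Phi_R}\overline{\nu}^{\phi}$. Second, $\lambda^{**}=\lambda^*$ is obtained not by any exchange/coupling argument but structurally: average-reward theory for the \emph{finite} MDP $\overline{\bY}$ (Puterman, Thm.~9.1.8) gives a deterministic maximizer; after reducing to $\phi(1,\cA)=1$ (Lemma~\ref{lemma:condition1}), every such deterministic $\phi$ has all states with $\os<\mathcal{T}(\phi)$ transient and coincides on its unique recurrent class with the threshold policy $\phi_{\mathcal{T}(\phi)+1}$ (Lemmas~\ref{lemma:perfthreshequal},~\ref{lemma:EquivSD}), so thresholds lose nothing for \emph{any} $\mu$. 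Without these ingredients your argument establishes only $\lambda\le\lambda^{**}$ with $\lambda^{**}$ an unevaluated sup, not $\lambda\le\lambda^*$.

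A secondary flaw: your rate-balance step $d_\theta=\lambda$ rests on $\E{Q_k}$ being finite and constant under the stationary law, and you justify finiteness by asserting that positive recurrence makes $\sum_{\bx}\pi^\theta(\bx)\,q$ finite. That is false in general --- positive recurrence gives a stationary PMF but not a finite stationary mean of $Q$. The identity $d_\theta=\lambda$ should instead be obtained as in Remark~\ref{nuislambda}, via the ergodic theorem for the positive recurrent chain (or by the counting identity that completions up to epoch $k$ equal $Q_0$ plus arrivals minus $Q_k$, with $Q_k/k\to0$ almost surely), which requires no moment assumption on the queue.
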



A proof of Theorem~\ref{thm:necessity} is provided in Section~\ref{subsec:nec}.

\vspace{0.1 in}

\begin{thm} \label{thm:sufficiency} 
(Sufficiency) Let $\tau^*$ be a maximizer of 
(\ref{eq:lambda*}). If the arrival rate $\lambda$ is 
strictly less than $\onu^*$, then 
$\theta_{\tau^*}$ stabilizes the system.
\\ \vspace{-0.15in}
\end{thm}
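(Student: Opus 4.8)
The plan is to show that, under $\theta_{\tau^*}$, the system $\bX^{\theta_{\tau^*}}$ is stable in the sense of Definition~\ref{def:Stability}, given the strict inequality $\lambda < \lambda^*$. The natural tool is a Foster--Lyapunov drift argument applied to the queue size $Q_k$, with the server state $\bY_k = (S_k, W_k)$ treated as a ``fast'' finite-state modulating chain. Concretely, I would first observe that whenever $Q_k$ is large (say $Q_k \geq 1$), the action-dependent constraint in \eqref{ActionConstraints} never forces a rest due to an empty queue, so $\theta_{\tau^*}$ coincides with $\phi_{\tau^*}$ acting on $\bY_k$; thus, on the set $\{Q_k \geq 1\}$, the server-state process $\bY_k$ evolves exactly as the auxiliary MDP under $\phi_{\tau^*}$. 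By the result established in Section~\ref{Sec:auxiliaryMDP} (the auxiliary MDP under $\phi_{\tau^*}$ has a unique stationary PMF $\overline{\bpi}^{\tau^*}$ on the finite space $\mathbb{Y}$, and is therefore ergodic), the long-run rate at which tasks are completed when the queue stays nonempty is exactly $\lambda^* = \sum_{(\os,\ow)} \overline{\pi}^{\tau^*}(\os,\ow)\,\phi_{\tau^*}(\os,\ow)\,\mu(\os)$, while tasks arrive at rate $\lambda < \lambda^*$. This gives a strictly negative expected drift of $Q_k$ over long time windows.

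To make this rigorous I would use a multi-step (fluid-style) Lyapunov argument rather than a one-step drift, because a single epoch's drift in $Q_k$ depends on the current server state $\bY_k$ and can be positive (e.g., when $W_k = \cB$ and $\mu$ is small, or right after a rest). The cleanest route: fix the Lyapunov function $V(\bx) = q$, pick a horizon $T$ large enough that, starting from any server state $\by \in \mathbb{Y}$, the expected number of completions of the auxiliary chain under $\phi_{\tau^*}$ over $T$ epochs exceeds $(\lambda + \varepsilon)T$ for some $\varepsilon > 0$ with $\lambda + 2\varepsilon < \lambda^*$ — this is possible by ergodicity of the finite auxiliary chain and the definition of $\lambda^*$. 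Then show that, on the event that $Q_k \geq T$ (so the queue cannot empty within the window and the coupling with the auxiliary chain is exact for all $T$ steps), $\mathbb{E}[Q_{k+T} - Q_k \mid \bX_k = \bx] \leq -\varepsilon T$. Away from this set, $\{Q_k < T\}$ together with the finite server-state space gives a finite set $F \subset \mathbb{X}$, and the $T$-step drift is bounded above by $T$ there. By the Foster--Lyapunov / Pakes-type criterion applied to the $T$-step transition kernel (or equivalently to the sampled chain $\bX_{kT}$), the chain $\bX^{\theta_{\tau^*}}$ is positive recurrent on the complement of the transient part; combined with Lemma~\ref{lemma:UniquePMF}-type irreducibility reasoning one gets exactly the three conditions of Definition~\ref{def:Stability}: finitely many transient states, and the unique recurrent class positive recurrent.

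The main obstacle I anticipate is the coupling bookkeeping near the boundary $\{Q_k \text{ small}\}$: when the queue does empty, $\theta_{\tau^*}$ forces $\mathscr{R}$ regardless of $S_k$, so the server-state dynamics temporarily deviate from the auxiliary chain, and one must argue that this boundary behavior cannot destabilize the system. Restricting the drift estimate to $\{Q_k \geq T\}$ sidesteps this (the queue provably cannot hit $0$ within $T$ epochs since it decreases by at most one per epoch), so the deviation never occurs on the relevant event — but one still needs to confirm that excursions below $T$ are handled by the ``finite exceptional set'' clause of the drift criterion, and that the chain restricted appropriately is irreducible enough for the conclusion to transfer from the sampled chain back to $\bX^{\theta_{\tau^*}}$. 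A secondary technical point is verifying the uniform-in-$\by$ choice of the horizon $T$ and the attendant bound on one-step increments of $V$; both are routine given the finiteness of $\mathbb{Y}$ and the ergodic theorem for finite Markov chains, so I would state them as lemmas and defer the estimates.
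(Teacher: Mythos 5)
Your plan is sound and, as far as I can tell, every deferred step is indeed routine: the exact coincidence of the server-state dynamics with the auxiliary chain $\overline{\bY}^{\phi_{\tau^*}}$ while the queue stays positive is precisely the relation (\ref{equivalenceXandY}), the uniform-in-initial-state choice of the horizon $T$ follows from finiteness of $\mathbb{Y}$ and the unique aperiodic positive recurrent class of $\overline{\bY}^{\phi_{\tau^*}}$ (note $\tau^*\geq 2$ since $\lambda^*>0$, so $\phi_{\tau^*}\in\Phi^+_R$), the restriction to $\{Q_k\geq T\}$ legitimately rules out hitting the empty-queue boundary inside the window because the queue drops by at most one per epoch, and the irreducibility of $\bX^{\theta_{\tau^*}}$ (from and to $(1,\cA,0)$, using $\mu>0$, $\lambda<1$ and the $\rho$'s in $(0,1)$) lets the multi-step Foster/Pakes criterion for the $T$-skeleton transfer to the original chain and deliver all three clauses of Definition~\ref{def:Stability}. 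However, this is a genuinely different route from the paper. The paper argues by contradiction: assuming $\bX^{\theta_{\tau^*}}$ is unstable, irreducibility forces all states to be null recurrent or transient, so the probability of an empty queue vanishes; a distributional-convergence lemma (Lemma~\ref{lemma:DistrConvg1}, proved by comparing the evolution of the server-state marginal with powers of $\overline{\bf P}^{*}$) then shows the per-epoch completion probability converges to $\lambda^*$, and the reverse Fatou lemma plus the pathwise counting bound (completions up to epoch $k$ cannot exceed $Q_0$ plus arrivals, whose rate is $\lambda_1$ a.s.\ by the strong law) yields a contradiction. The paper's argument avoids any drift computation and any uniform-horizon estimate, at the cost of the somewhat delicate ``unstable $\Rightarrow$ empty-queue probability $\to 0$'' step and the convergence lemma in the appendix; your Foster--Lyapunov argument is more constructive, proves positive recurrence directly rather than refuting its negation, and would additionally give quantitative information (e.g.\ bounds on the stationary queue length via the drift inequality), at the cost of the multi-step drift bookkeeping you describe. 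Both proofs rest on the same two structural facts: the queue-positive coupling with the auxiliary chain and the ergodicity of that finite chain under $\phi_{\tau^*}$.
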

 Theorem~\ref{thm:sufficiency} is proved in Section~\ref{subsec:suf}.
\vspace{0.1 in}


\begin{remark} \label{rem:results}
The following important observations are direct consequences 
of (\ref{eq:lambda*}) and Theorems~\ref{thm:necessity}~and~\ref{thm:sufficiency}:

\begin{itemize}
\item The computation of $\onu^*$ in (\ref{eq:lambda*}) along with a 
maximizing threshold $\tau^*$ relies 
on a {\em finite} search that can be 
carried out efficiently.  

\item The theorems are valid for {\em any} 
choice of $\mu: \mathbb{S} \to 
(0, 1)$. In particular, 
$\mu$ could be multi-modal, increasing or 
decreasing.

\item The search that yields $\onu^*$ 
and an associated $\tau^*$ does not 
require knowledge of $\lambda$.
\end{itemize}
\end{remark}

We point out two key differences
between our study and
\cite{Savla2010Maximally-stabi, Savla2012A-Dynamical-Que}. 
The model employed by Savla and Frazzoli
assumes that the service time function 
is convex, which is analogous to $\mu$ being unimodal
in our formulation. 
In addition, a threshold policy is proved
to be maximally stabilizing only for 
identical task workload. In 
our study, however, we do not impose any
assumption on $\mu$, 
and the workloads of tasks are modeled
using independent and identically distributed random 
variables.\footnote{To be more precise, 
our assumptions correspond to the case
with exponentially distributed workloads. However, 
as mentioned earlier,
this assumption can be relaxed to allow more general
workload distributions.}

\section{An Auxiliary MDP $\overline{\bY}$}
\label{Sec:auxiliaryMDP}

In this section, we describe an \emph{auxiliary} MDP whose state 
takes values in $\mathbb{Y}$ and approximates the server state of $\mathbf{X}$ 
under a subclass of policies in $\Theta_R$, subject to an 
assumption that the queue always has a task to service
whenever the server becomes available. We denote this 
auxiliary MDP by $\overline{\mathbf{Y}}$ and its state at 
epoch $k$ by 
$\overline{\bY}_k=(\overline{S}_k,\overline{W}_k)$ in order to 
emphasize that it takes values in $\mathbb{Y}$. The action chosen 
at epoch $k$ is denoted by $\overline{A}_k$. We use the overline
to denote the auxiliary MDP and any other variables associated 
with it, in order to distinguish them from those of the 
server state in~$\bX$. 

We can view $\overline{\bY}$
as the server state of the 
original MDP $\bX$ for which infinitely many 
tasks are waiting in the queue at the beginning, 
i.e.,  $Q_0 = \infty$. As a result, there is always
a task waiting for service when the server becomes available. 

The reason for introducing $\overline{\bY}$ is the following: 
(i) $\mathbb{Y}$ is finite and, hence, $\overline{\bY}$ is easier 
to analyze than $\bX$, and (ii) we can establish a relation 
between $\bX$ and $\overline{\bY}$, 
which allows us to prove the main results
in the previous section by studying $\overline{\bY}$ instead
of $\bX$. This simplifies the proofs of the theorems
in the previous section. 

\vspace{.1in}
\noindent {\bf Admissible action sets:} 
As the queue size is no longer a component of the state of 
$\overline{\bY}$,  we eliminate the dependence of admissible
action sets on $q$, which was explicitly specified in 
(\ref{ActionConstraints}) for MDP $\bX$, while 
still ensuring that the server is non-preemptive. More 
specifically, the set of admissible actions at each 
element $\overline{\by}=(\overline{s},\overline{w})$ of 
$\mathbb{Y}$ is given by 
\begin{equation}
\label{ActionConstraintsauxiliary}
\overline{\mathbb{A}}_{\overline{w}}
\Eqdef \begin{cases}
    \{\mathcal{W}\} & \text{if $\overline{w} = \cB$, \quad 
    	(non-preemptive server)} \\
    \mathbb{A} & \text{if $\overline{w} = \cA$}. \\
	\end{cases}
\end{equation} 
Consequently, for any given realization of the current state $
\overline{\by}_k = (\overline{s}_k,\overline{w}_k)$, $\overline{A}_k$ 
is required to take values in $\overline{\mathbb{A}}_{\overline{w}_k}$.

\vspace{.1 in}
\noindent {\bf Transition probabilities:} 
We define the transition probabilities that specify $\overline{\bY}$, 
as follows:
\beqa
P_{\overline{\bY}_{k+1}|\overline{\bY}_k,\overline{A}_k}( \overline{\by}' 
 	 \ | \ \overline{\by}, \overline{a}) 
& \myb \Eqdef & \myb 
P_{\overline{S}_{k+1} | \overline{S}_k, \overline{A}_k}(\overline{s}' 
	\ | \ \overline{s}, \overline{a}) 
	\label{Eqdef:MDPRED} \\ 
&& \times P_{\overline{W}_{k+1} | 
		\overline{\bY}_k, \overline{A}_k}(\overline{w}' \ |\ \overline{\by}, 
			\overline{a}), 
	\nonumber
\eeqa
where $\overline{\by}$ and $\overline{\by}'$ are in $\mathbb{Y}$, and 
$\overline{a}$ is in $\overline{\mathbb{A}}_{\overline{w}}$. Subject 
to these action constraints, the right-hand terms of (\ref{Eqdef:MDPRED}) 
are defined, in connection with $\bX$, as follows:
\begin{equation}
\label{SDef}
P_{\overline{S}_{k+1} | \overline{S}_k, \overline{A}_k}(\overline{s}' 
	\ | \ \overline{s}, \overline{a}) 
\Eqdef P_{S_{k+1} | S_k, A_k}(\overline{s}' \ | \ \overline{s}, \overline{a}) 
\end{equation}
\begin{subequations}
\begin{align}
P_{\overline{W}_{k+1} | \overline{\bY}_k, \overline{A}_k}(\overline{w}' 
	\ | \ \overline{\by}, \mathcal{W}) 
\Eqdef & \begin{cases} 
	\mu(\overline{s}) & \text{if $\overline{w}'=\mathcal{A}$} \\ 
	1-\mu(\overline{s}) & \text{if $\overline{w}'=\mathcal{B}$} 
	\end{cases} \\
P_{\overline{W}_{k+1} | \overline{\bY}_k, \overline{A}_k}(\overline{w}' 
	\ | \ \overline{\by}, \mathcal{R})
\Eqdef & \begin{cases} 
	1 & \text{if $\overline{w}'=\mathcal{A}$} \\ 
	0 & \text{if $\overline{w}'=\mathcal{B}$} 
	\end{cases}
\end{align}
\end{subequations}

\vspace{.1in}
\noindent {\bf Relating the transition 
probabilities of $\bX$ and~$\overline{\bY}$:} 
From the definition above and (\ref{WTransitionDef}), 
we can show that, for all $q \geq 1$, 
\beqa
&& \myhb P_{\overline{W}_{k+1} | \overline{\bY}_k, \overline{A}_k} 
	\big( \overline{w}' 
	\ | \ \overline{\by}, \mathcal{W} \big) \lb
\myeq \sum_{q' = 0}^\infty P_{W_{k+1},Q_{k+1} | \bX_k, A_k} 
	\big( (\overline{w}',q') \ | \ (\ \overline{\by},q), \mathcal{W} 
		\big),
	\label{equivalenceXandY-W} 
\eeqa
which holds for any $\overline{w}'$ in $\mathbb{W}$ and $\overline{\by}$ 
in $\mathbb{Y}$. Notice that the right-hand side (RHS) of 
(\ref{equivalenceXandY-W}) does not change when we vary $q$ across the 
positive integers. From this, in conjunction with 
(\ref{eq:Xt+1}), (\ref{Eqdef:MDPRED})
and (\ref{SDef}), we have, for all $q \geq 1$, 
\beqa
&& \myhb P_{\overline{\bY}_{k+1} | \overline{\bY}_k, \overline{A}_k} 
	\big(\overline{\by}' \ | \ \overline{\by}, \mathcal{W}) \lb 
\myeq \sum_{q'=0}^{\infty} P_{\bX_{k+1} | \bX_k, A_k} 
	\big( (\overline{\by}',q') \ | \ ( \overline{\by},q), \mathcal{W} 
		\big).
	\label{equivalenceXandY} 
\eeqa
The equality in (\ref{equivalenceXandY}) indicates that 
$P_{\overline{\bY}_{k+1} | \overline{\bY}_k, \overline{A}_k}$ 
also characterizes the transition probabilities of 
the server state $\bY_k = (S_k, W_k)$ in $\bX$ 
when the current queue size is 
positive. This is consistent with our earlier viewpoint that 
$\overline{\bY}$ can be considered the server state in $\bX$ 
initialized with infinite queue length at the beginning. 
We will explore this relationship in Section~\ref{SectionMainProofs}, 
where we use $\overline{\bY}$ to prove Theorems~\ref{thm:necessity}~and~\ref{thm:sufficiency}. 
\\ \vspace{-0.3in}

\subsection{Stationary Policies and 
Stationary PMFs of $\overline{\bY}$}
	\label{subsec:PMFobY}

Analogously to the MDP $\bX$, we only consider 
stationary randomized policies for $\overline{\bY}$, 
which are defined below.  

\begin{defn}[Stationary randomized policies for $\overline{\bY}$] 
We restrict our attention to stationary randomized policies acting on 
$\overline{\bY}$, which are specified by 
a mapping ${\phi: \mathbb{Y} \to [0, 1]}$, 
as follows: for all $k$ in $\N$ and $\overline{\by}_k,\ldots,
\overline{\by}_0$ in $\mathbb{Y}$, 
\beqan
P_{\overline{A}_k | \overline{\bY}_k,\ldots,\overline{\bY}_0 } 
	(\mathcal{W} | \overline{\by}_k,\ldots,\overline{\by}_0)
\myeq \phi(\overline{\by}_k) \\
P_{\overline{A}_k | \overline{\bY}_k,\ldots,\overline{\bY}_0 } 
	(\mathcal{R} | \overline{\by}_k,\ldots,\overline{\by}_0)
\myeq 1- \phi(\overline{\by}_k)
\eeqan
The set of all 
stationary randomized policies for $\overline{\bY}$ 
which honor (\ref{ActionConstraintsauxiliary}) is 
defined to be $\Phi_R$.
\end{defn}

The MDP $\overline{\mathbf{Y}}$ subject to a policy
$\phi$ in $\Phi_R$ is a {\em finite-state} 
time-homogeneous MC and is denoted by 
${\overline{\bY}^{\phi} \Eqdef 
\{\overline{\bY}^{\phi}_k : k \in \N \} }$. 
\vspace{.1 in}
\subsubsection{Recurrent Communicating Classes of $\overline{\bY}^{\phi}$}
Because $\mathbb{Y}$ is finite, for any policy $\phi$ 
in $\Phi_R$, $\overline{\mathbf{Y}}^{\phi}$ has a PRCC and a stationary distribution
\cite{Grimmett2001Probability-and}. In fact, 
there are at most two PRCCs as explained in \underline{Cases~1 and~2} below. 

Define a mapping 
$\mathcal{T}: \Phi_R
\to \mathbb{S} \cup \{0\}$, where 
\beqan
\mathcal{T}(\phi)
\Eqdef \max\{ \os \in \mathbb{S} \ | \ 
	\phi(\os, \cA) = 1 \}, \  \phi \in \Phi_R. 
\eeqan
We assume that $\mathcal{T}(\phi) = 0$ if the
set on the RHS is empty.

\underline{\em Case 1. $\phi(1, \cA) > 0$:} 
First, from the definition of $\mathcal{T}(\phi)$, 
all states $(\os, \ow)$ with $\os
\geq \mathcal{T}(\phi)$ communicate with 
each other, but none of these
states communicates with any other state
$(\os', \ow')$ with $\os' < \mathcal{T}(\phi)$
because $\phi(\mathcal{T}(\phi), \cA) 
= \phi(\mathcal{T}(\phi), \cB) = 1$. 
Second, because $\phi(1, \cA) > 0$ by
assumption, all states 
$(\os', \ow')$
with $\os < \mathcal{T}(\phi)$ 
communicate with states $(\os, \ow)$ with 
$\os \geq \mathcal{T}(\phi)$. 
Together with the first observation, this implies
that these states $(\os', \ow')$ with $\os'
< \mathcal{T}(\phi)$ are transient. Thus, 
there is only one PRCC given by 
\beqa
\mathbb{Y}^\phi
\Eqdef \{ (\os, \ow) \in \mathbb{Y} \ | \ 
	\os \geq \mathcal{T}(\phi) \}. 
	\label{eq:Yphi}
\eeqa

\underline{\em Case 2. $\phi(1, \cA) = 0$:} In 
this case, it is clear that $(1, \cA)$ is an 
absorbing state and forms a PRCC by itself. 
Hence, if $\mathcal{T}(\phi) = 0$, as all 
other states communicate with $(1, \cA)$, 
the only PRCC
is $\{(1, \cA)\}$ and all other states are
transient. On the other hand, if
$\mathcal{T}(\phi) > 1$, for the same reason
explained in the first case, $\mathbb{Y}^{\phi}$
gives rise to a second PRCC, 
and all other states $(\os', \ow')$
with $\os' < \mathcal{T}(\phi)$, except for
$(1, \cA)$, are transient.

In our study, we often limit our discussion to 
randomized policies $\phi$ in $\Phi_R$ with
$\phi(1, \cA) > 0$. For this reason, for 
notational convenience, we define the set of 
randomized policies satisfying this condition 
by $\Phi^+_R$. The reason for this will be
explained in the subsequent section. 

The following proposition is an immediate consequence
of the above observation. 

\begin{coro} 
\label{coro:pibar} For any policy $\phi$ in 
$\Phi^+_R$, 
$\overline{\bY}^{\phi}$ has a unique stationary PMF, 
which we denote by $\overline{\bpi}^{\phi} 
= (\overline{\pi}^{\phi}(\by) : \by \in \mathbb{Y})$.
\end{coro}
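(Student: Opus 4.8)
The plan is to reduce the claim to a standard fact about finite-state Markov chains by invoking the communicating-class analysis carried out immediately above. Fix a policy $\phi$ in $\Phi^+_R$, so that $\phi(1,\cA) > 0$; this places us in Case~1 of the preceding discussion, where it was already established that every state $(\os',\ow')$ with $\os' < \mathcal{T}(\phi)$ is transient, and that the set $\mathbb{Y}^\phi = \{(\os,\ow) \in \mathbb{Y} \ | \ \os \geq \mathcal{T}(\phi)\}$ is the \emph{unique} positive recurrent communicating class of $\overline{\bY}^\phi$. Since $\mathbb{Y}$ is finite, the restriction of $\overline{\bY}^\phi$ to $\mathbb{Y}^\phi$ is an irreducible finite-state Markov chain, hence positive recurrent and admitting exactly one stationary distribution \cite{GS}.

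It then remains to transfer this to $\overline{\bY}^\phi$ on all of $\mathbb{Y}$. First I would note that any stationary PMF of a finite chain assigns zero mass to every transient state, so the support of any stationary PMF of $\overline{\bY}^\phi$ is contained in $\mathbb{Y}^\phi$. Hence every stationary PMF of $\overline{\bY}^\phi$ restricts to a stationary PMF of the irreducible chain on $\mathbb{Y}^\phi$, and conversely the unique stationary PMF on $\mathbb{Y}^\phi$, extended by zeros on the transient states, is stationary for $\overline{\bY}^\phi$. This correspondence yields both the existence and the uniqueness of $\overline{\bpi}^{\phi} = (\overline{\pi}^{\phi}(\by); \ \by \in \mathbb{Y})$, which is the assertion of the corollary.

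There is essentially no difficult step here; the corollary is a bookkeeping consequence of the earlier case analysis together with elementary finite Markov chain theory. The one point that genuinely matters is invoking the definition of $\Phi^+_R$: it is precisely the hypothesis $\phi(1,\cA) > 0$ that excludes Case~2, in which a second positive recurrent class (the absorbing singleton $\{(1,\cA)\}$ alongside $\mathbb{Y}^\phi$ when $\mathcal{T}(\phi) > 1$) would break uniqueness. I would also remark that aperiodicity of the chain on $\mathbb{Y}^\phi$ plays no role in the uniqueness claim — it would only be relevant for convergence of the time-$k$ marginals — so no separate periodicity argument is needed at this stage.
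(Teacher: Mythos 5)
Your argument is correct and matches the paper's reasoning: the paper treats the corollary as an immediate consequence of the preceding Case~1/Case~2 analysis (the hypothesis $\phi(1,\cA)>0$ ruling out the second recurrent class) together with standard finite-state Markov chain theory, which is exactly what you spell out. Your explicit bookkeeping step --- that stationary PMFs vanish on transient states and correspond bijectively to the unique stationary PMF of the irreducible chain on $\mathbb{Y}^\phi$ --- is just a more detailed rendering of the same route, and your remark that aperiodicity is not needed here is accurate.
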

\vspace{.1 in}
\subsubsection{Existence Of A Policy For $\overline{\bY}$
With An Identical Steady-State Server State PMF
In $\bX^\theta$} One of key facts which we will make use of 
in our analysis is that, for any stabilizing policy 
$\theta$ in $\Theta_S(\lambda)$, we can find a policy
$\phi$ in $\Phi_R^+$ which achieves the same steady-state
distribution of server state. To this end, we first define,
for each $\overline{\by}$ in $\mathbb{Y}$, 
\beqan
\mathbb{Q}^{\overline{\by}} \Eqdef \{q \in \N \ | \ 
	(\overline{\by},q)\in \mathbb{X} \}.
\eeqan

\begin{defn}[Policy projection map $\mathscr{Y}$] 
Given $\lambda$ in $(0,\onu^*)$, we define a mapping
$\mathscr{Y}: \Theta_S(\lambda) \mapsto \Phi_R$ as follows:
\beqan
\mathscr{Y}(\theta)\Eqdef \phi^{\theta}, \ \theta 
	\in \Theta_S(\lambda), 
\eeqan
where $\phi^{\theta}:\mathbb{Y} \rightarrow [0,1]$ is specified as:
\beqa
\label{QMapDef}
\phi^{\theta}(\overline{\by}) 
\Eqdef \frac{\sum_{q \in \mathbb{Q}^{\overline{\by}}}
	\theta(\overline{\by},q) \pi^{\theta}(\overline{\by},q)}
	{\sum_{q \in \mathbb{Q}^{\overline{\by}}} 
	\pi^{\theta}(\overline{\by},q)}, 
	\ \overline{\by} \in \mathbb{Y}.
\eeqa 
\end{defn}

We first present a lemma that proves useful
in our analysis. 

\begin{lemma}
	\label{lemma:PhiThetaPos}
For every stabilizing policy $\theta$ in 
$\Theta_S(\lambda)$, we have
$\phi^\theta(1, \cA) > 0$.
\end{lemma}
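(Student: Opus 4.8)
The plan is to argue by contradiction: suppose $\phi^\theta(1,\cA)=0$ for some stabilizing policy $\theta\in\Theta_R(\lambda)$. By the definition of the projection map $\mathcal{Q}$ in (\ref{QMapDef}), $\phi^\theta(1,\cA)=0$ means that $\theta(1,\cA,q)=0$ for every $q\in\mathbb{L}^{(1,\cA)}$ with $\pi^\theta(1,\cA,q)>0$; that is, whenever the system is in a positive-probability state of the form $(1,\cA,q)$, the scheduler rests almost surely. Note $q\geq 1$ for such states, since $(1,\cA,0)$ forces rest anyway by (\ref{ActionConstraints}) and contributes nothing to the numerator. So on the recurrent class $\mathbb{C}_\theta$, the action-dependent state can never leave value $1$ once it reaches $(1,\cA,\cdot)$: from $(1,\cA,q)$ the only action taken is $\mathscr{R}$, which keeps $S$ at $1$ (by (\ref{Def-SDynamics}b), $P_{S_{k+1}|S_k,A_k}(s'|1,\mathscr{R})=1$ only for $s'=1$) and either keeps $q$ or increments it.

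Next I would show this traps the queue. Under $\theta$ conditioned on being in states $(1,\cA,q)$, the only transitions are $q\mapsto q$ (prob. $1-\lambda$) or $q\mapsto q+1$ (prob. $\lambda$), by (\ref{eq:WQR}); the server never becomes busy and never completes a task, so $Q$ is nondecreasing along any such sample path and strictly increases infinitely often (since $\lambda>0$). Hence once the chain enters $\{(1,\cA,q):q\geq 1\}$ it stays there forever and $Q_k\to\infty$. This means either $\{(1,\cA,q):q\geq1\}$ contains a recurrent communicating class that is not positive recurrent (indeed it is transient in the usual sense — $Q\to\infty$), or these states are transient; either way we need to check reachability. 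The key point: is $(1,\cA,q)$ actually in $\mathbb{C}_\theta$ for some $q$? By Lemma~\ref{lemma:UniquePMF}, $\mathbb{C}_\theta$ is the unique positive recurrent class and $\pi^\theta$ is supported on it. If some $(1,\cA,q)$ has $\pi^\theta(1,\cA,q)>0$ it lies in $\mathbb{C}_\theta$, but we just argued the chain leaves $\mathbb{C}_\theta$-consistency by drifting to infinity — contradicting positive recurrence of $\mathbb{C}_\theta$. If instead $\pi^\theta(1,\cA,q)=0$ for all $q\geq 1$, then the denominator in (\ref{QMapDef}) is zero and $\phi^\theta(1,\cA)$ is defined by a $0/0$; I would need the paper's convention here, but presumably such states are then not in $\mathbb{C}_\theta$, and the natural reading is that $\phi^\theta$ is assigned a value making it consistent — or, more cleanly, I should show $(1,\cA)$-states \emph{must} have positive stationary probability.

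That last point is where the real work lies, and it is the main obstacle. I need: in any stable system, some state $(1,\cA,q)$ with $q\geq1$ is visited with positive stationary frequency, OR handle the degenerate case. The cleanest route: $S_0=1$ and $Q_0=0$ by initialization, so $\bX_0=(1,\cA,0)$; since the number of transient states is finite (Definition~\ref{def:Stability}(iii)) and $\mathbb{C}_\theta$ is reached from $\bX_0$ with probability one, and from $(1,\cA,0)$ a task arrives (prob. $\lambda$) leading to $(1,\cA,1)$ (if $\theta$ rests there) or $(1,\cB,1)$ (if it works), the chain from the start must pass through states with $S=1$. I would trace forward from $(1,\cA,0)$: if $\theta$ rests at every reachable $(1,\cA,q)$, the queue grows without bound before $S$ ever increases, contradicting stability as above; so $\theta$ must work at some reachable $(1,\cA,q^\star)$, and that state — being on a path from $\bX_0$ into $\mathbb{C}_\theta$ and revisited, or at least the downstream recurrent behavior forcing returns near $S=1$ — yields $\phi^\theta(1,\cA)>0$. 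I expect the proof writers to make this precise via a first-passage / drift argument showing that if the scheduler never works while $S=1$, the negative drift needed for positive recurrence is impossible, so $\phi^\theta(1,\cA)>0$ is forced.
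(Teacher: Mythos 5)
You set up the right final contradiction, but there is a genuine gap exactly where you flag it: nothing in your argument establishes that some state $(1,\cA,q)$ carries positive stationary mass, and this is the actual content of the lemma. The paper's proof supplies precisely this ingredient by a positive-probability path argument \emph{inside} the recurrent class: starting from any $(s,w,q)\in\mathbb{C}_\theta$, in every epoch in which the server works the queue decreases with positive probability (completion, no arrival, $S$ not increasing), and in every epoch in which it rests the action-dependent state decreases with positive probability (rest being forced once the queue empties); hence some state $(1,\cA,q')$ is reached in at most about $(s-1)+q$ epochs with positive probability. Since $\mathbb{C}_\theta$ is closed, $(1,\cA,q')\in\mathbb{C}_\theta$ and $\pi^\theta(1,\cA,q')>0$. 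Then all $(1,\cA,\tilde q)$ with $\tilde q\geq q'$ also lie in $\mathbb{C}_\theta$, and if $\theta(1,\cA,\tilde q)=0$ for every such $\tilde q$ the chain would rest forever with a nondecreasing queue, making those states transient---contradicting their membership in $\mathbb{C}_\theta$. So some $q^\star$ has both $\pi^\theta(1,\cA,q^\star)>0$ and $\theta(1,\cA,q^\star)>0$, which makes the numerator of (\ref{QMapDef}) positive; this also disposes of your $0/0$ worry, since the denominator is then positive as well.

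Your proposed repair---tracing forward from $\bX_0=(1,\cA,0)$---does not close the gap. Reachability from the initial state only shows the chain visits $S=1$ states before it is absorbed into $\mathbb{C}_\theta$; the reachable state $(1,\cA,q^\star)$ at which $\theta$ works that you extract this way may be transient, in which case $\pi^\theta(1,\cA,q^\star)=0$ and it contributes nothing to $\phi^\theta(1,\cA)$ in (\ref{QMapDef}). (The finite-transient-set condition in Definition~\ref{def:Stability} does rule out the ``rest at every reachable $(1,\cA,q)$'' scenario, but it does not by itself force any $(1,\cA,q)$ into the recurrent class, and your appeal to ``downstream recurrent behavior forcing returns near $S=1$'' is exactly the unproved step.) What must be proved, and what the paper proves, is that every state of $\mathbb{C}_\theta$ can reach a state $(1,\cA,\cdot)$ with positive probability, so that closedness of $\mathbb{C}_\theta$ (together with Lemma~\ref{lemma:UniquePMF}) puts such a state in the support of $\bpi^\theta$; once you have that, your drift/contradiction step finishes the proof in essentially the paper's way.
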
 
\begin{proof}
Please see Appendix~\ref{appen:PhiThetaPos}
for a proof. 
\end{proof}

An obvious implication of the lemma is that 
$\mathscr{Y}(\theta)$
belongs to $\Phi^+_R$ for every 
$\theta$ in $\Theta_S(\lambda)$, and there exists
a unique stationary PMF for 
$\overline{\bY}^{\mathscr{Y}(\theta)}$,
namely $\overline{\bpi}^{\mathscr{Y}(\theta)}$.

The following lemma shows that the 
steady-state PMF of the server state 
in $\bX$ under policy $\theta$ in 
$\Theta_S(\lambda)$ is identical to that of 
$\overline{\bY}$ under policy $\mathscr{Y}(\theta)$.

\begin{lemma}
	\label{lemma:PMFequivalence}
Suppose that $\theta \in \Theta_S(\lambda)$. Then, 
we have
\beqa
\overline{\pi}^{\mathscr{Y}(\theta)}(\overline{\by}) 
= \sum_{q\in\mathbb{Q}^{\overline{\by}}}
	\pi^\theta(\overline{\by},q), 
	\ \overline{\by} \in \mathbb{Y}.
	\label{eq:pi-relation}
\eeqa
\end{lemma}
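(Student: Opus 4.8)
The plan is to show that the server-state marginal of $\bpi^\theta$, namely $\nu(\overline{\by})\Eqdef\sum_{q\in\mathbb{L}^{\overline{\by}}}\pi^\theta(\overline{\by},q)$, is a stationary PMF of the finite-state chain $\overline{\bY}^{\mathcal{Q}(\theta)}$. By Lemma~\ref{lemma:PhiThetaPos} we have $\mathcal{Q}(\theta)\in\Phi_R^+$, so Corollary~\ref{coro:pibar} guarantees that $\overline{\bY}^{\mathcal{Q}(\theta)}$ has a \emph{unique} stationary PMF; hence $\nu$ must coincide with $\overline{\bpi}^{\mathcal{Q}(\theta)}$, which is precisely the assertion~(\ref{eq:pi-relation}). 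That $\nu$ is a PMF on $\mathbb{Y}$ is immediate since $\bpi^\theta$ is a PMF on $\mathbb{X}$, so the whole content lies in verifying the global balance equations for $\nu$ under the one-step kernel of $\overline{\bY}^{\mathcal{Q}(\theta)}$.

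I would start from the global balance equations satisfied by $\bpi^\theta$, which hold on all of $\mathbb{X}$: the stationary PMF, supported on $\mathbb{C}_\theta$ and extended by zero on the finitely many transient states, satisfies $\pi^\theta(\bx')=\sum_{\bx\in\mathbb{X}}\pi^\theta(\bx)\,P^{\theta}_{\bX}(\bx'|\bx)$ for every $\bx'\in\mathbb{X}$, where $P^{\theta}_{\bX}(\bx'|\bx)=\theta(\bx)\,P_{\bX_{k+1}|\bX_k,A_k}(\bx'|\bx,\mathscr{W})+(1-\theta(\bx))\,P_{\bX_{k+1}|\bX_k,A_k}(\bx'|\bx,\mathscr{R})$ is the transition kernel of $\bX^\theta$ (at $q=0$ the constraint~(\ref{ActionConstraints}) forces $\theta(\overline{\by},0)=0$, so only the $\mathscr{R}$-term is active there). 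Writing $\overline{P}(\overline{\by}'|\overline{\by},a)$ for $P_{\overline{\bY}_{k+1}|\overline{\bY}_k,\overline{A}_k}(\overline{\by}'|\overline{\by},a)$, I fix $\overline{\by}'\in\mathbb{Y}$, sum the balance equations over all $q'$ with $(\overline{\by}',q')\in\mathbb{X}$, and interchange the (nonnegative, countable) summations to obtain
\[\nu(\overline{\by}')=\sum_{\overline{\by}\in\mathbb{Y}}\ \sum_{q\in\mathbb{L}^{\overline{\by}}}\pi^{\theta}(\overline{\by},q)\,K_q(\overline{\by}'|\overline{\by}),\qquad K_q(\overline{\by}'|\overline{\by})\Eqdef\sum_{q'\in\mathbb{L}^{\overline{\by}'}}P^{\theta}_{\bX}\big((\overline{\by}',q')|(\overline{\by},q)\big),\]
so that the queue-marginalized evolution of the server state proceeds through the kernels $K_q$, which \emph{a priori} depend on $q$.

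The key step is the identity $K_q(\overline{\by}'|\overline{\by})=\theta(\overline{\by},q)\,\overline{P}(\overline{\by}'|\overline{\by},\mathscr{W})+(1-\theta(\overline{\by},q))\,\overline{P}(\overline{\by}'|\overline{\by},\mathscr{R})$ for every $\overline{\by}\in\mathbb{Y}$ and $q\in\mathbb{L}^{\overline{\by}}$. For $q\ge1$, the $\mathscr{W}$-contribution is exactly~(\ref{equivalenceXandY}), while the $\mathscr{R}$-contribution follows from~(\ref{eq:WQR}) and~(\ref{SDef}): marginalizing $q'$ out of $P_{W_{k+1},Q_{k+1}|\bX_k,A_k}(\cdot|\cdot,\mathscr{R})$ leaves the point mass at $\cA$, which is the $W$-kernel of $\overline{\bY}$ under $\mathscr{R}$, and the $S$-part agrees by~(\ref{SDef}). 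For $q=0$ one has $\overline{w}=\cA$ and $\theta(\overline{\by},0)=0$, so only the $\mathscr{R}$-term must survive, and the same marginalization of~(\ref{eq:WQR}) confirms $K_0(\overline{\by}'|\overline{\by})=\overline{P}(\overline{\by}'|\overline{\by},\mathscr{R})$. Since $\overline{P}(\overline{\by}'|\overline{\by},\mathscr{W})$ and $\overline{P}(\overline{\by}'|\overline{\by},\mathscr{R})$ do not depend on $q$, I pull them out of the $q$-sum to get
\[\nu(\overline{\by}')=\sum_{\overline{\by}\in\mathbb{Y}}\Big[\Big(\sum_{q\in\mathbb{L}^{\overline{\by}}}\theta(\overline{\by},q)\pi^{\theta}(\overline{\by},q)\Big)\overline{P}(\overline{\by}'|\overline{\by},\mathscr{W})+\Big(\sum_{q\in\mathbb{L}^{\overline{\by}}}(1-\theta(\overline{\by},q))\pi^{\theta}(\overline{\by},q)\Big)\overline{P}(\overline{\by}'|\overline{\by},\mathscr{R})\Big].\]
By the definition~(\ref{QMapDef}) of $\mathcal{Q}$, $\sum_{q}\theta(\overline{\by},q)\pi^{\theta}(\overline{\by},q)=\phi^\theta(\overline{\by})\,\nu(\overline{\by})$, hence $\sum_{q}(1-\theta(\overline{\by},q))\pi^{\theta}(\overline{\by},q)=(1-\phi^\theta(\overline{\by}))\,\nu(\overline{\by})$; this is valid even for $\overline{\by}$ with $\nu(\overline{\by})=0$, where both sums vanish regardless of the convention assigned to $\phi^\theta(\overline{\by})$. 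Substituting, $\nu(\overline{\by}')=\sum_{\overline{\by}\in\mathbb{Y}}\nu(\overline{\by})\big[\phi^\theta(\overline{\by})\overline{P}(\overline{\by}'|\overline{\by},\mathscr{W})+(1-\phi^\theta(\overline{\by}))\overline{P}(\overline{\by}'|\overline{\by},\mathscr{R})\big]$, and the bracketed quantity is the one-step transition probability of $\overline{\bY}^{\mathcal{Q}(\theta)}$ from $\overline{\by}$ to $\overline{\by}'$. Thus $\nu$ is stationary for $\overline{\bY}^{\mathcal{Q}(\theta)}$, and uniqueness (Corollary~\ref{coro:pibar}) gives $\nu=\overline{\bpi}^{\mathcal{Q}(\theta)}$.

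The main obstacle — essentially the only nonroutine point — is the key-step identity for $K_q$: that marginalizing the queue out of one step of $\bX^\theta$ leaves, for each fixed $q$, exactly the $\overline{\bY}$-kernel evaluated at the action realized in state $(\overline{\by},q)$, with all $q$-dependence confined to the scalar $\theta(\overline{\by},q)$. This is where~(\ref{equivalenceXandY}), (\ref{eq:WQR}) and~(\ref{SDef}) enter, and where the boundary case $q=0$ and the states with $\nu(\overline{\by})=0$ require a brief separate check. Once this ``$q$-independence of the aggregated transition kernel'' is established, the averaging built into the definition of $\mathcal{Q}$ in~(\ref{QMapDef}) is precisely what is needed to close the balance equation, and everything else is bookkeeping (Tonelli for the interchange of sums, and the fact that $\bpi^\theta$ satisfies global balance on all of $\mathbb{X}$, which follows from Lemmas~\ref{lemma:UniquePMF} and the definition of $\bpi^\theta$ in Definition~\ref{def:pi}).
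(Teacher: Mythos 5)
Your proposal is correct and follows essentially the same route as the paper's proof: both verify that the queue-marginal of $\bpi^\theta$ satisfies the balance equations of $\overline{\bY}^{\mathcal{Q}(\theta)}$ — using the fact that marginalizing the queue out of the one-step kernel of $\bX^\theta$ under a fixed action yields the $q$-independent kernel of $\overline{\bY}$, together with the averaging in (\ref{QMapDef}) — and then invoke the uniqueness of the stationary PMF guaranteed by Lemma~\ref{lemma:PhiThetaPos} and Corollary~\ref{coro:pibar}. Your explicit treatment of the $q=0$ boundary and of states with vanishing marginal is a welcome bit of extra care, but the substance matches the paper's argument.
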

\begin{proof}
A proof is provided in Appendix \ref{appen:PMFequivalence}.
\end{proof}

\section{Proofs of The Main Results}
\label{SectionMainProofs}

In this section, we begin with a comment on the 
long-term average departure rate of completed tasks
when the system is stable. Then, we provide the
proofs of Theorems~\ref{thm:necessity} and 
\ref{thm:sufficiency} in Sections~\ref{subsec:nec}~and~\ref{subsec:suf}, respectively.

\begin{remark} 
\label{nuislambda}
Recall from our discussion in Section~
\ref{sec:ProbFormModel}
that, under a stabilizing policy $\theta$ in $
\Theta_S(\lambda)$, there exists a unique stationary 
PMF $\bpi^\theta$. Consequently, the average number of 
completed tasks per epoch converges 
almost surely as $k$ goes to infinity. In other words, 
\beqan
\lim_{k \to \infty} \frac{\sum_{\tau=0}^{k-1} 
	D^{\theta}_\tau}{k} 
\myeq \sum_{\mathbf{x} \in \mathbb{X}} \mu (s) 
\pi^{\theta}(\mathbf{x}) \theta(\mathbf{x}) 
\Eqdef \nu^\theta \ \mbox{almost surely,}
\eeqan 
where $s$, $w$ and $q$ are the coordinates 
of {${\bx = (s,w,q)}$}, and $D^{\theta}_\tau$ is the
number of tasks that the server completes
during epoch $\tau$ under policy $\theta$.
We call $\nu^\theta$ the {\em service rate} of 
$\theta$ (for the given arrival rate $\lambda > 0$). 
Moreover, because $\theta$ is assumed to be a stabilizing 
policy, we have $\nu^\theta = \lambda$. 
\end{remark}

\subsection{Proof of Theorem~\ref{thm:necessity}}
\label{subsec:nec}

In order to prove Theorem~\ref{thm:necessity}, we make
use of a similar notion of {\em
service rate} of $\overline{\bY}^\phi$, which
can be viewed in most cases as the average number 
of completed tasks per epoch. 
{\bf (Step 1)} We first establish that,
for every stabilizing policy $\theta$, we can 
find a {\em related} policy 
$\phi$ in $\Phi_R$ whose
service rate equals that of $\theta$ or, 
equivalently, the arrival rate $\lambda$. 
{\bf (Step 2)} We prove that $\onu^*$ 
in (\ref{eq:lambda*}) equals the maximum
service rate achievable by any policy in $\Phi_R$. 
Together, they tell us $\lambda \leq \onu^*$. 

\vspace{.1 in}
\noindent {\bf Service rate of
$\overline{\bY}^\phi$:} 
The {\it service rate} 
associated with $\overline{\bY}^{\phi}$ under 
policy $\phi$ in $\Phi_R$ is defined as follows: for 
each $\phi$ in $\Phi_R$, let
$\overline{\boldsymbol{\Pi}}(\phi)$ be the set 
of stationary PMFs of $\overline{\bY}^\phi$.
By Corollary~\ref{coro:pibar}, 
for any $\phi$ in $\Phi^+_R$, there is a unique
stationary PMF and $\overline{\boldsymbol{\Pi}}
(\phi)$ is a singleton.
The service rate of
$\phi$ in $\Phi_R$ is defined to be 
\beqa
\overline{\nu}^{\phi} 
\Eqdef \sup_{\overline{\bpi} \in 
	\overline{\boldsymbol{\Pi}}(\phi)}
	\Big( \sum_{\overline{\by} \in \mathbb{Y}} \mu(\overline{s}) 
	\overline{\pi}(\overline{\by}) \phi(\overline{\by})
	\Big).
	\label{eq:onu} 
\eeqa 
Recall that
$\overline{\by}$ is the pair $(\overline{s},
\overline{w})$ taking values in $\mathbb{Y}$. 
\myskip

\noindent {\bf Step 1:} 
The following lemma illustrates that the long-term service
rate achieved by $\mathscr{Y}(\theta)$ in $\Phi^+_R$ 
equals that of $\theta$. 

\begin{lemma}
\label{lem:lambdaisnu}
Suppose that $\theta$ is a stabilizing policy in 
$\Theta_S(\lambda)$. 
Then, $\overline{\nu}^{\mathscr{Y}(\theta)} = 
\nu^\theta = \lambda$. 
\end{lemma}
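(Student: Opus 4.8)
The plan is to establish the chain $\overline{\nu}^{\mathcal{Q}(\theta)} = \nu^\theta = \lambda$ by combining three ingredients already available: Lemma~\ref{lemma:PMFequivalence}, which identifies the marginal of $\bpi^\theta$ on the server-state coordinates with the stationary PMF $\overline{\bpi}^{\mathcal{Q}(\theta)}$; the definition \eqref{QMapDef} of the policy projection $\mathcal{Q}$; and Remark~\ref{nuislambda}, which gives both the closed form of $\nu^\theta$ and the identity $\nu^\theta = \lambda$. The second equality $\nu^\theta = \lambda$ is immediate from Remark~\ref{nuislambda}, so the substance is the first equality.

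First I would unwind the definition of $\overline{\nu}^{\mathcal{Q}(\theta)}$. Since Lemma~\ref{lemma:PhiThetaPos} guarantees $\mathcal{Q}(\theta) \in \Phi_R^+$, Corollary~\ref{coro:pibar} tells us $\overline{\boldsymbol{\Pi}}(\mathcal{Q}(\theta))$ is the singleton $\{\overline{\bpi}^{\mathcal{Q}(\theta)}\}$, so the supremum in \eqref{eq:onu} collapses and
\[
\overline{\nu}^{\mathcal{Q}(\theta)} = \sum_{\overline{\by}\in\mathbb{Y}} \mu(\overline{s})\,\overline{\pi}^{\mathcal{Q}(\theta)}(\overline{\by})\,\phi^{\theta}(\overline{\by}).
\]
Next I would substitute the defining formula \eqref{QMapDef} for $\phi^\theta(\overline{\by})$ and Lemma~\ref{lemma:PMFequivalence} for $\overline{\pi}^{\mathcal{Q}(\theta)}(\overline{\by})$. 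The factor $\sum_{q\in\mathbb{L}^{\overline{\by}}}\pi^\theta(\overline{\by},q)$ coming from Lemma~\ref{lemma:PMFequivalence} cancels exactly against the denominator of \eqref{QMapDef}, leaving
\[
\overline{\nu}^{\mathcal{Q}(\theta)} = \sum_{\overline{\by}\in\mathbb{Y}} \mu(\overline{s}) \sum_{q\in\mathbb{L}^{\overline{\by}}}\theta(\overline{\by},q)\,\pi^\theta(\overline{\by},q).
\]
Writing $\overline{\by}=(s,w)$ and observing that $\{(\overline{\by},q) : \overline{\by}\in\mathbb{Y},\, q\in\mathbb{L}^{\overline{\by}}\}$ is precisely an enumeration of $\mathbb{X}$, the double sum is exactly $\sum_{\bx\in\mathbb{X}}\mu(s)\,\pi^\theta(\bx)\,\theta(\bx)$, which is $\nu^\theta$ by Remark~\ref{nuislambda}. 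Combining with $\nu^\theta=\lambda$ finishes the proof.

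The only delicate point — and the one I would treat carefully rather than as routine — is the cancellation of the denominator in \eqref{QMapDef}: one must be sure that $\sum_{q\in\mathbb{L}^{\overline{\by}}}\pi^\theta(\overline{\by},q)$ is strictly positive for every $\overline{\by}$ appearing with a nonzero contribution, so that the ratio is well defined and the cancellation legitimate. For $\overline{\by}$ with this sum equal to zero the marginal $\overline{\pi}^{\mathcal{Q}(\theta)}(\overline{\by})$ vanishes by Lemma~\ref{lemma:PMFequivalence}, so such terms drop out of both sums and cause no trouble; it is worth noting this explicitly. Beyond that, the argument is a bookkeeping exercise: reindexing the sum over $\mathbb{Y}\times\mathbb{L}^{\overline{\by}}$ as a sum over $\mathbb{X}$, and invoking the already-proven Lemmas~\ref{lemma:PhiThetaPos}, \ref{lemma:PMFequivalence} and Remark~\ref{nuislambda}. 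No new probabilistic input is needed.
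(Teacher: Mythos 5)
Your proposal is correct and follows essentially the same route as the paper's proof: expand $\overline{\nu}^{\mathcal{Q}(\theta)}$ using the unique stationary PMF, substitute Lemma~\ref{lemma:PMFequivalence} and the definition \eqref{QMapDef} so the normalizing denominator cancels, reindex the double sum over $\mathbb{Y}$ and $q$ as a sum over $\mathbb{X}$ to recognize $\nu^\theta$, and invoke Remark~\ref{nuislambda} for $\nu^\theta=\lambda$. Your explicit remark that states $\overline{\by}$ with vanishing marginal drop out of both sides is a small extra care the paper leaves implicit, but it does not change the argument.
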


\begin{proof}
First, note 
\begin{align*}
\overline{\nu}^{\mathscr{Y}(\theta)}
& \overset{(a)}{=} \sum_{\overline{\by} \in \mathbb{Y}} 
	\mu(\overline{s}) \ \overline{\pi}^{\mathscr{Y}(\theta)}	
		(\overline{\by}) \ \phi(\overline{\by}) \\
& \overset{(b)}{=} \sum_{\overline{\by} \in \mathbb{Y}}
	\mu(\overline{s}) 
	\Big( \sum_{q\in\mathbb{Q}^{\overline{\by}}} \pi^{\theta}		
		(\overline{\by},q) \Big) \theta(\overline{\by},q) \\
& \overset{(c)}{=} \sum_{\bx\in\mathbb{X}} \mu(s)\pi^{\theta}(\mathbf{x}) \theta(\mathbf{x}) \overset{(d)}{=} \nu^{\theta},
\end{align*} where $(b)$ follows from 
Lemma~\ref{lemma:PMFequivalence}, 
and $(c)$ results from rearranging the summations 
in terms of $\bx = (s,w,q)$. Finally $(a)$ and $(d)$ hold by 
definition. The lemma follows from Remark~
\ref{nuislambda} that $\nu^{\theta}$ is equal to $\lambda$.
\end{proof}

As $\mathscr{Y}(\theta)$ belongs to $\Phi^+_R$ as explained 
earlier, by Lemma~\ref{lem:lambdaisnu},
\beqa
\lambda = \overline{\nu}^{\mathscr{Y}(\theta)}
	\leq \max_{\phi \in \Phi_R} \overline{\nu}^\phi
	\Eqdef \onu^{**}. 
	\label{eq:lambdalessrandom}
\eeqa

\noindent {\bf Step 2:} We shall prove that 
$\onu^* = \onu^{**}$ in two steps. 
First, we establish that there is a stationary 
{\em deterministic} policy that achieves 
$\onu^{**}$. Then, we show that,
for any stationary deterministic policy, we can find
a deterministic {\em threshold} policy that achieves
the same service rate, thereby completing
the proof of Theorem~\ref{thm:necessity}. 

Let us define $\Phi_D$ 
to be a subset of $\Phi_R$, which consists only of  
stationary {\em deterministic} policies for $\overline{\bY}$. 
In other words, if $\phi
\in \Phi_D$, then $\phi(\by) \in \{0, 1\}$ for all 
$\by \in 
\mathbb{Y}$. 
Theorem 9.1.8 in \cite[p. 451]{Puterman2005Markov-decision} 
tells us that if (i) the state space is finite 
and (ii) the set of admissible
actions is finite for every state, there exists a
deterministic stationary optimal policy. Thus,
\begin{equation}
\label{eq:lambdalessdeterm}
\onu^{**} = \max_{\phi \in \Phi_R} \overline{\nu}^{\phi}
=\max_{\phi \in \Phi_D} 
 	\overline{\nu}^{\phi}.
\end{equation} 

While the equality in (\ref{eq:lambdalessdeterm}) 
simplifies the computation of the maximum 
service rate achievable by 
some $\phi$ in $\Phi_R$, 
it requires a search over a set of $2^{n_s}$
deterministic policies in the worst case. Thus, 
when $n_s$ is large, it can be computationally 
expensive. As we
show shortly, it turns out that 
the maximum service rate
on the RHS of (\ref{eq:lambdalessdeterm})
can always be achieved by a
deterministic {\em threshold} policy of the form 
in (\ref{eq:ThresholdPolicy}).

\begin{defn} Recall from (\ref{eq:ThresholdPolicy})
that, for a given $\tau$ in $\mathbb{S} \cup \{n_s+1\}$, 
$\phi_{\tau}$ is the following deterministic threshold 
policy for $\overline{\bY}$:
\begin{equation*}
\phi_{\tau}(\overline{\by}) 
= \begin{cases} 
	0 & \text{if $\overline{s} \geq \tau$ 
		and $\overline{w}=\mathcal{A}$} \\ 
1 & \text{otherwise} \end{cases}
\end{equation*} 
\end{defn} 

The following lemma shows that, for each deterministic 
policy $\phi$ satisfying ${\phi(1,\mathcal{A})=1}$,
there is a deterministic {\em threshold} policy with 
the same service rate.

\begin{lemma}	\label{lemma:perfthreshequal}
Suppose that $\phi$ is a policy in $\Phi_D$ with 
$\phi(1, \cA) = 1$. Then,
$\overline{\nu}^{\phi} = \overline{\nu}^{\phi_{\tau'}}$,
where $\tau' = \mathcal{T}(\phi)+1$.
\end{lemma}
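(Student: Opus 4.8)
The plan is to exploit the Case~1 structural analysis of $\overline{\bY}^\phi$ carried out in Section~\ref{Sec:auxiliaryMDP}. Since $\phi(1,\cA)=1>0$, the policy $\phi$ lies in $\Phi^+_R$, so by Corollary~\ref{coro:pibar} it has a unique stationary PMF $\overline{\bpi}^{\phi}$, and by (\ref{eq:Yphi}) this PMF is supported on the single positive recurrent communicating class $\mathbb{Y}^\phi=\{(\os,\ow)\in\mathbb{Y} : \os\geq\mathcal{T}(\phi)\}$. The first small observation is that $\tau'=\mathcal{T}(\phi)+1$ is a legitimate threshold, i.e. $\tau'\in\mathbb{S}\cup\{n_s+1\}$, because $1\leq\mathcal{T}(\phi)\leq n_s$ (the lower bound holding precisely because $\phi(1,\cA)=1$). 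The second is that $\phi_{\tau'}$ also lies in $\Phi^+_R$ (since $1<\tau'$ forces $\phi_{\tau'}(1,\cA)=1$) and that $\mathcal{T}(\phi_{\tau'})=\mathcal{T}(\phi)$, so by the same Case~1 analysis its unique recurrent class is $\mathbb{Y}^{\phi_{\tau'}}=\mathbb{Y}^\phi$.

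Next I would show that $\phi$ and $\phi_{\tau'}$ agree on every state of $\mathbb{Y}^\phi$. On a busy state $(\os,\cB)$, both policies are forced by (\ref{ActionConstraintsauxiliary}) to choose $\mathscr{W}$, hence both equal $1$. On $(\mathcal{T}(\phi),\cA)$, $\phi$ equals $1$ by the very definition of $\mathcal{T}(\phi)$, while $\phi_{\tau'}(\mathcal{T}(\phi),\cA)=1$ since $\mathcal{T}(\phi)<\tau'$. On $(\os,\cA)$ with $\os>\mathcal{T}(\phi)$, the maximality in the definition of $\mathcal{T}(\phi)$ gives $\phi(\os,\cA)=0$, and $\phi_{\tau'}(\os,\cA)=0$ because $\os\geq\tau'$. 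Therefore $\phi$ and $\phi_{\tau'}$ coincide on all of $\mathbb{Y}^\phi$.

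Since the transition probabilities in (\ref{Eqdef:MDPRED}) depend only on the current state and the action selected there, the two finite-state Markov chains $\overline{\bY}^\phi$ and $\overline{\bY}^{\phi_{\tau'}}$, restricted to the common recurrent class $\mathbb{Y}^\phi$, are literally the same irreducible chain. Hence they have the same (unique) stationary PMF on $\mathbb{Y}^\phi$, and both assign probability zero to every state outside $\mathbb{Y}^\phi$, so $\overline{\bpi}^{\phi}=\overline{\bpi}^{\phi_{\tau'}}$ on all of $\mathbb{Y}$. Plugging this into the definition (\ref{eq:onu}) of the long-term service rate — where, both policies being in $\Phi^+_R$, the supremum is just evaluation at the unique stationary PMF — and using that the summand vanishes off $\mathbb{Y}^\phi$ (where $\overline{\pi}$ is zero) together with $\phi=\phi_{\tau'}$ on $\mathbb{Y}^\phi$, gives $\overline{\nu}^{\phi}=\overline{\nu}^{\phi_{\tau'}}$.

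I do not anticipate a genuine obstacle: the argument is essentially a bookkeeping reduction to the already-established recurrence structure of $\overline{\bY}^\phi$. The only point that requires care is the use of the maximality in the definition of $\mathcal{T}$, namely that a deterministic $\phi$ must rest at \emph{every} available state strictly above $\mathcal{T}(\phi)$ — this is exactly what makes $\phi$ indistinguishable from the threshold policy $\phi_{\tau'}$ on the recurrent class, and hence is what drives the equality of long-term service rates.
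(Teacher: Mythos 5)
Your proposal is correct and follows essentially the same route as the paper: the paper's proof establishes (via its facts F1--F4 and the intermediate Lemma~\ref{lemma:EquivSD}) that states below $\mathcal{T}(\phi)$ are transient, that $\phi$ and $\phi_{\tau'}$ share the recurrent class $\mathbb{Y}^\phi$ on which they coincide, and hence have equal stationary PMFs and equal long-term service rates. You simply inline the content of Lemma~\ref{lemma:EquivSD} by checking agreement of the two policies state-by-state on $\mathbb{Y}^\phi$, which is the same argument in slightly different packaging.
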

\begin{proof}
We begin with the following facts that will be 
utilized in the proof. 

\begin{itemize}
\item[{\bf F1.}] The postulation that the server is 
non-preemptive, which we formally impose in 
(\ref{ActionConstraintsauxiliary}), means that after 
the sever initiates work on a task, it will be allowed 
to rest only after the task is completed. 
This implies that any policy $\phi$ in $\Phi_D$ 
satisfies $\phi(\overline{s},\mathcal{B})=1$ for all 
$\overline{s}$ in $\mathbb{S}$.

\item[{\bf F2.}] From (\ref{Def-SDynamics}) and (\ref{SDef}), 
we know that $S_{k+1}$ is never smaller than $S_{k}$ if 
the server works during epoch $k$.
\end{itemize}

From {\bf F1} and {\bf F2} stated above, we conclude that 
the following holds for any $\sigma$ in $\mathbb{S}$:
\begin{equation} \label{pushright} 
\phi(\sigma,\mathcal{A})=1 
\implies \Pr(\overline{S}^{\phi}_{k+1} \geq \sigma 
	\ | \ \overline{S}^{\phi}_k = \sigma) 
	= 1
\end{equation} 
Here, we recall that $\overline{\bY}_k^{\phi} 
= (\overline{S}_k^{\phi},\overline{W}_k^{\phi})$ 
represents the state of $\overline{\bY}^{\phi}$ 
at epoch $k$. 

The implication in (\ref{pushright}) leads us to the
following important observation: suppose that a 
deterministic policy $\phi$ in $\Phi_D$ satisfies
$\phi(\sigma, \mathcal{A}) = 1$ for some $\sigma$ 
greater than $1$. Then, all states $(\overline{s},
\overline{w})$ with $\overline{s}$ less than 
$\sigma$ are transient and, therefore, 
\beqa
\label{eq:transientstate}
\overline{\pi}^{\phi}(\overline{s},\overline{w}) = 0
\ \mbox{ if } \ \overline{s} < \sigma. 
\eeqa

The reason for this is 
that (i) because $\phi(1, \cA) =1$, all states
$(\os, \ow)$ with $\os < \sigma$ communicate 
with every state $(\os', \ow')$ with $\os'
\geq \sigma$, 
and (ii) none of the states $(\os', \ow')$
with $\os' \geq \sigma$ communicates with 
any state $(\os, \ow)$ with $\os < \sigma$
since $\phi(\sigma, \cA) = 
\phi(\sigma, \cB) = 1$. 
\myskip

\begin{itemize}
\item[{\bf F3.}]
The above observation means that, given 
a deterministic policy $\phi$ in $\Phi_D$, every 
state $(\os, \ow)$ with $\os < \mathcal{T}(\phi)$
is transient and $\opi^\phi(\os, \ow) = 0$. 

\item[{\bf F4.}]
Moreover, the remaining states $(\os, \ow)$ in 
$\mathbb{Y}^\phi$ with $\os \geq \mathcal{T}(\phi)$
communicate with each other and their period
is one (because it is possible to transition from
$(\mathcal{T}(\phi), \cA)$ to itself). Since 
$\mathbb{Y}^\phi$ is finite, it forms an aperiodic
PRCC of $\overline{\bY}^\phi$.

\end{itemize}

We will complete the proof of Lemma
\ref{lemma:perfthreshequal}
with the help of following lemma. 

\begin{lemma}	\label{lemma:EquivSD}
Suppose that $\phi$ and $\tilde{\phi}$ are two
deterministic policies in $\Phi_D$ satisfying
$\phi(1, \cA) = \tilde{\phi}(1, \cA) = 1$. Then, 
\beqa
\label{eqTimpleqpi}
\mathcal{T}(\tilde{\phi}) = \mathcal{T}(\phi) 
\implies \overline{\bpi}^{\tilde{\phi}} 
	= \overline{\bpi}^{\phi}
\eeqa 
\end{lemma}
\begin{proof}
If ${\mathcal{T}(\tilde{\phi}) = 
\mathcal{T}(\phi)}$, {\bf F3} states that, 
for any state $(\os, \ow)$ 
with $\os < \mathcal{T}(\phi)$, we have 
$\opi^\phi(\os, \ow) = \opi^{\tilde{\phi}}(\os, \ow)
= 0$. Furthermore, {\bf F4} tells us that the 
PRCCs are identical, 
i.e., $\mathbb{Y}^\phi = \mathbb{Y}^{\tilde{\phi}}$. 
From {\bf F1} and the definition of 
mapping $\mathcal{T}$, we conclude that, for all 
states $(\os, \ow)$ in $\mathbb{Y}^\phi$, 
$\tilde{\phi}(\os, \ow) = \phi(\os, \ow)$. 
This in turn means that, for all $(\os, \ow)$
in $\mathbb{Y}^\phi$, we have 
$\overline{\pi}^{\phi}(\os, \ow)
= \overline{\pi}^{\tilde{\phi}}(\os, \ow)$.
\end{proof}

Let us continue with the proof of Lemma
\ref{lemma:perfthreshequal}. 
Select $\tilde{\phi}=\phi_{\tau'}$ with 
$\tau' = \mathcal{T}(\phi)+1$. Then, 
Lemma~\ref{lemma:EquivSD} tells us that 
$\overline{\bpi}^\phi = \overline{\bpi}^{\tilde{\phi}}$. 
From the 
definition of $\overline{\nu}^\phi$ in (\ref{eq:onu}),
Lemma~\ref{lemma:perfthreshequal} 
is now a direct consequence of 
this observation and ${\bf F4}$. 
\end{proof}

The proof of Theorem~\ref{thm:necessity} will be 
completed with the help of
the following intermediate result. It tells us
that we can focus only on the deterministic policies
$\phi$ with $\phi(1, \cA) = 1$.  

\begin{lemma}	\label{lemma:condition1}
There exists a deterministic policy $\phi^*$ with
$\phi^*(1, \cA) = 1$, whose service rate
equals $\onu^{**}$. 
\end{lemma}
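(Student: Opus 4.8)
The plan is to start from a maximizer of $\overline{\nu}^\phi$ over $\Phi_D$, whose existence is guaranteed by \eqref{eq:lambdalessdeterm}, and to modify it on the single state $(1,\cA)$ so that it chooses $\mathscr{W}$ there, without decreasing its long-term service rate. Call the maximizer $\phi$, so $\overline{\nu}^\phi = \lambda^{**}$. If $\phi(1,\cA)=1$ already, we are done by taking $\phi^*=\phi$. Otherwise $\phi(1,\cA)=0$, and we define $\phi^*$ to agree with $\phi$ everywhere except at $(1,\cA)$, where we set $\phi^*(1,\cA)=1$. By construction $\phi^*\in\Phi_D$ and $\phi^*(1,\cA)=1$, so it suffices to show $\overline{\nu}^{\phi^*}\ge\lambda^{**}$; combined with $\overline{\nu}^{\phi^*}\le\lambda^{**}$ (which is automatic from \eqref{eq:lambdalessdeterm}), this gives the claim.

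To compare $\overline{\nu}^{\phi^*}$ and $\overline{\nu}^{\phi}$, I would analyze the recurrent structure under $\phi$ using the dichotomy already laid out in Section~\ref{Sec:auxiliaryMDP} (Case 1 / Case 2). Since $\phi(1,\cA)=0$, we are in Case 2. If $\mathcal{T}(\phi)>1$, then by {\bf F3}--{\bf F4} the unique ``working'' recurrent class is $\mathbb{Y}^\phi=\{(\os,\ow):\os\ge\mathcal{T}(\phi)\}$, the state $(1,\cA)$ is not in it, and $\phi^*$ agrees with $\phi$ on all of $\mathbb{Y}^\phi$ (they differ only at $(1,\cA)\notin\mathbb{Y}^\phi$); hence $\overline{\bpi}^{\phi^*}$ restricted to $\mathbb{Y}^\phi$ coincides with $\overline{\bpi}^{\phi}$ and the two service rates are equal. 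The remaining subcase is $\mathcal{T}(\phi)=0$, i.e. $\phi(\os,\cA)=0$ for every $\os$: then under $\phi$ the chain is absorbed at $(1,\cA)$, the server never works in steady state, and $\overline{\nu}^{\phi}=\sum_{\overline\by}\mu(\overline{s})\,\overline{\pi}^{\phi}(\overline\by)\,\phi(\overline\by)=0$. Since service rates are nonnegative and $\overline{\nu}^{\phi}=\lambda^{**}$, this forces $\lambda^{**}=0$, so trivially $\overline{\nu}^{\phi^*}\ge 0=\lambda^{**}$ and $\phi^*$ (which has $\phi^*(1,\cA)=1$) works.

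The one subtlety I expect to be the main obstacle is the bookkeeping in the subcase $\mathcal{T}(\phi)>1$: one must check carefully that flipping the action at the transient state $(1,\cA)$ genuinely does not perturb the stationary distribution on the recurrent class. This follows because $(1,\cA)$ lies strictly below $\mathcal{T}(\phi)$, so by {\bf F3} it is transient under $\phi$, and by {\bf F2} (the action-dependent state is non-decreasing while working) the recurrent class $\mathbb{Y}^\phi$ cannot be re-entered from below once left; changing the behavior at a transient state alters neither the identity of the recurrent class nor the stationary PMF on it, by the standard theory of finite Markov chains \cite{GS}. With that observation in hand, Lemma~\ref{lemma:EquivSD} (or a direct appeal to {\bf F4}) closes the argument, and taking $\phi^*$ as defined completes the proof.
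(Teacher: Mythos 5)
Your proof is correct and follows essentially the same route as the paper's: when the deterministic maximizer rests at $(1,\cA)$, the positive recurrent class $\mathbb{Y}^{\phi}$ above $\mathcal{T}(\phi)$ is unaffected by altering the policy at states below it, so the stationary PMF on that class and hence the long-term service rate are preserved. The only cosmetic differences are that the paper flips the action to $\mathscr{W}$ at all states with $\os \leq \mathcal{T}(\phi)$ rather than only at $(1,\cA)$, and it dismisses the $\mathcal{T}(\phi)=0$ case by invoking $\lambda^{**}>0$ instead of your (equally valid) vacuous-case handling.
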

\begin{proof}
A proof can be found in Appendix~\ref{appen:condition1}. 
\end{proof}

Proceeding with the proof of the theorem, 
Lemma~\ref{lemma:condition1} tells us that some
deterministic policy $\phi^*$ with $\phi^*(1, \cA) = 1$
achieves the service rate equal to 
$\onu^{**}$. This, together with Lemma
\ref{lemma:perfthreshequal}, proves that 
there exists a deterministic {\em threshold} policy that
achieves $\onu^{**}$ and, as a result, we must have 
$\onu^* = \onu^{**}$.

\subsection{Proof of Theorem \ref{thm:sufficiency}}
\label{subsec:suf}

As mentioned in Section~\ref{subsec:Related}, 
existing results, such as \cite{Baccelli94}, 
can be used to construct a proof of the theorem. Here, 
we provide a more direct proof of the theorem by 
contradiction: suppose that 
the theorem is false and there exists an arrival rate
$\lambda_1 < \onu^*$ for which the system is not 
stable under the policy $\theta_{\tau^*}$. We demonstrate
that this leads to a contradiction. 

For notational convenience, we denote the unique 
stationary distribution of 
$\overline{\bY}^{\phi_{\tau^*}}$ by 
$\overline{\boldsymbol{\pi}}^{\tau^*}$. In addition, 
for each $\bx_0 \in \mathbb{X}$, we define two
sequences of distributions $\{\boldsymbol{\xi}_k^{\bx_0} 
: k \in \N \}$ and 
$\{\boldsymbol{\wp}_k^{\bx_0} : 
k \in \N \}$, where $\boldsymbol{\xi}_k^{\bx_0}$
and $\boldsymbol{\wp}_k^{\bx_0}$ are the distribution of the 
server state $\bY_k$ and the system $\bX_k$, 
respectively, 
at epoch $k \in \N$ under the policy $\theta_{\tau^*}$, 
conditional on $\{\bX_0 = \bx_0\}$.  

We make use of the following lemma to complete the
proof of the theorem. 

\begin{lemma} \label{lemma:DistrConvg1}
Suppose that the system is not stable under the
policy $\theta_{\tau^*}$. Then, for every 
$\varepsilon>0$ and initial state $\bx_0
\in \mathbb{X}$, 
there exists finite $T(\varepsilon, \bx_0)$ 
such that, for all $k \geq T(\varepsilon, \bx_0)$, 
we have 
$\big\Vert \boldsymbol{\xi}_k^{\bx_0} 
	- \overline{\boldsymbol{\pi}}^{\tau^*} 
		\big\Vert_1 
< \varepsilon$.
\end{lemma}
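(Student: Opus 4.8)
The plan is to use the fact that, under $\theta_{\tau^*}$, the server coordinate of $\bX^{\theta_{\tau^*}}$ moves exactly as the auxiliary chain $\overline{\bY}^{\phi_{\tau^*}}$ for as long as the queue is nonempty, and to combine this with the fact that, when the system is not stable, the queue abandons the empty‑queue set. Since we are in the setting $\lambda<\lambda^{*}$, we have $\lambda^{*}>0$, so the maximizer satisfies $\tau^{*}\geq 2$; hence $\phi_{\tau^{*}}\in\Phi_R^{+}$, $\overline{\boldsymbol{\pi}}^{\tau^{*}}$ is well defined, and by the Case~1 analysis of Section~\ref{Sec:auxiliaryMDP}, together with the positive‑probability self‑loop at $(\mathcal{T}(\phi_{\tau^{*}}),\cA)$ (complete the current task without incrementing $S$), the unique positive recurrent communicating class of $\overline{\bY}^{\phi_{\tau^{*}}}$ is aperiodic. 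Consequently there is, for each $\varepsilon>0$, an integer $N=N(\varepsilon)$ with $\Vert P^{n}_{\overline{\bY}^{\phi_{\tau^{*}}}}(\overline{\by},\cdot)-\overline{\boldsymbol{\pi}}^{\tau^{*}}\Vert_{1}<\varepsilon/2$ for every $\overline{\by}\in\mathbb{Y}$ and every $n\geq N$; uniformity over $\overline{\by}$ is automatic because $\mathbb{Y}$ is finite.

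The coupling step comes next. Because $\theta_{\tau^{*}}(s,w,q)=\phi_{\tau^{*}}(s,w)$ whenever $q\geq 1$, the identity (\ref{equivalenceXandY}) --- and its analogue for $\mathscr{R}$, which follows from (\ref{eq:WQR}) in the same way --- shows that the conditional law of $\bY_{k+1}$ given $\bX_{k}=(\overline{\by},q)$ with $q\geq 1$ is $P_{\overline{\bY}^{\phi_{\tau^{*}}}}(\overline{\by},\cdot)$, regardless of $q$. Fix $\bx_{0}$, $\varepsilon$, and $N=N(\varepsilon)$; for $k>N$ put $m=k-N$, let $\sigma=\inf\{j\geq m:Q_{j}=0\}$, and on an enlarged probability space construct a copy $\{\overline{\bY}_{j}\}_{j\geq m}$ of $\overline{\bY}^{\phi_{\tau^{*}}}$ with $\overline{\bY}_{m}=\bY_{m}$ that coincides with the server coordinate of $\bX^{\theta_{\tau^{*}}}$ on $[m,\sigma]$ and is run with fresh randomness afterwards --- the standard ``stop‑and‑switch'' construction, legitimate precisely because of the transition identity just noted. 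On $\{\sigma\geq k\}$ one has $\bY_{k}=\overline{\bY}_{k}$, so the coupling inequality gives $\Vert\boldsymbol{\xi}_{k}^{\bx_{0}}-\mathcal{L}(\overline{\bY}_{k})\Vert_{1}\leq 2\,\mathbb{P}(\sigma<k\mid\bX_{0}=\bx_{0})\leq 2\sum_{j=m}^{k-1}\mathbb{P}(Q_{j}=0\mid\bX_{0}=\bx_{0})$, while $\mathcal{L}(\overline{\bY}_{k})$, being the law of $\overline{\bY}^{\phi_{\tau^{*}}}$ after $N$ steps from the random state $\bY_{m}\in\mathbb{Y}$, is within $\varepsilon/2$ of $\overline{\boldsymbol{\pi}}^{\tau^{*}}$ by the first paragraph.

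It then remains to show that, when the system is not stable, $\mathbb{P}(Q_{j}=0\mid\bX_{0}=\bx_{0})\to 0$ as $j\to\infty$; granting this, choose $T_{0}$ so that this probability is below $\varepsilon/(4N)$ for $j\geq T_{0}$, and the triangle inequality gives the lemma with $T(\varepsilon,\bx_{0})=T_{0}+N$. To establish the decay I would first verify that $\bX^{\theta_{\tau^{*}}}$ is irreducible and aperiodic: from any state the queue can be drained and $S$ driven down to $1$ with positive probability (all of $\mu(\cdot)$, the $\rho$'s and $\lambda$ lie strictly in $(0,1)$), so every state reaches the finite set $\{(s,\cA,0):s\in\mathbb{S}\}$; conversely $(1,\cA,0)$ reaches every state by accumulating arrivals and steering $S$ through busy epochs; and $(1,\cA,0)$ has a self‑loop (rest with no arrival). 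If $\bX^{\theta_{\tau^{*}}}$ were positive recurrent it would form a single positive recurrent aperiodic class with no transient states, hence be stable in the sense of Definition~\ref{def:Stability}, contradicting the hypothesis; therefore it is transient or null recurrent, so $\mathbb{P}(\bX_{j}=\bx\mid\bX_{0}=\bx_{0})\to 0$ for every $\bx\in\mathbb{X}$, and summing over the finitely many states with $q=0$ gives the claim.

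I expect the last step --- converting ``not stable'' in the Markov‑chain sense of Definition~\ref{def:Stability} into ``the queue leaves the empty‑queue set'' --- to be the main obstacle. Its substance is the irreducibility (and aperiodicity) of $\bX^{\theta_{\tau^{*}}}$, which is where the threshold form of $\theta_{\tau^{*}}$ and the strict positivity of all transition parameters are genuinely used, and where the reachability verification must be done with some care (including the boundary case $\tau^{*}=n_{s}+1$). Once irreducibility and aperiodicity are in hand, the equivalence ``positive recurrent $\Leftrightarrow$ stable'' is immediate, and the rest is the routine coupling bookkeeping sketched above, plus a minor point to justify the stop‑and‑switch construction at the stopping time $\sigma$.
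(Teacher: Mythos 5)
Your proposal is correct and follows essentially the same route as the paper: a uniform mixing estimate for the finite auxiliary chain $\overline{\bY}^{\phi_{\tau^*}}$ (the paper's Lemma~\ref{lemma:DistrConvg2}), a comparison of $\boldsymbol{\xi}^{\bx_0}_k$ with $\boldsymbol{\xi}^{\bx_0}_{k-N}\big(\overline{\bf P}^{*}\big)^N$ whose error is bounded by $2\sum_{j=k-N}^{k-1}\bP{Q_j=0 \mid \bX_0=\bx_0}$ (the paper's Lemma~\ref{lemma:DistrConvg3}), and the observation that instability together with irreducibility and aperiodicity of $\bX^{\theta_{\tau^*}}$ forces $\bP{Q_j=0 \mid \bX_0=\bx_0}\to 0$. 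The only real difference is technical rather than structural: you obtain the window-comparison bound via a stop-and-switch coupling at the first empty-queue time after $k-N$, whereas the paper derives the identical bound by iterating the exact one-step recursion for the server-state marginal; both are legitimate, and your remaining caveats (justifying the construction at the stopping time, and checking reachability including the case $\tau^*=n_s+1$) are exactly the points the paper also handles.
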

\begin{proof}
A proof is provided in 
Appendix~\ref{appen:DistrConvg1}.
\end{proof}
Proceeding with the proof of the theorem, for every
$k$ in $\N$, 
\beqa
&& \myhb \Pr\Big( D^{\theta_{\tau^*}}_k = 1 \ | \ \bX^{\theta_{\tau^*}}_0 = \bx_0\Big)
= \sum_{\bx \in \mathbb{X}} 
	\theta_{\tau^*}(\bx) \ \mu(s)
		\ \boldsymbol{\wp}_k^{\bx_0}(\bx) \lb
\myeq \sum_{\by \in \mathbb{Y}} 
	\phi_{\tau^*}(\by) \ \mu(s) 
		\Big( \boldsymbol{\xi}^{\bx_0}_k(\by) 
	- \boldsymbol{\wp}_k^{\bx_0}(\by, 0) \Big).
    \label{eq:Suff-00} 
\eeqa
First, since the system is assumed to be not 
stable, Lemma~\ref{lemma:DistrConvg1} tells us
$\lim_{k \to \infty} \boldsymbol{\xi}_k^{\bx_0} 
= \overline{\boldsymbol{\pi}}^{\tau^*}$ for all 
$\bx_0$ in $\mathbb{X}$. 
Second, one can argue that the MDP $\bX^{\theta_{\tau^*}}$
is irreducible because the state $(1, \cA, 0)$
communicates with all other states, and vice versa. 
In addition, $(1, \cA, 0)$ is aperidic because
the probability of transitioning from $(1, \cA, 0)$
to itself is positive. Therefore, all states are either
null recurrent or transient if the system is not stable.   
Since $| \mathbb{Y} |
= 2 n_s$ is finite, this means $\bPP{\theta_{\tau^*}}
{ Q_k = 0 \ | \ \bX_0 = \bx_0} = \sum_{\by \in 
\mathbb{Y}} \boldsymbol{\wp}_k^{\bx_0}(\by, 0)$ 
converges to zero as $k$ goes to $\infty$. 
Thus, for all $\bx_0 \in 
\mathbb{X}$, the probability in (\ref{eq:Suff-00}) 
converges to  
\beqa
\sum_{ \by \in \mathbb{Y} } 
	\phi_{\tau^*}(\by)
	\ \mu(s) \ \overline{\boldsymbol{\pi}}^{\tau^*}(\by)
	= \onu^* \ \mbox{ as } k \to \infty. 
	\label{eq:Suff-0} 
\eeqa
Making use of the reverse Fatou's lemma~\cite[pp. 86-87]{royden1968real}
and the convergence in (\ref{eq:Suff-0}), given any 
initial distribution of $\bX_0$, we obtain
\beqan
&& \hspace{-0.4in} 
\EX\Big[\limsup_{k \to \infty} \
	\frac{1}{k}\sum_{\tau=0}^{k-1} 
		D^{\theta_{\tau^*}}_k \Big] 
\geq \limsup_{k \to \infty} \ \EX\Big[\frac{1}{k}\sum_{\tau=0}^{k-1}
	D^{\theta_{\tau^*}}_k \Big] \lb
\myeq \limsup_{k \to \infty} \ \frac{1}{k}
	\sum_{\tau=0}^{k-1} \EX\big[ D^{\theta_{\tau^*}}_k \big] 
= \onu^*. 
\eeqan
This implies that, for $\delta \Eqdef 0.5 
(\onu^* - \lambda_1) > 0$, we must have 
\beqa
\Pr\Big( \limsup_{k \to \infty} \ \frac{1}{k}
	\sum_{\tau=0}^{k-1} D^{\theta_{\tau^*}}_k > \lambda_1 + \delta\Big)
	> 0. 
	\label{eq:Suff-1}
\eeqa

On the other hand, for all $k \in \N$, the number of 
completed tasks up to epoch $k$ cannot exceed the sum
of the initial queue size $Q_0$ and the number of 
arrivals up to epoch $k$. Thus,  
\beqan
\limsup_{k \to \infty} \ \frac{1}{k}
	\sum_{\tau=0}^{k-1} D^{\theta_{\tau^*}}_k
\myleq \limsup_{k \to \infty} \left( \frac{1}{k}
	\sum_{\tau=0}^{k-1} D^{\theta_{\tau^*}}_k + \frac{Q_0}{k} \right) \lb 
\myleq \lambda_1 \ \mbox{ with probability 1,} 
\eeqan
where the second inequality follows from the
strong law of large numbers.
Clearly, this contradicts the earlier inequality
in (\ref{eq:Suff-1}).

\vspace{-.1 in}
\section{Conclusion}
	\label{sec:Conclusion}
	
We investigated the problem of designing a task 
scheduler policy when the efficiency of the server
is allowed to depend on the past utilization, which is
modeled using an internal state of the server. 
First, we proposed a new framework for studying 
the stability of the queue length of the system. 
Second, making use of the new framework, 
we characterized the set of task arrival rates
for which there exists a stabilizing stationary
scheduler policy. Moreover, finding this set 
can done by solving a simple optimization problem 
over a finite set. Finally, we identified an 
optimal threshold policy that stabilizes the
system whenever the task arrival rate lies 
in the interior of the aforementioned set for
which there is a stabilizing policy.

\begin{appendices}

\section{Proof of Lemma \ref{lemma:UniquePMF}}
	\label{appen:UniquePMF}

We will prove the claim by contradiction. The
decomposition theorem of MCs tells us that
$\mathbb{X}$ can be partitioned into a set 
consisting of transient states and a collection 
of {\em irreducible}, {\em closed} RCCs 
$\{\mathbb{C}^1, 
\mathbb{C}^2, \ldots\}$
\cite[p. 224]{Grimmett2001Probability-and}. Since
$\bX^\theta$ is assumed stable, all RCCs 
$\mathbb{C}^m$, $m = 1, 
2, \ldots$, are positive recurrent. 
Suppose that the claim is false and there is
more than one PRCC. We demonstrate that this 
leads to a contradiction. 

First, we show that $\mathbb{C}^m$, $m = 1, 2, 
\ldots$, include a state $(s_m, \cB, q_m)$ for 
some $s_m \in \mathbb{S}$ and $q_m > 0$. If this 
is not true, every state in $\mathbb{C}^m$ is of 
the form $(s, \cA, q)$ and $\theta(s, \cA, q) = 
0$ because $\mathbb{C}^m$ is closed~\cite{Grimmett2001Probability-and}. 
But, this implies that, starting with any state 
in $\mathbb{C}^m$, the scheduler will never
assign a task to the server and, consequently, 
all states in $\mathbb{C}^m$ must be transient, 
which contradicts that $\mathbb{C}^m$ is positive
recurrent. For the same reason, each $\mathbb{C}_m$
must include a state $\tilde{\bx}_m = (\tilde{s}_m, 
\cA, \tilde{q}_m)$ with $\theta(\tilde{\bx}_m) > 0$, 
which implies that $\mathbb{C}_m$ is aperiodic. 

Second, if some state $(s, \cB, q)$ is in 
$\mathbb{C}^m$, $m = 1, 2, \ldots$, 
then so are all the states $(s', w, q')$ for
all $s' \geq s$, $w$ in $\mathbb{W}$, and $q'
\geq q$: the fact that $(s, \cB, q)$ communicates
with $(s', w, q)$, $s' \geq s$ and $w$ in 
$\mathbb{W}$, which means that these states 
belong to $\mathbb{C}^m$ as well, 
is obvious. In addition, 
it is evident that $(s', \cB, q)$ communicates
with $(s', \cB, q')$ for all $q' \geq q$. 
In order to see why $(s', \cA, q')$, $s' \geq s$ 
and $q' \geq q$, also lie in $\mathbb{C}^m$, 
consider the following two cases: if
$\theta(s', \cA, q) = 0$, then clearly
$(s', \cA, q)$ communicates with $(s', \cA,
q+1)$. On the other hand, if $\theta(s', \cA, q)
> 0$, $(s', \cA, q)$ communicates with 
$(s', \cB, q+1)$, which in turn communicates
with $(s', \cA, q+1)$. The claim now follows
by induction.

Note that, the above two observations together imply
that there exists finite $q^* \mydef \max\{q_1, 
q_2\}$ such that all states $(n_s, w, q)$, $w$ in 
$\mathbb{W}$ and $q \geq q^*$, belong to both 
$\mathbb{C}^1$ and $\mathbb{C}^2$. This, however,
contradicts the earlier assumption that 
$\mathbb{C}^1$ and $\mathbb{C}^2$ are {\em 
disjoint} PRCCs.

\vspace{-.1 in}
\section{Proof of Lemma~\ref{lemma:PhiThetaPos}}
	\label{appen:PhiThetaPos}

We start with two observations: first, recall from 
the decomposition 
theorem~\cite{Grimmett2001Probability-and} that 
$\mathbb{C}_\theta$ is closed.
Thus, once $\mathbf{X}^\theta$
reaches a state in $\mathbb{C}_\theta$, it
never leaves $\mathbb{C}_\theta$.
Second, if $(1, {\cal A}, q') =: {\bf x}' \in 
\mathbb{C}_\theta$
for some $q' \geq 1$, then $(1, {\cal A}, q)
\in \mathbb{C}_\theta$ for all $q \geq q'$.
This can be shown by considering two cases: (a)
$\theta({\bf x}') > 0$ and (b) $\theta({\bf x}') 
= 0$. In case (a), consider the following
event with positive probability: suppose
${\bf X}_k^\theta = {\bf x}'$ at epoch $k$. 
The server assigns a new task at epoch $k$, 
there is an arrival at each epoch $k, k+1, 
\ldots, k + q - q'$, and the server
completes the service of the task assigned at
epoch $k$
during epoch $k + q - q'$. When this
event occurs, we have ${\bf X}^\theta_{k + q 
- q' + 1} = (1, {\cal A}, 
q)$. In case (b), define
$q^* = \min\{ \tilde{q} > q' \ | \ \theta(1, 
{\cal A}, \tilde{q}) 
> 0 \}$ with the convention $q^* = \infty$ if
the set is empty. If $q^* \geq q$,
${\bf x}'$ communicates with $(1, {\cal A}, q)$. 
If $q^* < q$, ${\bf x}'$
communicates with $(1, {\cal A}, q^*)$, which 
in turn communicates with $(1, {\cal A}, q)$
by the same argument used in the first case. 
Since $\mathbb{C}_\theta$ is closed, it follows
$(1, {\cal A}, q) \in \mathbb{C}_\theta$ 
and, hence, $\pi^\theta(1, {\cal A}, q) > 0$
for all $q \geq q'$ \cite{Grimmett2001Probability-and}.

We now prove that there is
some $q^+ \geq 1$ such that $(1, {\cal A} , q^+)
\in \mathbb{C}_\theta$. Suppose that this is
false. Then, $\mathbb{C}_\theta \subseteq
\{(s, {\cal W}, q) \ | \ s \in \mathbb{S}, \
q \in \N\} \cup \{(1, {\cal A}, 0)\}$. 
This implies that the server does not
schedule new tasks for service at steady 
state with distribution 
$\boldsymbol{\pi}^\theta$ and obviously 
contradicts the 
assumption $\mathbf{X}^\theta$ is stable. 
Finally, $\sum_{q=q^+}^{\infty} \theta(1, \cA, 
q)$ must be strictly positive when $\theta$ is a 
stabilizing policy; otherwise, states 
$(1, {\cal A}, q)$, $q \geq q^+$, would be
transient instead. The lemma now follows from an 
earlier observation $\pi^\theta(1, {\cal A}, 
q) > 0$ for all $q \geq q^+$ and 
the definition of $\phi^\theta(1, {\cal A})$ in 
\eqref{QMapDef}.

\section{Proof of Lemma~\ref{lemma:PMFequivalence}}
	\label{appen:PMFequivalence}

For notational convenience, let $\phi = 
\mathscr{Y}(\theta)$. 	
Taking advantage of the fact that there is a unique 
stationary PMF of $\overline{\bY}^{\phi}$, 
it suffices to show that the distribution given in 
(\ref{eq:pi-relation}) satisfies the definition of 
stationary PMF:
\beqa
\opi^\phi(\oby)
\myeq \sum_{\oby' \in \mathbb{Y}} \opi^\phi(\oby') \ 
	\overline{\bf P}^\phi_{\oby', \oby} \ 
	\mbox{ for all } \oby \in \mathbb{Y}, 
	\label{eq:lemma1-1}
\eeqa
where $\overline{\bf P}^\phi$ denotes the one-step transition matrix of $\overline{\bY}^{\phi}$.

\noindent $\bullet$ 
{\bf Right-hand side of (\ref{eq:lemma1-1}):} 
Using the policy $\phi$ in place,
\beqa
&& \myhb \sum_{\oby' \in \mathbb{Y}} \opi^\phi(\oby') \ 
	\overline{\bf P}^\phi_{\oby', \oby}  \lb
\myeq \sum_{\oby' \in \mathbb{Y}} \opi^\phi(\oby')
	\big( \phi(\oby') \obP^{\mathcal{W}}_{\oby', \oby}
		+ (1 - \phi(\oby')) \obP^{\mathcal{R}}_{\oby', \oby} 
			\big) \lb 
\myeq \sum_{\oby' \in \mathbb{Y}} \opi^\phi(\oby')
	\Big( \phi(\oby') (\obP^{\mathcal{W}}_{\oby', \oby}
	- \obP^{\mathcal{R}}_{\oby', \oby} \big) 
	+ \obP^{\mathcal{R}}_{\oby', \oby} \Big), 
	\label{eq:lemma1-2}
\eeqa
where $\obP^{\mathcal{R}}$ (resp. $\obP^{\mathcal{W}}$)
denotes the one-step transition matrix of $\obY$
under a policy that always rests (resp. works on a 
new task) when available.

Substituting (\ref{QMapDef}) for $\phi(\oby')$ 
in (\ref{eq:lemma1-2}) and using the given expression
$\opi^\phi(\oby') = \sum_{q \in \mathbb{L}^{\oby'}}
\pi^\theta(\oby', q)$ in (\ref{eq:pi-relation}), 
we obtain
\beqa
&& \myhb (\ref{eq:lemma1-2})
= \sum_{\oby' \in \mathbb{Y}} \sum_{q' \in \mathbb{L}^{\oby'}}
	\theta(\oby', q') \pi^\theta(\oby', q') 
	\big( \obP^{\mathcal{W}}_{\oby', \oby}
	- \obP^{\mathcal{R}}_{\oby', \oby} \big) \lb 
&& \hspace{0.2in} 
	+ \sum_{\oby' \in \mathbb{Y}} 
		\sum_{q \in \mathbb{L}^{\oby'}} \pi^\theta(\oby', q)
	\obP^{\mathcal{R}}_{\oby', \oby} \lb 
\myeq \sum_{\bx' \in \mathbb{X}} \theta(\bx') 
	\pi^\theta(\bx') \big( \obP^{\mathcal{W}}_{\oby', \oby}
	- \obP^{\mathcal{R}}_{\oby', \oby} \big) 
	+ \sum_{\bx' \in \mathbb{X}} \pi^\theta(\bx') 
	\obP^{\mathcal{R}}_{\oby', \oby}. \lb
	\label{eq:lemma1-3}
\eeqa

\noindent $\bullet$
{\bf Left-hand side of (\ref{eq:lemma1-1}):} Using (\ref{eq:pi-relation}), we get
\beqa
\opi^\phi(\oby)
\myeq \sum_{q \in \mathbb{L}^{\oby}} \pi^\theta(\oby, q). 
	\label{eq:lemma1-4}
\eeqa
For notational simplicity, we denote $(\oby, q)$ on the RHS
of (\ref{eq:lemma1-4}) simply by $\bx$. 
Since $\bpi^\theta$ is the unique stationary PMF of 
$\bX^\theta$, we have 
\beqa
&& \myhb \pi^\theta(\oby, q) 
= \sum_{\bx' \in \mathbb{X}} \pi^\theta(\bx')
	{\bf P}^\theta_{\bx', \bx}
	\label{eq:lemma1-5} \\
\myeq \sum_{\bx' \in \mathbb{X}} \pi^\theta(\bx')
	\theta(\bx') \big( {\bf P}^{\mathcal{W}}_{\bx', \bx}
		- {\bf P}^{\mathcal{R}}_{\bx', \bx} \big) 
	+ \sum_{\bx' \in \mathbb{X}} \pi^\theta(\bx')
		{\bf P}^{\mathcal{R}}_{\bx', \bx},  
	\nonumber
\eeqa
where ${\bf P}^\theta$ is the 
one-step transition matrix of $\bX^\theta$, and 
${\bf P}^{\mathcal{R}}$ (resp. ${\bf P}^{\mathcal{W}}$)
is the one-step transition matrix under a policy 
that always rests (resp. assigns a new task) when 
the server is available and at least one task is 
waiting for service. 

Substituting (\ref{eq:lemma1-5}) in (\ref{eq:lemma1-4})
and rearranging the summations, we obtain 
\beqa
\opi^{\phi}(\oby)
\myeq \sum_{\bx' \in \mathbb{X}} \pi^\theta(\bx')
	\theta(\bx') \sum_{q \in \mathbb{L}^{\oby}}
	\big( {\bf P}^{\mathcal{W}}_{\bx', \bx}
		- {\bf P}^{\mathcal{R}}_{\bx', \bx} \big) \lb 
&& + \sum_{\bx' \in \mathbb{X}} \pi^\theta(\bx')
	\sum_{q \in \mathbb{L}^{\oby}} 
		{\bf P}^{\mathcal{R}}_{\bx', \bx}.
	\label{eq:lemma1-6}
\eeqa

From (\ref{eq:lemma1-3}) and (\ref{eq:lemma1-6}), 
in order to prove (\ref{eq:lemma1-1}), it suffices to 
show
$\obP^{a}_{\oby', \oby}
= \sum_{q \in \mathbb{L}^{\oby}} 
	{\bf P}^a_{(\oby', q'), (\oby, q)}, 
	\ a \in \mathbb{A}$. 
Note that
\beqa
\myhb \sum_{q \in \mathbb{L}^{\oby}} 
	{\bf P}^a_{(\oby', q'), (\oby, q)} 
\myeq P_{\bY_{k+1} | (\bY_k, Q_k), A_k}( \oby
	\ | \ (\oby', q'), a). 
	\label{eq:lemma1-7}
\eeqa
Clearly, conditional on $\{ (\bY_k, A_k) = (\oby', a) \}$, 
$\bY_{k+1}$ does not depend on the queue size
at epoch $k$. As a result, the RHS of (\ref{eq:lemma1-7})
does not depend on $q'$ and is equal to 
$P_{\bY_{k+1} | \bY_k, A_k}( \oby
	\ | \ \oby', a) = \obP^{a}_{\oby', \oby}$.

\section{Proof of Lemma~\ref{lemma:condition1}}
	\label{appen:condition1}
	
In order to prove the lemma, it suffices to show
the following: suppose that $\phi'$ is a deterministic 
policy with $\phi'(1, \cA) = 0$ and achieves 
$\onu^{**}$. 
Then, we can find another deterministic policy $\phi^*$
with $\phi^*(1, \cA) = 1$ which achieves $\onu^{**}$. 

Suppose $\phi'$ satisfies $\phi'(1, \cA) = 0$
and $\overline{\nu}^{\phi'} = \onu^{**}$. 
First, we can show that $\mathcal{T}(\phi') > 1$ by 
contradiction: assume $\mathcal{T}(\phi') = 0$. 
Then, from the discussion in 
Section~\ref{subsec:PMFobY}, 
we know that \{$(1, \cA)$\} is the unique PRCC of 
$\overline{\bY}^{\phi'}$
and, hence, 
$\overline{\nu}^{\phi'} = 0$, thereby contracting
the earlier assumption that $\overline{\nu}^{\phi'}
= \onu^{**} > 0$. 

Since $\mathcal{T}(\phi') > 1$, we know that there 
are two PRCCs
-- $\{(1, \cA)\}$ and $\mathbb{Y}^{\phi'}$ --
and all other states are transient.
But, the first PRCC does not contribute to the long-term service
rate and, as a result, the service rate 
$\overline{\nu}^{\phi'} = \onu^{**}$ 
is achieved by a 
stationary PMF only over $\mathbb{Y}^{\phi'}$, 
which we denote by $\breve{\bpi}^{\phi'}$. 

Consider a new deterministic policy $\phi^+$ with
\beqan
\phi^+(\os, \ow)
\myeq \begin{cases}
	1 & \mbox{if } 1 \leq \os \leq \mathcal{T}(\phi'),  \\
	\phi'(\os, \ow) & \mbox{otherwise.} 
	\end{cases}
\eeqan
By construction, clearly $\mathcal{T}(\phi') 
= \mathcal{T}(\phi^+)$ and, hence, 
$\mathbb{Y}^{\phi'} = \mathbb{Y}^{\phi^+}$. 
Because $\phi'(\os, \ow) = \phi^+(\os, \ow)$ 
if $\os \geq \mathcal{T}(\phi')$, it follows that
$\overline{\bpi}^{\phi^+} = \breve{\bpi}^{\phi'}$
and, as a result, $\overline{\nu}^{\phi'}
= \overline{\nu}^{\phi^+} = \onu^{**}$.

\section{Proof of Lemma~\ref{lemma:DistrConvg1}}
	\label{appen:DistrConvg1}
	
Denote the one-step transition matrix of 
$\overline{\bY}^{\phi_{\tau^*}}$ 
by $\overline{\bf P}^{*}$. 
We prove Lemma~\ref{lemma:DistrConvg1} 
with the help of the following
two lemmas. 

\begin{lemma}	\label{lemma:DistrConvg2} 
For every $\varepsilon > 0$, there exists finite
$T_1(\varepsilon)$ such 
that, for all $k \geq T_1(\varepsilon)$
and any distribution ${\bf p}$ 
over $\mathbb{Y}$, we have 
$\big\Vert {\bf p} \ 
	\big({\overline{\bf P}^{*}} \big)^k 
	- \overline{\bpi}^{\tau^*} 
		\big\Vert_1 < 0. 5 \varepsilon$.
\end{lemma}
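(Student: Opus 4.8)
The plan is to prove that the finite-state Markov chain $\overline{\bY}^{\phi_{\tau^*}}$ converges to its stationary PMF $\overline{\bpi}^{\tau^*}$ geometrically fast and \emph{uniformly} over the initial distribution, and then to take $T_1(\varepsilon)$ large enough that the resulting geometric bound is below $0.5\varepsilon$. The needed structural facts are already established: $\mathbb{Y}$ is finite, and the case analysis preceding Corollary~\ref{coro:pibar} shows that $\overline{\bY}^{\phi_{\tau^*}}$ has exactly one positive recurrent communicating class $\mathcal{C}$, with every state outside $\mathcal{C}$ transient and $\overline{\bpi}^{\tau^*}$ its unique stationary PMF. Moreover $\mathcal{C}$ is aperiodic: when $\tau^*>1$ --- the relevant case, since the threshold $\tau=n_s+1$ already makes $\overline{\nu}^{\phi_\tau}=\mu(n_s)>0$, so a maximizer $\tau^*$ of (\ref{eq:lambda*}) cannot be $1$ --- $\mathcal{C}=\mathbb{Y}^{\phi_{\tau^*}}$ contains the self-loop $(\mathcal{T}(\phi_{\tau^*}),\cA)\to(\mathcal{T}(\phi_{\tau^*}),\cA)$ of positive probability, as noted in {\bf F4}; and in the degenerate case $\tau^*=1$, $\mathcal{C}=\{(1,\cA)\}$ is absorbing, hence also aperiodic.

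First I would establish a Doeblin-type minorization: there exist an integer $m\ge 1$, a constant $\delta\in(0,1)$, and a reference state $\oby^{\star}\in\mathcal{C}$ such that $\big(\overline{\bf P}^{*}\big)^m_{\oby,\oby^{\star}}\ge\delta$ for every $\oby\in\mathbb{Y}$. To see this, observe that from any state there is a simple path into the closed set $\mathcal{C}$, of length at most $|\mathbb{Y}|$ and of positive probability; and $\mathcal{C}$ is finite, irreducible and aperiodic, hence primitive, so there is $m_0$ with every entry of $\big(\overline{\bf P}^{*}\big)^{\ell}$ restricted to $\mathcal{C}$ strictly positive for all $\ell\ge m_0$. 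Taking $m=|\mathbb{Y}|+m_0$ and letting $\delta$ be the minimum, over the finitely many initial states $\oby$, of the product of the path probability and the subsequent ($\ge m_0$-step) transition probability into $\oby^{\star}$, gives the minorization.

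From the minorization, writing $\big(\overline{\bf P}^{*}\big)^m=\delta\,\mathbf{1}\,\mathbf{e}_{\oby^{\star}}^{\top}+(1-\delta)R$ with $R$ stochastic yields the total-variation contraction $\big\Vert\boldsymbol{\beta}_1(\overline{\bf P}^{*})^m-\boldsymbol{\beta}_2(\overline{\bf P}^{*})^m\big\Vert_1\le(1-\delta)\big\Vert\boldsymbol{\beta}_1-\boldsymbol{\beta}_2\big\Vert_1$ for all distributions $\boldsymbol{\beta}_1,\boldsymbol{\beta}_2$ on $\mathbb{Y}$. Taking $\boldsymbol{\beta}_2=\overline{\bpi}^{\tau^*}$, which is a fixed point of $\overline{\bf P}^{*}$, and iterating gives $\big\Vert\boldsymbol{\beta}(\overline{\bf P}^{*})^{jm}-\overline{\bpi}^{\tau^*}\big\Vert_1\le 2(1-\delta)^{j}$ for all $j\in\N$ and all distributions $\boldsymbol{\beta}$; and since $\boldsymbol{\mu}\mapsto\boldsymbol{\mu}\,\overline{\bf P}^{*}$ is $\ell_1$-nonexpansive on signed measures, writing an arbitrary $k$ as $k=jm+r$ with $0\le r<m$ yields $\big\Vert\boldsymbol{\beta}(\overline{\bf P}^{*})^{k}-\overline{\bpi}^{\tau^*}\big\Vert_1\le 2(1-\delta)^{\lfloor k/m\rfloor}$. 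Choosing $T_1(\varepsilon)$ so that $2(1-\delta)^{\lfloor T_1(\varepsilon)/m\rfloor}<0.5\varepsilon$ completes the proof.

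The only genuinely nontrivial step is the minorization --- making the number of steps needed to reach a common reference state \emph{uniform} over all initial states --- and this is exactly where finiteness of $\mathbb{Y}$ and aperiodicity of $\mathcal{C}$ are used; aperiodicity is what rules out periodic behaviour under which $\big(\overline{\bf P}^{*}\big)^k$ would not converge at all. Everything downstream is routine bookkeeping. One could alternatively extract the same geometric rate from the Perron--Frobenius theorem applied to $\overline{\bf P}^{*}$ restricted to $\mathcal{C}$ together with the sub-stochastic block on the transient states, but the minorization argument is shorter and self-contained.
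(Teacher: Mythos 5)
Your proof is correct, but it takes a more self-contained and quantitative route than the paper's. The paper's argument is soft: it sets $\overrightarrow{{\bf P}}^{*} \Eqdef \lim_{k\to\infty}\big(\overline{\bf P}^{*}\big)^k$, notes that every row of this limit equals $\overline{\bpi}^{\tau^*}$ so that $\boldsymbol{\beta}\,\overrightarrow{{\bf P}}^{*}=\overline{\bpi}^{\tau^*}$ for every distribution $\boldsymbol{\beta}$, and then obtains uniformity in $\boldsymbol{\beta}$ from the single norm inequality $\big\Vert \boldsymbol{\beta}\big(\overline{\bf P}^{*}\big)^k-\boldsymbol{\beta}\,\overrightarrow{{\bf P}}^{*}\big\Vert_1 \le \Vert\boldsymbol{\beta}\Vert_1\,\big\Vert \big(\overline{\bf P}^{*}\big)^k-\overrightarrow{{\bf P}}^{*}\big\Vert_\infty$; the convergence of the matrix powers themselves is taken as a standard fact about finite chains with a unique aperiodic positive recurrent class (the structure guaranteed by {\bf F4} and the discussion preceding Corollary~\ref{coro:pibar}). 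You instead prove that convergence from scratch: you verify the structural facts explicitly (unique positive recurrent class, aperiodicity via the self-loop, and the nice observation that a maximizer $\tau^*$ cannot equal $1$ since $\tau=n_s+1$ already achieves $\mu(n_s)>0$), construct a Doeblin minorization that is uniform over all of $\mathbb{Y}$ including the transient states, and deduce the explicit bound $2(1-\delta)^{\lfloor k/m\rfloor}$, hence an explicit $T_1(\varepsilon)$. What your route buys is an elementary, self-contained argument with a concrete geometric rate (no appeal to the existence of the limit matrix); what the paper's route buys is brevity. Both proofs rest on the same two ingredients --- finiteness of $\mathbb{Y}$ and the aperiodic unichain structure of $\overline{\bY}^{\phi_{\tau^*}}$ --- and in both the uniformity over $\boldsymbol{\beta}$ comes from a matrix-level bound that does not depend on the initial distribution.
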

\begin{proof}
Let $\overrightarrow{{\bf P}}^{*} \Eqdef
\lim_{k \to \infty} 
\big(\overline{\bf P}^{*} \big)^k$, 
whose rows are equal to 
$\overline{\bpi}^{\tau^*}$. Then, given 
any distribution ${\bf p}$ over 
$\mathbb{Y}$, we have ${\bf p} 
\overrightarrow{{\bf P}}^{*} = 
\overline{\bpi}^{\tau^*}$. Using this 
equality, for all sufficiently large
$k$, we have 
\beqan
&& \myhb \left\Vert {\bf p} 
	\big( \overline{\bf P}^{*} \big)^k 
		- \overline{\bpi}^{\tau^*}  
			\right\Vert_1
= \left\Vert {\bf p} 
	\big( \overline{\bf P}^{*} \big)^k 
		- {\bf p} 
		\overrightarrow{{\bf P}}^{*}
			\right\Vert_1 \lb
\myleq \left\Vert {\bf p} \right\Vert_1  
	\cdot \left\Vert \big( \overline{\bf P}^{*} \big)^k 
		- \overrightarrow{{\bf P}}^{*}
			\right\Vert_{\infty}
= \left\Vert \big( \overline{\bf P}^{*} \big)^k 
		- \overrightarrow{{\bf P}}^{*}
			\right\Vert_{\infty} 
    < 0.5 \varepsilon. 
\eeqan
The last inequality follows from the 
definition of $\overrightarrow{{\bf P}}^{*}$. 
\end{proof}

\begin{lemma} 	\label{lemma:DistrConvg3}
Suppose that the system is not stable. Then, 
for all $\varepsilon>0$, positive 
integer $N$, and initial state 
$\bx_0 \in \mathbb{X}$, 
there exists finite $T_2(\varepsilon, 
N, \bx_0)$ such that, for all $k \geq N + 
T_2(\varepsilon, N, \bx_0)$, we have 
$\big\Vert 
	\boldsymbol{\xi}_k^{\bx_0} 
		- \boldsymbol{\xi}_{k-N}^{\bx_0}
		\big( \overline{\bf P}^{*} \big)^N  
			\big\Vert_1 
< 0.5 \varepsilon$.
\end{lemma}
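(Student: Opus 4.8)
The plan is to compare the two $\mathbb{Y}$-valued distributions via a coupling that keeps the server state $\bY_k$ of $\bX^{\theta_{\tau^*}}$ locked to a genuine $\overline{\bY}^{\phi_{\tau^*}}$-chain for as long as the queue stays non-empty, and then to control the probability that the queue empties during the relevant $N$ epochs. The point is that $\theta_{\tau^*}$ and $\phi_{\tau^*}$ differ in their effect on the server state only through the forced rest imposed when $Q_k=0$.

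\textbf{Step 1: the server-state transition matches $\overline{\bf P}^{*}$ off the empty-queue set.} First I would record that, for every $\overline{\by}\in\mathbb{Y}$ and every $q\ge 1$, $\bPP{\theta_{\tau^*}}{\bY_{j+1}=\overline{\by}'\mid\bX_j=(\overline{\by},q)}=\overline{\bf P}^{*}_{\overline{\by},\overline{\by}'}$ for all $\overline{\by}'\in\mathbb{Y}$. Indeed, when $q\ge1$ we have $\theta_{\tau^*}(\overline{\by},q)=\phi_{\tau^*}(\overline{\by})$ by (\ref{eq:ThetaDef}); the work-action case is exactly (\ref{equivalenceXandY}); and the rest-action case is immediate, since under $\mathscr{R}$ the availability component moves to $\cA$ with probability one in both MDPs and the $S$-component transition is shared through (\ref{SDef}). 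Crucially, this probability does not depend on the actual value of $q$ (only on $q\ge1$) nor on the history, because $\bX^{\theta_{\tau^*}}$ is Markov.

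\textbf{Step 2: coupling and the total-variation bound.} Fix $\bx_0$ and $N$ and take $k\ge N$. On the probability space carrying $\{\bX_j\}$ started from $\bX_0=\bx_0$ under $\theta_{\tau^*}$, I would build an auxiliary process $\{\widetilde\bY_j\}_{j=k-N}^{k}$ by setting $\widetilde\bY_{k-N}=\bY_{k-N}$ and, at each step $j\in\{k-N,\dots,k-1\}$: if $\widetilde\bY_j=\bY_j$ and $Q_j\ge1$, reuse the randomness that generated $\bY_{j+1}$ to set $\widetilde\bY_{j+1}=\bY_{j+1}$; otherwise draw $\widetilde\bY_{j+1}$ from $\overline{\bf P}^{*}(\widetilde\bY_j,\cdot)$ with fresh independent randomness. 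By Step 1, in either branch $\widetilde\bY_{j+1}\sim\overline{\bf P}^{*}(\widetilde\bY_j,\cdot)$ conditionally on the past, so $\{\widetilde\bY_j\}$ is a Markov chain with transition matrix $\overline{\bf P}^{*}$ started from $\bY_{k-N}\sim\boldsymbol{\xi}_{k-N}^{\bx_0}$; hence $\mathrm{Law}(\widetilde\bY_k)=\boldsymbol{\xi}_{k-N}^{\bx_0}\big(\overline{\bf P}^{*}\big)^{N}$. On the complement of the bad event $F_k\Eqdef\bigcup_{j=k-N}^{k-1}\{Q_j=0\}$ the coupling is never broken, so $\widetilde\bY_k=\bY_k$. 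Using $\big\Vert\mathrm{Law}(U)-\mathrm{Law}(V)\big\Vert_1\le 2\,\bP{U\ne V}$ and $\{\widetilde\bY_k\ne\bY_k\}\subseteq F_k$, this gives $\big\Vert\boldsymbol{\xi}_k^{\bx_0}-\boldsymbol{\xi}_{k-N}^{\bx_0}(\overline{\bf P}^{*})^{N}\big\Vert_1\le 2\,\bPP{\theta_{\tau^*}}{F_k\mid\bX_0=\bx_0}\le 2\sum_{j=k-N}^{k-1}\bPP{\theta_{\tau^*}}{Q_j=0\mid\bX_0=\bx_0}$.

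\textbf{Step 3 and the main obstacle.} It remains to note that $\bPP{\theta_{\tau^*}}{Q_j=0\mid\bX_0=\bx_0}=\sum_{s\in\mathbb{S}}\bPP{\theta_{\tau^*}}{\bX_j=(s,\cA,0)\mid\bX_0=\bx_0}\to 0$ as $j\to\infty$: this reuses the reasoning already given in the proof of Theorem~\ref{thm:sufficiency}, namely that $\bX^{\theta_{\tau^*}}$ is irreducible and aperiodic (since $(1,\cA,0)$ communicates with every state and with itself in one step), so when the system is not stable the chain is null recurrent or transient and every $n$-step transition probability vanishes; the sum is finite ($n_s$ terms), hence it vanishes too. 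Fixing $N$, given $\varepsilon>0$ I would choose $T_2(\varepsilon,N,\bx_0)$ so that $\bPP{\theta_{\tau^*}}{Q_j=0\mid\bX_0=\bx_0}<\varepsilon/(4N)$ for all $j\ge T_2$; then for $k\ge N+T_2$ every index in the sum is $\ge T_2$ and the displayed bound yields $\big\Vert\,\cdot\,\big\Vert_1<0.5\varepsilon$, as required. I expect the only delicate point to be the bookkeeping in Step 2 — verifying that ``reuse the randomness of $\bY_{j+1}$'' is legitimate, i.e.\ that conditionally on the full past and on $\{Q_j\ge1\}$ the law of $\bY_{j+1}$ really is $\overline{\bf P}^{*}(\bY_j,\cdot)$ with no hidden dependence on $q$ or on earlier arrivals and completions. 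That is precisely what (\ref{equivalenceXandY}) supplies, together with the fact that $\theta_{\tau^*}$ coincides with $\phi_{\tau^*}$ on $\{q\ge1\}$; the rest is a union bound and the standard Markov-chain fact already invoked for Theorem~\ref{thm:sufficiency}.
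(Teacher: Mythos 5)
Your proof is correct, and it reaches the paper's conclusion by a genuinely different technique. The paper argues at the level of marginal distributions: it writes the one-step evolution of $\boldsymbol{\xi}^{\bx_0}_{j+1}$ as $\boldsymbol{\xi}^{\bx_0}_{j}\,\overline{\bf P}^{*}$ plus a perturbation term $\boldsymbol{\gamma}^{\bx_0}_{j}=\sum_{s}\boldsymbol{\wp}^{\bx_0}_{j}(s,\cA,0)\big(\overline{\bf P}^{\mathscr{R}}_{(s,\cA)}-\overline{\bf P}^{*}_{(s,\cA)}\big)$ supported on the empty-queue states, iterates this identity $N$ times to get $\boldsymbol{\xi}^{\bx_0}_{k}=\boldsymbol{\xi}^{\bx_0}_{k-N}\big(\overline{\bf P}^{*}\big)^{N}+\sum_{\ell=1}^{N}\boldsymbol{\gamma}^{\bx_0}_{k-\ell}\big(\overline{\bf P}^{*}\big)^{\ell-1}$, and then bounds each $\Vert\boldsymbol{\gamma}^{\bx_0}_{k-\ell}\Vert_1$ by $2\sum_{s}\boldsymbol{\wp}^{\bx_0}_{k-\ell}(s,\cA,0)$. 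You instead argue pathwise, coupling the server-state process to a genuine $\overline{\bf P}^{*}$-chain that stays glued to $\bY_j$ until the queue first empties in the window, and invoking the coupling inequality plus a union bound. The two routes rest on exactly the same two facts (the server-state kernel under $\theta_{\tau^*}$ equals $\overline{\bf P}^{*}$ whenever $q\geq 1$, via (\ref{eq:ThetaDef}) and (\ref{equivalenceXandY}); and $\bPP{\theta_{\tau^*}}{Q_j=0\mid\bX_0=\bx_0}\to0$ in the unstable case), and they produce the identical quantitative bound $2\sum_{j=k-N}^{k-1}\bPP{\theta_{\tau^*}}{Q_j=0\mid\bX_0=\bx_0}$ — your factor $2$ from $\Vert\mathrm{Law}(U)-\mathrm{Law}(V)\Vert_1\le 2\,\bP{U\neq V}$ plays the role of the paper's bound $\Vert\overline{\bf P}^{\mathscr{R}}_{\by}-\overline{\bf P}^{*}_{\by}\Vert_1\le 2$, and your union bound replaces the sum over the perturbation terms. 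What the coupling buys is a shorter, more conceptual argument that avoids manipulating the distributional recursion and the matrix-norm bookkeeping; what the paper's computation buys is the explicit decomposition of $\boldsymbol{\xi}^{\bx_0}_{k}$, which makes the error term visible without constructing any auxiliary randomness. Your tail step (irreducibility and aperiodicity of $\bX^{\theta_{\tau^*}}$, hence null recurrence or transience and vanishing $n$-step probabilities, with only $n_s$ empty-queue states) is exactly the paper's, and the bookkeeping with $\varepsilon/(4N)$ in place of the paper's $\varepsilon/(4N|\mathbb{S}|)$ is consistent since you bound the already-summed probability $\bPP{\theta_{\tau^*}}{Q_j=0\mid\bX_0=\bx_0}$.
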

\begin{proof} 
Let $\overline{\bf P}^{\mathcal{R}}$ 
be the one-step
transition matrix of $\overline{\bY}$ 
under policy $\phi_1$ that always chooses 
$\mathcal{R}$ when the server is available.
We denote the row of $\overline{\bf P}^{*}$ 
(resp. $\overline{\bf P}^{\mathcal{R}}$) 
corresponding to the server state $\by 
= (s, w) \in 
\mathbb{Y}$ by $\overline{\bf P}^{*}_{\by}$
(resp. $\overline{\bf P}^{\mathcal{R}}_{\by}$).

By conditioning on $\bX_{k-N}$ and using the 
equality $\boldsymbol{\xi}^{\bx_0}_k(\by)
= \sum_{q \in \mathbb{L}^{\by}} 
\boldsymbol{\wp}^{\bx_0}_{k}(\by, q)$, we can rewrite
$\boldsymbol{\xi}^{\bx_0}_{k-N+1}$ as 
\beqa
&& \hspace{-0.52in} 
\boldsymbol{\xi}^{\bx_0}_{k-N+1} 
= \sum_{s \in \mathbb{S}} \Big[
	\boldsymbol{\wp}^{\bx_0}_{k-N}(s, \cA, 0)
			\ \overline{\bf P}^{\mathcal{R}}_{(s,\cA)} \lb
&& \hspace{0.5in}
	+ \sum_{w \in \mathbb{W}} \Big( \sum_{q = 1}^\infty 
	\boldsymbol{\wp}^{\bx_0}_{k-N}(s, w, q)  \Big)
			\overline{\bf P}^*_{(s,w)} \Big] \lb 
&& \hspace{-0.58in} = \sum_{s \in \mathbb{S}}
	\Big[ \boldsymbol{\wp}^{\bx_0}_{k-N}(s, \cA, 0) 
			\ \overline{\bf P}^{\mathcal{R}}_{(s, \cA)} 
	+\boldsymbol{\xi}^{\bx_0}_{k-N}(s, \cB) 
		\ \overline{\bf P}^*_{(s, \cB)} \lb
&& + \Big( \boldsymbol{\xi}^{\bx_0}_{k-N}(s, \cA)
	- \boldsymbol{\wp}^{\bx_0}_{k-N}(s, \cA, 0) \Big)
			\overline{\bf P}^*_{(s, \cA)} \Big] \lb
&& \hspace{-0.58in} =  \sum_{s \in \mathbb{S}} 
	\Big[ \boldsymbol{\wp}^{\bx_0}_{k-N}(s, \cA, 0) 
		\big( \overline{\bf P}^{\mathcal{R}}_{(s, \cA)} 
			- \overline{\bf P}^{*}_{(s, \cA)}  \big)
					\Big] 
+ \boldsymbol{\xi}^{\bx_0}_{k-N} \ 
	\overline{\bf P}^*.
		\label{eq:DC3-1}
\eeqa

Define 
$\boldsymbol{\gamma}^{\bx_0}_{\ell}
\Eqdef \sum_{s \in \mathbb{S}} 
	\big[ \boldsymbol{\wp}^{\bx_0}_{\ell}(s, \cA, 0) 
		\big( \overline{\bf P}^{\mathcal{R}}_{(s, \cA)} 
			- \overline{\bf P}^{*}_{(s, \cA)}  \big)
					\big]$, $\ell \in \N$. 
Applying (\ref{eq:DC3-1}) iteratively, we
obtain 
\beqa
\boldsymbol{\xi}^{\bx_0}_{k}
\myeq \boldsymbol{\xi}^{\bx_0}_{k-N} 
	\big( \overline{\bf P}^* \big)^N 
	+ \sum_{\ell=1}^{N} 
		\boldsymbol{\gamma}^{\bx_0}_{k-\ell}
		\big( \overline{\bf P}^* \big)^{\ell-1}.
	\label{eq:DC3-2}
\eeqa
Subtracting the first term on the RHS
of (\ref{eq:DC3-2}) from both sides and taking
the norm, 
\beqan
&& \myhb \left\Vert \boldsymbol{\xi}^{\bx_0}_{k}
	- \boldsymbol{\xi}^{\bx_0}_{k-N} 
		\big( \overline{\bf P}^* \big)^N  
	 		\right\Vert_1 
= \left\Vert \sum_{\ell=1}^{N} 
		\boldsymbol{\gamma}^{\bx_0}_{k-\ell}
		\big( \overline{\bf P}^* \big)^{\ell-1}
			\right\Vert_1 \lb
\myleq \sum_{\ell=1}^{N} \left\Vert
	\boldsymbol{\gamma}^{\bx_0}_{k-\ell}
		\right\Vert_1 \cdot
	\left\Vert 
		\big( \overline{\bf P}^* \big)^{\ell-1}
			\right\Vert_\infty 
= \sum_{\ell=1}^{N} \left\Vert
	\boldsymbol{\gamma}^{\bx_0}_{k-\ell}
		\right\Vert_1.
	\label{eq:DC3-3}
\eeqan
Substituting the expression for $\gamma^{\bx_0}_{k-\ell}$
and using the inequality 
$\norm{\overline{\bf P}^{\mathcal{R}}_{\by}
- \overline{\bf P}^*_{\by}}_1 \leq 2$ for all 
$\by \in \mathbb{Y}$, we get
\beqa
\sum_{\ell=1}^{N} \left\Vert
	\boldsymbol{\gamma}^{\bx_0}_{k-\ell}
		\right\Vert_1
\myleq 2 \sum_{\ell=1}^{N} \Big( \sum_{s \in \mathbb{S}}
	\boldsymbol{\wp}^{\bx_0}_{k-\ell}(s, \cA, 0) \Big).
	\label{eq:DC3-4}
\eeqa

Recall that if $\bX^{\theta_{\tau^*}}$ is not stable, 
all states are either null recurrent or transient. 
This implies that, for all $\bx_0 \in \mathbb{X}$, 
$\lim_{k \to \infty} \boldsymbol{\wp}^{\bx_0}_{k}(s, \cA, 0)
= 0$ for all $s \in \mathbb{S}$. Therefore, 
for all $\varepsilon > 0$, fixed $N$, and an initial
state $\bx_0$, there exists finite $T_2(\varepsilon, 
N, \bx_0)$ such that, for all $k \geq T_2(\varepsilon, 
N, \bx_0)$, we have $\boldsymbol{\wp}^{\bx_0}_{k}(s, \cA, 0)
< \frac{\varepsilon}{4 N | \mathbb{S} | }$. 
Consequently, the RHS of $(\ref{eq:DC3-4})$ is smaller 
than $0.5 \varepsilon$ for all $k \geq T_2(\varepsilon,
N, \bx_0) + N$. 
\end{proof}

We now proceed with the proof of Lemma
\ref{lemma:DistrConvg1}. 
\beqa
&& \myhb \left\Vert \boldsymbol{\xi}_k^{\bx_0} 
	- \overline{\bpi}^{\tau^*} 
		\right \Vert_1 
	\label{eq:DC1-1} \\
\myleq \left\Vert \boldsymbol{\xi}_k^{\bx_0} 
	- \boldsymbol{\xi}^{\bx_0}_{k-N}
	\big( \overline{\bf P}^* \big)^N
		\right \Vert_1 
	+ \left\Vert \boldsymbol{\xi}^{\bx_0}_{k-N}
	\big( \overline{\bf P}^* \big)^N
	- \overline{\bpi}^{\tau^*} 
		\right \Vert_1 
	\nonumber
\eeqa
Since Lemma~\ref{lemma:DistrConvg2} holds for 
any distribution ${\bf p}$, if we choose
${\bf p} = \boldsymbol{\xi}^{\bx_0}_{k-N}$, 
for all $N \geq T_1(\varepsilon)$, the second term in 
(\ref{eq:DC1-1}) is smaller than $0.5 \varepsilon$. 
In addition, Lemma~\ref{lemma:DistrConvg3} tells us that 
we can find $T_2(\varepsilon, N, \bx_0)$ such that, for all 
$k \geq N + T_2(\varepsilon, N, \bx_0)$, the first term in 
(\ref{eq:DC1-1}) is upper bounded by $0.5 \varepsilon$. 
Thus, it is clear that Lemma~\ref{lemma:DistrConvg1} 
holds with $T(\varepsilon, \bx_0) = 
T_1(\varepsilon) +T_2(\varepsilon, T_1(\varepsilon), \bx_0)$.

\end{appendices}

\bibliographystyle{ieeetr}

\bibliography{MartinsRefs}

\begin{thebibliography}{10}

\bibitem{Shepherd1965Design-of-prima}
C.~M. Shepherd, ``Design of primary and secondary cells ii. an equation
  describing battery discharge,'' {\em Journal of the Electrochemical Society},
  vol.~112, no.~7, pp.~657--664, 1965.

\bibitem{Chen2006Accurate-electr}
M.~Chen and G.~A. Ric\'{o}n-Mora, ``Accurate electrical battery model capable
  of predicting runtime and i-v performance,'' {\em IEEE Transactions on Energy
  Conversion}, vol.~21, pp.~504--511, June 2006.

\bibitem{BaccelliBremaud}
F.~Baccelli and P.~Bremaud, {\em Elements of Queueing Theory: Palm Martingale
  Calculus and Sotchastic Recurrences}.
\newblock Springer, 2nd~ed., 2002.

\bibitem{El-Taha99}
M.~El-Taha and S.~J. Stidham, {\em Analysis of Queueing Systems}.
\newblock Springer Science+Business Media, LLC, 1999.

\bibitem{MeynTweedie}
S.~Meyn and R.~L. Tweedie, {\em Markov Chains and Stochastic Stability}.
\newblock Cambridge University Press, 2nd~ed., 2009.

\bibitem{Dai20}
J.~G. Dai and J.~M. Harrison, {\em Processing Networks: Fluid Models and
  Stability}.
\newblock Cambridge University Press, 2020.

\bibitem{ConwayMaxwell62}
R.~W. Conway and W.~L. Maxwell, ``A queueing model with state dependent service
  rates,'' {\em Journal of Industrial Engineering}, vol.~12, pp.~132--136,
  1962.

\bibitem{Jackson63}
J.~R. Jackson, ``Jobshop-like queueing systems,'' {\em Management Science},
  vol.~10, no.~1, pp.~131--142, 1963.

\bibitem{Yadin63}
M.~Yadin and P.~Naor, ``Queueing systems with a removable service station,''
  {\em Operational Research Society}, vol.~14, pp.~393--405, December 1963.

\bibitem{Gupta67}
S.~Gupta {\em Journal of the Operations Research Society of Japan}, vol.~9,
  pp.~69--82, April 1967.

\bibitem{Harris67}
C.~M. Harris, ``Queues with state-dependent stochastic service rates,'' {\em
  Operations Research}, vol.~15, pp.~117--130, February 1967.

\bibitem{Dshalalow}
J.~H. Dshalalow, ``Queueing systems with state dependent parameters,'' in {\em
  Frontiers in Queueing: Models and Applications in Science and Engineering,
  Probability and Stochastics Series} (J.~H. Dshalalow, ed.), ch.~4,
  pp.~132--136, CRC, 1997.

\bibitem{Agrawal02}
R.~Agrawal and V.~G. Subramanian, ``Optimality of certain channel aware
  scheduling policies,'' in {\em Proceedings of Allerton Conference on
  Communication, Control and Computing}, October 2002.

\bibitem{Andrews04}
M.~Andrews, K.~Kumaran, K.~Ramanan, A.~Stolyar, R.~Vijayakumar, and P.~Whiting,
  ``Scheduling in a queueing system with asynchronously varying service
  rates,'' {\em Probability in the Engineering and Informational Sciences},
  vol.~18, pp.~191--217, 2004.

\bibitem{Borst05}
S.~C. Borst, ``User-level performance of channel-aware scheduling algorithms in
  wireless data networks,'' {\em IEEE/ACM Transactions on Networking}, vol.~13,
  pp.~636--647, June 2005.

\bibitem{Ren2004Optimal-transmi}
T.~Ren, R.~J. La, and L.~Tassiulas, ``Optimal transmission scheduling with base
  station antenna array in cellular networks,'' in {\em Proceedings of IEEE
  INFOCOM}, March 2004.

\bibitem{Tassiulas1992Stability-prope}
L.~Tassiulas and A.~Ephremides, ``Stability properties of constrained queueing
  systems and scheduling policies for maximum throughput in multihop radio
  networks,'' {\em IEEE Transactions on Automatic Control}, vol.~37,
  pp.~1936--1949, December 1992.

\bibitem{Ramjee96}
R.~Ramjee, R.~Nagarajan, and D.~Towsley, ``On optimal call admission control in
  cellular networks,'' in {\em Proceedings of IEEE INFOCOM}, pp.~24--28, August
  1996.

\bibitem{Fang02}
Y.~Fang and Y.~Zhang, ``Call admission control schemes and performance analysis
  in wireless mobile networks,'' {\em IEEE Transactions on Vehicular
  Technology}, vol.~51, pp.~371--382, March 2002.

\bibitem{Bekker2006Optimal-Admissi}
R.~Bekker and S.~C. Borst, ``Optimal admission control in queues with
  workload-dependent service rates,'' {\em Probability in the Engineering and
  Informational Sciences}, vol.~20, pp.~543--570, 2006.

\bibitem{Borghini14}
G.~Borghini, L.~Astolfi, G.~Vecchiato, D.~Mattia, and F.~Babiloni, ``Measuring
  neurophysiological signals in aircraft pilots and car drivers for the
  assessment of mental workload, fatigue and drowsiness,'' {\em Neuroscience \&
  Biobehavioral Reviews}, vol.~44, pp.~58--75, July 2014.

\bibitem{Edie54}
L.~C. Edie, ``Traffic delays at toll booths,'' {\em Journal of the Operations
  Research Society of America}, vol.~2, pp.~107--138, May 1954.

\bibitem{Shunko17}
M.~Shunko, J.~Niederhoff, and Y.~Rosokha, ``Humans are not machines: the
  behavioral impact of queueing design on service time,'' {\em Management
  Science}, vol.~64, pp.~57--80, December 2017.

\bibitem{YerkesDodson}
R.~M. Yerkes and J.~D. Dodson, ``The relation of strength of stimulus to
  rapidity of habit-formation,'' {\em Journal of Comparative Neurology and
  Psychology}, vol.~18, pp.~459--482, November 1908.

\bibitem{Supervisory}
J.~R. Peters, V.~Srivastava, G.~S. Taylor, A.~Surana, M.~P. Eckstein, and
  F.~Bullo, ``Human supervisory control of robotic teams: Integrating cognitive
  modeling with engineering design,'' {\em IEEE Control Systems Magazine},
  vol.~35, pp.~57--80, December 2015.

\bibitem{Sheridan1997Handbook-of-hum}
T.~S. Sheridan, {\em Handbook of human factors and ergonomics, second edition,
  edited by Gavriel Salvendy}, ch.~Supervisory Control, pp.~1295--1327.
\newblock John Wiley \& Sons, 1997.

\bibitem{Asaro07}
P.~V. Asaro, L.~M. Lewis, and S.~B. Boxerman, ``The impact of input and output
  factors on emergency department throughput,'' {\em Adademic Emergency
  Medicine}, vol.~14, pp.~235--242, April 2007.

\bibitem{KcTerwiesch09}
D.~S. Kc and C.~Terwiesch, ``Impact of workload on service time and patient
  safety: an economic analysis of hospital operations,'' {\em Management
  Science}, vol.~55, pp.~1486--1498, September 2009.

\bibitem{Savla2010Maximally-stabi}
K.~Savla and E.~Frazzoli, ``Maximally stabilizing task release control policy
  for a dynamical queue,'' {\em IEEE Transactions on Automatic Control},
  vol.~55, pp.~2655--2660, November 2010.

\bibitem{Savla2012A-Dynamical-Que}
K.~Savla and E.~Frazzoli, ``A dynamical queue approach to intelligent task
  management for human operators,'' {\em Proceedings of the IEEE}, vol.~100,
  pp.~672--686, March 2012.

\bibitem{Baccelli94}
F.~Baccelli and S.~Foss, ``Ergodicity of {J}ackson-type queueing networks,''
  {\em Queueing Systems}, vol.~17, pp.~5--72, 1994.

\bibitem{lindley1952theory}
D.~V. Lindley, ``The theory of queues with a single server,'' {\em Mathematical
  Proceedings of the Cambridge Philosophical Society}, vol.~48, no.~2,
  pp.~277--289, 1952.

\bibitem{Sudevalayam2011Energy-Harvesti}
S.~Sudevalayam and P.~Kulkarni, ``Energy harvesting sensor nodes: Survey and
  implications,'' {\em IEEE Communications Surveys and Tutorials}, vol.~13,
  no.~3, pp.~443--461, 2011.

\bibitem{Kansal2007Power-Managemen}
A.~Kansal, J.~Hsu, S.~Zahedi, and M.~B. Srivastava, ``Power management in
  energy harvesting sensor networks,'' {\em ACM Transactions on Embeded
  Computing Systems}, vol.~6, September 2007.

\bibitem{Priya2009Energy-Harvesti}
S.~Priya and D.~J. Inman, eds., {\em Energy Harvesting Technologies}.
\newblock Springer, 2009.

\bibitem{Ulukus2015Energy-harvesti}
S.~Ulukus, A.~Yener, E.~Erkip, O.~Simeone, M.~Zorzi, P.~Grover, and K.~Huang,
  ``Energy harvesting wireless communications: a review of recent advances,''
  {\em IEEE Journal on Selected Areas in Communications}, vol.~33,
  pp.~360--381, March 2015.

\bibitem{Leong2018Optimal-control}
A.~S. Leong, D.~E. Quevedo, and S.~Dey, {\em Optimal control of energy
  resources for state estimation over wirless channels}.
\newblock Briefs in eletrical and computer engineering, Springer, 2018.

\bibitem{Jog2019Channels-learni}
V.~Jog, R.~J. La, M.~Lin, and N.~C. Martins, ``Channels, remote estimation and
  queueing systems with a utilization-dependent component: A unifying survey of
  recent results,'' {\em arXiv:1905.04362}, 2021.

\bibitem{Koch2009Channels-that-h}
T.~Koch, A.~Lapidoth, and P.~P. Sotiriadis, ``Channels that heat up,'' {\em
  IEEE Transactions on Information Theory}, vol.~55, pp.~3594--3612, August
  2009.

\bibitem{Baknina2018aEnergy-harvesti}
A.~Baknina, O.~Ozel, and S.~Ulukus, ``Energy harvesting communications under
  explicit and implicit temperature constraints,'' {\em IEEE Transactions on
  Wireless Communcations}, vol.~17, pp.~6680--6692, October 2018.

\bibitem{Forte2013Thermal-aware-s}
D.~Forte and A.~Srivastava, ``Thermal-aware sensor scheduling for distributed
  estimation,'' {\em ACM Transactions on Sensor Networks}, vol.~9,
  pp.~53:1--53:31, July 2013.

\bibitem{Puterman2005Markov-decision}
M.~L. Puterman, {\em Markov decision processes: discrete stochastic dynamic
  programming}.
\newblock Wiley, 2005.

\bibitem{Grimmett2001Probability-and}
G.~Grimmett and D.~Stirzaker, {\em Probability and Random Processes, third ed.}
\newblock Oxford, 2001.

\bibitem{royden1968real}
H.~L. Royden, {\em Real analysis}.
\newblock Krishna Prakashan Media, 1968.

\end{thebibliography}

\end{document}